\documentclass[11pt,a4paper]{article}
\usepackage[utf8]{inputenc}
\usepackage{authblk}
\usepackage[top=25truemm,bottom=25truemm,left=25truemm,right=25truemm,footskip=10truemm]{geometry}
\usepackage{booktabs}
\usepackage{amsmath}
\usepackage{amsthm}
\usepackage{float}
\usepackage{amssymb}
\usepackage{amsfonts}
\usepackage{ascmac}
\usepackage{scalefnt}
\usepackage{here}
\usepackage{indentfirst}
\usepackage{algorithmic}
\usepackage{algorithm}
\usepackage{setspace}
\usepackage{color}
\usepackage[pdftex]{graphicx}
\usepackage{subcaption}
\usepackage{lscape}
\usepackage{natbib}
\usepackage{url}
\usepackage{placeins}
\usepackage{tikz}
\usepackage{multirow}
\DeclareMathOperator*{\argmin}{arg\,min}

\newtheorem{theorem}{Theorem}[section]
\newtheorem{proposition}[theorem]{Proposition}
\newtheorem{corollary}[theorem]{Corollary}
\newtheorem{lemma}[theorem]{Lemma}
\newtheorem{assumption}{Assumption}[section]
\newtheorem{definition}{Definition}[section]

\title{
Zero-Inflated Logistic Regression Models with Shared Design: Identifiability, Existence of Estimates, and a Relabeling Rule
}

\author[1,*]{Yui Tomo}
\author[2]{Shinto Eguchi}
\author[1]{Daisuke Yoneoka}
\affil[1]{Department of Epidemiology, National Institute of Infectious Diseases, Japan Institute for Health Security, 1-23-1 Toyama, Shinjuku-ku, Tokyo 162-0052, Japan}
\affil[2]{The Institute of Statistical Mathematics, 10-3 Midori-cho, Tachikawa, Tokyo 190-8562, Japan}
\affil[*]{E-mail: tomo.y@jihs.go.jp}
\date{}

\begin{document}

\maketitle

\begin{abstract}
The zero-inflated logistic regression model accommodates binary responses with excess zeros, which often arise from a latent mixture of susceptible and insusceptible subpopulations or asymmetric misclassification of the response.
The model has two components: regression for the binary response and a latent binary indicator for the zero-inflation state.
In applied settings, it is common to use the same design matrix for both components if there is no prior knowledge.
However, this shared-design specification lacks guaranteed identifiability of the regression parameters, as established in prior works.
This paper investigates the theoretical properties of the zero-inflated logistic regression model under the shared-design setting and computational methods for applications.
First, to motivate the use of the zero-inflated model, we prove that ignoring the zero-inflation mechanism can lead to a sign flip in the pseudo-true coefficient value relative to the true value.
We then establish sufficient conditions for the existence of the maximum likelihood estimate.
As a main result, we establish that the model under the shared-design setting is identifiable up to exchange symmetry of the parameters for two components and that the expected log-likelihood has a unique maximizer on the resulting quotient space.
The posterior bimodality is examined using a P\'olya-Gamma Gibbs sampler with replica exchange.
Finally, we propose a simple relabeling rule to select a single ordered parameter pair, and evaluate its performance through simulation studies and an application to self-reported diabetes data.
\end{abstract}

\noindent
\textbf{Keywords}: Asymmetric misclassification; Data separation; Exchange symmetry; P\'olya-Gamma augmentation; Replica exchange.

\section{Introduction}

The logistic regression model is a standard tool for binary outcomes and remains attractive due to its simplicity and interpretability.
However, in many applied settings, the observed response contains more zeros than a standard logistic model can accommodate.
For example, such excess zeros may arise from a latent mixture of susceptible and insusceptible subpopulations, such as biological immunity, or one-sided outcome misclassification, such as the failure to record an event due to delayed reporting.
To accommodate these situations, a natural extension is a zero-inflated logistic regression model \citep{hall2000zero,diop2011maximum}.
This model expresses the observed response as a mixture of a standard logistic regression and a structural zero, where the latent binary indicator for the zero-inflation state is itself modeled via a logistic regression.
We refer to these two regression components as the ordinary logistic regression component and the structural-zero component, respectively.
This formulation captures complex data-generating mechanisms while retaining interpretability.

Earlier work includes the three-parameter logistic model in psychometrics and related methods in ecology and epidemiology \citep{wainer2007testlet,komori2016asymmetric,nagelkerke2015estimating}.
Furthermore, robust estimation approaches based on label-noise modeling or robust divergence have also been studied \citep{bootkrajang2012label,bootkrajang2013classification,hung2018robust,fujisawa2008robust}.
From the perspective of excess zeros, these settings are closely connected: positive outcomes that are systematically recorded as zeros induce a structural-zero mechanism in the observed data.
From the theoretical perspective, \cite{diop2011maximum} established sufficient conditions for identifiability of the zero-inflated logistic regression model.
One such condition requires at least one continuous covariate that appears in one component but not in the other.

Although the zero-inflated logistic model has been explored practically and theoretically, there remain several gaps in the understanding of its theoretical properties.
In particular, this study focuses on the practically important scenario where no reliable prior information is available regarding which covariates should enter both components.
A common empirical choice is then to use the same design matrix in both components of the model.
In this case, the model faces a theoretical difficulty: once the two components share the same covariates, the likelihood becomes invariant under exchange of the two coefficient vectors.
The model is therefore not identifiable as an ordered pair, and both optimization and posterior simulation may exhibit symmetric multiple modes.
This symmetry is analogous to the label-switching phenomenon in mixture models \citep{fruhwirth2006finite}.
Specifically, once the same covariates are used in both components, exchanging the two coefficient vectors does not change the likelihood values.
However, such non-identifiability has yet to be characterized.

Another theoretical issue concerns the existence of the maximum likelihood estimate.
For ordinary logistic regression, non-existence of the estimate under separation is well-known \citep{Albert1984,Silvapulle1981}.
For the zero-inflated logistic model, however, the analogous conditions have received less attention.

Our contributions are fourfold.
First, we prove that model misspecification can lead to a sign flip in the pseudo-true parameter relative to the true coefficient.
Second, we introduce a double-separation condition for the zero-inflated logistic model and derive sufficient conditions for the existence of the estimates.
Third, we establish that the model under a shared design matrix is identifiable up to exchange symmetry and that, under mild regularity conditions, the expected log-likelihood has a unique maximizer on the resulting quotient space.
Furthermore, we investigate the resulting multimodality numerically through a P\'olya-Gamma Gibbs sampler integrated with replica exchange \citep{polson2013bayesian, swendsen1986replica}.
Fourth, we propose a relabeling rule based on an estimate from the ordinary logistic regression model and conduct a numerical study.

The remainder of the paper is structured as follows.
Section~\ref{sec:zilr_model} introduces the model, studies the sign-flip phenomenon under misspecification, and presents sufficient conditions for the existence and non-existence of the estimates.
Section~\ref{sec:model_shared_design} studies identifiability under the shared design setting, formalizes the exchange-symmetry structure, and provides a numerical illustration of the resulting posterior bimodality.
Section~\ref{sec:relabeling} proposes a relabeling rule and conducts a simulation study.
Section~\ref{sec:application} presents an illustrative application to NHANES self-reported diabetes data.
Section~\ref{sec:discussion} discusses limitations and future work. Section~\ref{sec:conclusion} concludes the paper.

\section{Zero-inflated Logistic Regression Model}
\label{sec:zilr_model}

In this section, we define the zero-inflated logistic regression model and study two basic aspects of the model before turning to the shared-design case.
First, we examine the consequence of misspecification when zero inflation is ignored and a standard logistic regression model is fitted instead.
Second, we consider the existence of maximum likelihood estimates and introduce separation-type conditions that provide sufficient criteria for non-existence and existence.

\subsection{Model Definition}

Let $y_{i} \in \{0,1\}$ ($i = 1,\ldots,n$) denote the binary responses for $n$ observations, and let $x_i=\left(1,x_{i,1}, \ldots, x_{i, d-1}\right)^{\top} \in \mathbb{R}^d$ denote the covariate vector.
Let $\beta=\left(\beta_0, \ldots, \beta_{d-1}\right)^{\top} \in \mathbb{R}^d$ be the corresponding parameter vector.
We assume that a latent variable $h_i \in \{0,1\}$ determines whether the response is constrained to be zero, where $h_i = 0$ indicates the zero state.
Let $z_i=\left(1,z_{i,1}, \ldots, z_{i,p-1}\right)^{\top} \in \mathbb{R}^p$ denote the covariate vector for $h_i$, and let $\gamma=\left(\gamma_0, \ldots, \gamma_{p-1}\right)^{\top} \in \mathbb{R}^p$ denote the corresponding parameter vector.
Let $F(\cdot)$ be the inverse logit function: 
\begin{align*}
    F(\mu) := {\exp(\mu)}/{\left\{1+\exp(\mu)\right\}}
    ,
    \quad
    \text{for}
    \quad
    \mu \in \mathbb{R}
    .
\end{align*}

Then, the zero-inflated logistic regression model is defined as
\begin{align*}
    p(y_i \mid x_i, z_i, \beta, \gamma) 
    =
    q(h_i = 1 \mid z_i, \gamma) \cdot p(y_i \mid x_i, \beta) + q(h_i = 0 \mid z_i, \gamma) \cdot \mathrm{I}(y_i = 0)
    ,
\end{align*}
where $p(y_i = 1 \mid x_i, \beta)=F(\beta^{\top} x_i)$ and $q(h_i = 1 \mid z_i, \gamma)=F(\gamma^{\top} z_i)$.
We refer to the logistic regression components $p(y_i \mid x_i, \beta) = F(\beta^{\top} x_i)^{y_i} (1-F(\beta^{\top} x_i))^{1-y_i}$ and $q(h_i \mid z_i, \gamma) = F(\gamma^{\top} z_i)^{h_i} (1-F(\gamma^{\top} z_i))^{1-h_i}$ as the ordinary logistic regression component and the structural-zero component, respectively.
In this context, $q(h_i = 1 \mid z_i, \gamma)$ represents the probability that the $i$-th observation is not a structural zero, corresponding to the probability of susceptibility to the event.
Then, we have
\begin{align*}
    p(y_i \mid x_i, z_i, \beta, \gamma)
    = 
    \left\{F(\gamma^{\top} z_i) F(\beta^{\top} x_i) \right\}^{y_i}
    \left\{1-F(\gamma^{\top} z_i) F(\beta^{\top} x_i)\right\}^{1 - y_i}
    .
\end{align*}
Therefore, the log-likelihood function is defined as
\begin{align*}
    L(\beta, \gamma)
    =
    \sum_{i=1}^n\left[y_i \log \left\{F(\gamma^{\top} z_i) F(\beta^{\top} x_i)\right\}+(1-y_i) \log \left\{1-F(\gamma^{\top} z_i) F(\beta^{\top} x_i)\right\}\right]
    .
\end{align*}

\subsection{Sign-Flip Phenomenon Under Misspecification}

\label{subsec:misspec}

To motivate the use of the zero-inflated logistic regression model, we examine the consequences of ignoring zero-inflation and fitting a standard logistic regression model to data generated from the zero-inflated model.
When the standard logistic regression model is applied to data with excess zeros, the resulting estimator can be severely biased.
In particular, when a covariate is positively associated with the response $y_i$ but negatively associated with the latent indicator $h_i$ (or vice versa), the fitted logistic regression model may estimate a regression coefficient whose sign is opposite to the underlying true value.
In this subsection, we formalize this phenomenon under a random-design setting.

Let $x \in \mathbb{R}^{d}$ and $z \in \mathbb{R}^{p}$ denote covariate vectors in the ordinary logistic regression component and the structural-zero component, respectively, and suppose that they share a common covariate indexed by $j$:
\begin{align*}
    x
    &= (1,\tilde{x}_j,\tilde{x}_{-j}^{\top})^{\top}
    ,
    \\
    z
    &= (1,\tilde{z}_j,\tilde{z}_{-j}^{\top})^{\top}
    ,
\end{align*}
where $\tilde{x}_{-j} \in \mathbb{R}^{d-2}$, $\tilde{z}_{-j}\in \mathbb{R}^{p-2}$, and the $j$th covariate is shared between the two vectors: $\tilde{x}_j = \tilde{z}_j$.
The corresponding regression coefficients are decomposed as
\begin{align*}
    \beta
    &=
    (\beta_0,\beta_j,\beta_{-j}^{\top})^{\top}
    ,
    \\
    \gamma
    &=
    (\gamma_0,\gamma_j,\gamma_{-j}^{\top})^{\top}
    .
\end{align*}
Without loss of generality, we focus on a covariate that has a positive effect on $y$ but a negative effect on $h$, and let
\begin{align*}
    \beta_j
    &=:
    a > 0,
    \\
    \gamma_j
    &=:
    c < 0
    .
\end{align*}
Under the zero-inflated logistic regression model, the conditional event occurrence probability can be written as
\begin{align*}
    \pi_c(x,z)
    :=
    F(\beta_0+a \tilde{x}_j+\beta_{-j}^{\top} \tilde{x}_{-j})
    F(\gamma_0+c \tilde{x}_j+\gamma_{-j}^{\top} \tilde{z}_{-j})
    .
\end{align*}

We then consider the ordinary logistic regression model where the conditional event occurrence probability is specified as $F(\theta_0 + t \tilde{x}_j + \theta_{-j}^{\top} \tilde{x}_{-j})$.
Here, $(\theta_0,t,\theta_{-j})\in\mathbb{R} \times \mathbb{R} \times \mathbb{R}^{d-2}$ is the regression coefficient vector.
Under the true zero-inflated model, the expected log-likelihood function of this misspecified logistic regression is given by
\begin{align*}
    &\mathcal{L}_c(\theta_0,t,\theta_{-j})
    \\
    &\quad
    :=
    \mathbb{E}_{(x,z)}\left[
    \pi_c(x,z)\,\log F(\theta_0 + t \tilde{x}_j + \theta_{-j}^{\top} \tilde{x}_{-j})
    + \{1-\pi_c(x,z)\}\,\log\{1-F(\theta_0 + t \tilde{x}_j + \theta_{-j}^{\top} \tilde{x}_{-j})\}
    \right]
    ,
\end{align*}
where the expectation is taken with respect to the joint distribution of $(x,z)$.

We then define the profile objective function
\begin{align*}
    g_c(t)
    &:=
    \sup_{(\theta_0,\theta_{-j}) \in \mathbb{R} \times \mathbb{R}^{d-2}}
      \mathcal{L}_c(\theta_0,t,\theta_{-j})
    ,
\end{align*}
and let $t^{*}(c) \in \arg\max_{t \in \mathbb{R}} g_c(t)$.
The following theorem shows that, when the magnitude of the negative association in the zero-inflation component is sufficiently large, every pseudo-true value of the $j$th coefficient is negative.

\begin{theorem}
\label{thm:sign-flip}
Suppose that Assumption~\ref{ass:basic} in
Appendix~\ref{apx:proof_misspec} holds.
Then there exists a constant $C_0<0$ such that
\begin{align*}
    c \leq C_0
    \quad
    \Longrightarrow
    \quad
    \arg\max_{t \in \mathbb{R}} g_c(t) \subset (-\infty,0)
    .
\end{align*}
\end{theorem}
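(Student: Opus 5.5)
The plan is to compare the profile objective at nonnegative $t$ with its value at a fixed negative $t$, and show the latter strictly wins once $c$ is very negative. The approach is a limiting argument as $c\to-\infty$. By dominated convergence, $\pi_c(x,z)$ converges pointwise to a limit $\pi_{-\infty}(x,z)$. Since $\mathcal{L}_c$ is the expected log-likelihood of a logistic model with ``response probability'' $\pi_c$, I would write
\begin{align*}
\mathcal{L}_c(\theta) = \mathcal{L}^{\mathrm{ent}}_c + \mathbb{E}\bigl[\pi_c \log F(\theta^\top x) + (1-\pi_c)\log\{1-F(\theta^\top x)\}\bigr],
\end{align*}
and work directly with the cross-entropy term. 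The limit depends on the support of $\tilde{x}_j$ that Assumption~\ref{ass:basic} presumably fixes. If $\tilde{x}_j>0$ a.s., then $\pi_c\to 0$. The more informative case, which I expect the assumption to arrange (e.g.\ $\tilde x_j$ takes negative values with positive probability and has bounded support or a moment condition), is that the event rate concentrates on the region where $\tilde{x}_j<0$. In either case the limiting ``regression function'' is decreasing in $\tilde{x}_j$.

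The key step is to find a constant $\delta>0$ and a fixed $t_-<0$ such that, for all sufficiently negative $c$,
\begin{align*}
\sup_{t\ge 0} g_c(t) \le \sup_{\theta_0,\theta_{-j}} \mathcal{L}_c(\theta_0,t_-,\theta_{-j}) - \delta .
\end{align*}
I would prove this via the first-order (score) characterization rather than the values. By concavity of $\mathcal{L}_c$ in $(\theta_0,t,\theta_{-j})$, $g_c$ is concave in $t$, since a partial supremum of a jointly concave function is concave. Hence it suffices to show that the right derivative of $g_c$ at $t=0$ is strictly negative. By the envelope theorem, applied at the profile maximizer $(\hat\theta_0(0),\hat\theta_{-j}(0))$, that derivative equals
\begin{align*}
\mathbb{E}\bigl[\{\pi_c(x,z) - F(\hat\theta_0+\hat\theta_{-j}^\top\tilde x_{-j})\}\tilde{x}_j\bigr].
\end{align*}
Existence and uniqueness of that maximizer should follow from the assumption, via nondegenerate covariates and $0<\mathbb{E}\pi_c<1$. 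The profile fit at $t=0$ satisfies the normal equations $\mathbb{E}[(\pi_c - \hat F)(1,\tilde x_{-j})]=0$. So the derivative is a residual covariance between $\pi_c$ and the part of $\tilde x_j$ not explained linearly by $(1,\tilde x_{-j})$ through the fitted weights. Concavity then gives $\arg\max g_c\subset(-\infty,0)$ directly once this derivative is negative. This also handles the case where the argmax set is not a singleton.

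The main obstacle is showing that this residual covariance becomes negative for $c\le C_0$, uniformly in the profiled nuisance fit. First, the factor $F(\gamma_0+c\tilde x_j+\cdots)$ makes $\pi_c$ strongly decreasing in $\tilde x_j$ as $c\to-\infty$; the increase from $F(\beta_0+a\tilde x_j+\cdots)$ has fixed slope $a$. Second, one must control the nuisance fit. Here I would use continuity of the profile maximizer in $c$, together with the limit of the scaled covariance. If $\pi_c\to0$ everywhere, I would normalize by $\mathbb{E}\pi_c$ and use a Laplace-type concentration argument: after rescaling, the covariance tends to $\mathbb{E}[\tilde x_j\mid\text{tilted}] - \mathbb{E}[\tilde x_j\mid\ldots]$, and that limit is negative. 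I would try first the cleaner route suggested by the assumption, presumably that $\tilde x_j$ is independent of $\tilde x_{-j},\tilde z_{-j}$ with a nondegenerate bounded distribution. Under that route $\hat F$ does not depend on $\tilde x_j$, and the derivative reduces to $\mathbb{E}[\pi_c\,(\tilde x_j-\mathbb{E}\tilde x_j)]$. Its sign follows from a Chebyshev-type covariance inequality, once $c\,\tilde x_j$ dominates $a\,\tilde x_j$, i.e.\ $\pi_c$ is decreasing in $\tilde x_j$ whenever $c<-a$ up to bounded perturbations. Because those bounded perturbations need a strict margin, $C_0$ would be chosen accordingly.
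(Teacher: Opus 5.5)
Your skeleton matches the paper's. The profile $g_c$ is concave as a partial supremum of the jointly concave $\mathcal{L}_c$. The envelope formula gives $g_c'(0)=\mathbb{E}[\{\pi_c(x,z)-F(\hat\theta_0+\hat\theta_{-j}^{\top}\tilde x_{-j})\}\tilde x_j]$. Finally, $g_c'(0)<0$ together with concavity puts every maximizer in $(-\infty,0)$.

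The gap is in the one substantive step: showing $g_c'(0)<0$ for all $c\le C_0$. You leave this open across several candidate routes, and the route you commit to does not work. Under your presumed hypothesis (independence of $\tilde x_j$ from the other covariates, bounded support) you correctly get $g_c'(0)=\mathrm{Cov}(\pi_c,\tilde x_j)$. However, the Chebyshev step needs $\pi_c$ to be decreasing in $\tilde x_j$, and that fails precisely in the regime $c\to-\infty$. For fixed $\tilde x_j<0$,
\begin{align*}
\frac{\partial}{\partial\tilde x_j}\log\pi_c=a\{1-F(\beta^{\top}x)\}+c\{1-F(\gamma^{\top}z)\}\longrightarrow a\{1-F(\beta^{\top}x)\}>0,
\end{align*}
because $1-F(\gamma^{\top}z)$ decays like $e^{-|c|\,|\tilde x_j|}$. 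The limit $F(\beta^{\top}x)\mathbf{1}\{\tilde x_j<0\}$ is therefore increasing on the negative half-line.

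Worse, without centering the conclusion itself is false. Take no other covariates, $\beta_0=\gamma_0=0$, $a=1$, and $\tilde x_j\in\{-100,-0.1,0.1\}$ with probabilities $0.5,0.4,0.1$. Then $\mathrm{Cov}(\pi_c,\tilde x_j)\to\mathrm{Cov}(F(\tilde x_j)\mathbf{1}\{\tilde x_j<0\},\tilde x_j)\approx 9.5>0$. So the pseudo-true slope is positive for all sufficiently negative $c$, even though the design is bounded, independence holds vacuously, and (A1), (A2), (A4) are satisfied.

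The missing idea is the orthogonality condition (A3) of Assumption~\ref{ass:basic}, $\mathbb{E}[\tilde x_j\mid\tilde x_{-j},\tilde z_{-j}]=0$. The fitted value at $t=0$ depends on $x$ only through $\tilde x_{-j}$, so (A3) gives $\mathbb{E}[F(\hat\theta_0+\hat\theta_{-j}^{\top}\tilde x_{-j})\tilde x_j]=0$ exactly. Hence $g_c'(0)=f(c):=\mathbb{E}[\pi_c(x,z)\tilde x_j]$, with no nuisance fit left to control. You then need no continuity of the profile maximizer in $c$, no Laplace rescaling, and no monotonicity of $\pi_c$.

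Next, apply dominated convergence: $|\pi_c\tilde x_j|\le|\tilde x_j|$, which is integrable by (A1). Combined with the pointwise limit $F(\gamma^{\top}z)\to\mathbf{1}\{\tilde x_j<0\}$ on $\{\tilde x_j\neq 0\}$, this gives $f(c)\to\mathbb{E}[F(\beta^{\top}x)\tilde x_j\mathbf{1}\{\tilde x_j<0\}]$. This limit is strictly negative by (A2), simply because the integrand is $\le 0$ everywhere and $<0$ on $\{\tilde x_j<0\}$. A negative limit yields $C_0$ with $f(c)<0$ for all $c\le C_0$; the paper's additional fact $f'(c)>0$ is not even needed. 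Your concavity argument then finishes the proof.
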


See Appendix~\ref{apx:proof_misspec} for the proof.
This theorem implies that, although the true coefficient satisfies $\beta_j=a>0$, every pseudo-true value of the same coefficient under the misspecified conventional logistic regression model is negative when the $|\gamma_j| = |c|$ is sufficiently large in the opposite sign direction.
Specifically, any choice $t^{*}(c)\in \arg\max_{t \in \mathbb{R}} g_c(t)$ satisfies $t^{*}(c)<0$.

\subsection{Maximum Likelihood Estimation}

The maximum likelihood estimator (MLE) $(\hat{\beta}^{\top}, \hat{\gamma}^{\top})^{\top}  \in \mathbb{R}^{d + p}$ is defined as any maximizers of $L(\beta, \gamma)$.
However, due to the structure of the zero-inflated logistic model, the existence of an estimate is not always guaranteed.
To investigate this issue, we introduce the concept of double separation, an analogue of the separation condition for the standard logistic regression.

\begin{definition}[Double separation]
The dataset $\{(y_i,x_i,z_i)\}_{i=1}^n$ is said to satisfy double separation if there exist non-zero vectors 
${v}\in\mathbb{R}^{d}$ and ${w}\in\mathbb{R}^{p}$ such that for every $i=1,\dots,n$, 
\begin{align*}
    \begin{cases}
        v^{\top}x_i \;\geq 0,\quad w^{\top}z_i \;\geq 0 & \text{if } y_i=1
        ,
        \\
        v^{\top}x_i \;\leq 0,\quad w^{\top}z_i \;\leq 0 & \text{if } y_i=0
        ,
    \end{cases}
\end{align*}
and either of the inequalities is strict for at least one observation: there exists $j\in\{1,\dots,n\}$ such that either $y_{j}=1$ with $(v^{\top}x_{j},w^{\top}z_{j})\neq(0,0)$ or $y_{j}=0$ with $(v^{\top}x_{j},w^{\top}z_{j})\neq(0,0)$.
\end{definition}

\begin{proposition}
\label{prop:MLEnotexist}
If the dataset satisfies double separation, then the log-likelihood $L(\beta,\gamma)$ has no maximizer in $\mathbb{R}^{d}\times\mathbb{R}^{p}$.
\end{proposition}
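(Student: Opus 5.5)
The plan is to show directly that no point can be a maximizer: from any candidate $(\beta,\gamma)\in\mathbb{R}^d\times\mathbb{R}^p$, moving along the separating direction $(v,w)$ given by double separation strictly increases the log-likelihood. Fix an arbitrary $(\beta,\gamma)$ and set $\phi(s):=L(\beta+sv,\gamma+sw)$ for $s\ge 0$. I will show that $\phi(s)>\phi(0)$ for every $s>0$. Then $(\beta,\gamma)$ is not a maximizer, and since it was arbitrary, $L$ has no maximizer. Note that $L\le 0$, so the supremum is finite; the point is only that it is never attained.

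The first step is to study each summand of $L$ along the ray. For an observation with $y_i=1$ the contribution is $\log F(\gamma^{\top}z_i+s\,w^{\top}z_i)+\log F(\beta^{\top}x_i+s\,v^{\top}x_i)$. Because $\log F$ is strictly increasing and $v^{\top}x_i\ge 0$, $w^{\top}z_i\ge 0$, this contribution is nondecreasing in $s$.

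For an observation with $y_i=0$ the contribution is $\log\{1-F(\gamma^{\top}z_i+s\,w^{\top}z_i)F(\beta^{\top}x_i+s\,v^{\top}x_i)\}$. The map $(a,b)\mapsto \log\{1-F(a)F(b)\}$ is strictly decreasing in each argument, because $F$ is strictly increasing with values in $(0,1)$. Since $v^{\top}x_i\le 0$ and $w^{\top}z_i\le 0$, this contribution is also nondecreasing in $s$. Summing over $i$, $\phi$ is nondecreasing.

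The second step, and the only delicate one, is strictness. Double separation supplies an index $j$ with $(v^{\top}x_j,w^{\top}z_j)\neq(0,0)$. At least one of the two linear predictors for observation $j$ therefore moves strictly in the favourable direction, while the other moves weakly in the same direction.

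If $y_j=1$, the term $\log F(\cdot)$ whose argument moves strictly increases strictly. If $y_j=0$, the product $F(\gamma^{\top}z_j+s\,w^{\top}z_j)F(\beta^{\top}x_j+s\,v^{\top}x_j)$ strictly decreases for $s>0$. This uses that the factor left fixed or weakly decreasing stays strictly positive, since $F>0$ everywhere, so a strict decrease of one factor gives a strict decrease of the product. Hence the $j$th summand satisfies a strict inequality for every $s>0$. Combined with the weak monotonicity of all other summands, this gives $\phi(s)>\phi(0)$ for all $s>0$.

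No step should be a real obstacle. The only care needed is the strictness argument in the mixed case where exactly one of $v^{\top}x_j$, $w^{\top}z_j$ vanishes, and positivity of $F$ handles it. As a by-remark, letting $s\to\infty$ shows that $L$ increases toward its supremum along the ray, mirroring the classical separation argument of \cite{Albert1984}.
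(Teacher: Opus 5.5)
Your proof is correct and takes essentially the same route as the paper. Both fix an arbitrary $(\beta,\gamma)$ and move along the separating direction $(v,w)$. Both then show that every observation's contribution is nondecreasing by the sign conditions, and strictly increasing for the index supplied by the strictness clause. The only difference is that the paper computes $\ell_{\beta,\gamma}'(t)$ and shows it is positive, whereas you use termwise monotonicity of $\log F$ and of $\log\{1-F(a)F(b)\}$ directly, which avoids differentiation.
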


See Appendix~\ref{apx:proof_mle_notexist} for the proof.

Furthermore, we introduce the following condition to guarantee the existence of a maximum likelihood estimate.

\begin{definition}[$\varepsilon$--Double--non-separation]
The dataset $\{(y_i,x_i,z_i)\}_{i=1}^{n}$ is said to satisfy {$\varepsilon$--double--non-separation} if there exists a constant $\varepsilon>0$ such that 
\begin{align*}
    \inf_{(v,w):\|v\|^{2}+\|w\|^{2}=1}
    \max\left\{
    -\min_{i\in \{i:y_i=1\}}\{v^{\top}x_i+w^{\top}z_i\},
    ~
    \max_{i\in \{i:y_i=0\}}\min\{v^{\top}x_i,~w^{\top}z_i\}
    \right\}
    \geq \varepsilon
    .
\end{align*}
\end{definition}

We then establish the following proposition, which provides a sufficient condition for the existence of an estimate.

\begin{proposition}
\label{prop:MLEexist}
If the dataset satisfies $\varepsilon$--double--non-separation for some $\varepsilon>0$, then a maximizer of $L(\beta,\gamma)$ exists.
\end{proposition}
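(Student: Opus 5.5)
The plan is to show that $L$ is coercive: $L(\beta,\gamma)\to-\infty$ uniformly as $\|\beta\|^2+\|\gamma\|^2\to\infty$. Existence of a maximizer then follows from continuity and compactness of a superlevel set. Two basic facts are used throughout. First, $L$ is continuous on $\mathbb{R}^d\times\mathbb{R}^p$. Second, every summand of $L$ is nonpositive, because $0<F(\gamma^\top z_i)F(\beta^\top x_i)<1$. So to bound $L$ from above it suffices to bound a single well-chosen summand.

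Fix $(\beta,\gamma)$ with $r:=(\|\beta\|^2+\|\gamma\|^2)^{1/2}>0$, and write $(\beta,\gamma)=r(v,w)$ with $\|v\|^2+\|w\|^2=1$. By $\varepsilon$--double--non-separation, the maximum in the definition is at least $\varepsilon$ for this particular $(v,w)$. Since it is an infimum over the whole sphere, the same $\varepsilon$ works for every direction, and this is what makes the bound uniform. We split into two cases according to which term attains the maximum.

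\emph{Case 1:} some $i$ with $y_i=1$ has $v^\top x_i+w^\top z_i\le-\varepsilon$. Put $a=r\,w^\top z_i$ and $b=r\,v^\top x_i$, so $a+b\le -r\varepsilon$ and hence $\min(a,b)\le -r\varepsilon/2$. Using $\log F(u)=-\log(1+e^{-u})\le 0$, the $i$th summand satisfies
\begin{align*}
\log F(a)+\log F(b)\le -\log\bigl(1+e^{r\varepsilon/2}\bigr)\le -r\varepsilon/2 .
\end{align*}

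\emph{Case 2:} some $i$ with $y_i=0$ has $\min\{v^\top x_i,w^\top z_i\}\ge\varepsilon$. Then both $F(\beta^\top x_i)$ and $F(\gamma^\top z_i)$ are at least $F(r\varepsilon)$, because $F$ is increasing. The $i$th summand therefore satisfies
\begin{align*}
\log\{1-F(\gamma^\top z_i)F(\beta^\top x_i)\}\le \log\{1-F(r\varepsilon)^2\},
\end{align*}
and the right-hand side tends to $-\infty$ as $r\to\infty$.

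Combining the two cases and dropping the remaining nonpositive summands gives
\begin{align*}
L(\beta,\gamma)\le \phi(r):=\max\bigl\{-r\varepsilon/2,\ \log\{1-F(r\varepsilon)^2\}\bigr\}.
\end{align*}
The bound $\phi(r)$ depends only on $r$ and tends to $-\infty$. Now take the reference value $L(0,0)=n\log(3/4)$ (each observation then has event probability $1/4$). Choose $R$ such that $\phi(r)<n\log(3/4)$ for all $r>R$. The superlevel set $\{(\beta,\gamma):L(\beta,\gamma)\ge L(0,0)\}$ is then contained in the closed ball of radius $R$. It is closed by continuity of $L$ and nonempty because it contains the origin. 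Hence it is compact, and $L$ attains its maximum on it; that point is a global maximizer.

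The main obstacle is the uniformity of the bound over directions, since coercivity along each ray separately would not suffice. This is handled because the definition takes the infimum over the unit sphere, so a single $\varepsilon$ serves every $(v,w)$. The only other step needing care is Case 1, where a bound on the sum $a+b$ has to be converted into a bound on one of the two log-terms; the inequality $\min(a,b)\le (a+b)/2$ does this.
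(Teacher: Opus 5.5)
Your proof follows the same route as the paper's. You bound a single summand uniformly over unit directions, which the infimum in the definition provides, conclude $L\to-\infty$ as $\|(\beta,\gamma)\|\to\infty$, and then use compactness. Your two case bounds are correct. They differ only cosmetically from the paper's, which uses $F(u)\le e^{u}$ and $1-F(u_1)F(u_2)\le e^{-u_1}+e^{-u_2}$ to get the single bound $L(tv,tw)\le\log 2-\varepsilon t$ in both cases.

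There is one slip in the last step. At the origin the event probability is $1/4$, so the $i$th summand is $\log(1/4)$ when $y_i=1$ and $\log(3/4)$ when $y_i=0$. Hence $L(0,0)=n_1\log(1/4)+n_0\log(3/4)$, where $n_1$ and $n_0$ count the ones and zeros, not $n\log(3/4)$. This value is strictly below $n\log(3/4)$ once $n_1\ge 1$. The hypothesis forces $n_1\ge 1$: with $v=0$, the second term in the definition is at most $0$.

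So choosing $R$ with $\phi(r)<n\log(3/4)$ for $r>R$ does not rule out points outside the ball with $L(0,0)\le L(\beta,\gamma)<n\log(3/4)$. The inclusion of the superlevel set in the ball therefore does not follow as written. The repair is immediate: choose $R$ with $\phi(r)<L(0,0)$ for all $r>R$. This is possible because $L(0,0)$ is finite and $\phi(r)\to-\infty$; its exact value plays no role. With that change your argument is complete.
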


See Appendix~\ref{apx:proof_mle_exist} for the proof. 

The $\varepsilon$--double--non-separation condition implies that for any unit direction $(v,w)$, there exists at least one observation located at a signed distance of at least $\varepsilon$ from the separating hyperplane associated with that direction.
In other words, the dataset maintains a uniform margin of width $\varepsilon$ that prevents double separation.

Even when $\varepsilon$--double--non-separation holds with $\varepsilon$ close to zero, the surface of the log-likelihood function may be nearly flat along certain directions, and numerical optimization may be unstable in practice.
In such settings, penalized estimation approaches may be useful.

A closely related phenomenon in this section arises in ordinary logistic regression, where data separation leads to the non-existence of the maximum likelihood estimate \citep{Albert1984,Silvapulle1981}.

\section{Shared-Design Model}
\label{sec:model_shared_design}

We now focus on the case $p=d$ and $z_i=x_i$ for all $i=1,\ldots,n$.
This is the configuration that arises when the analyst has no prior information with which to distinguish covariates for the ordinary logistic regression component $x_i$ from covariates for the structural-zero component $z_i$ and therefore uses the same design matrix in both components of the model.
In that case, the conditional probability of $y_i$ is expressed as
\begin{align*}
    p(y_i\mid x_i,\beta,\gamma)
    =
    \left\{F(\gamma^\top x_i)F(\beta^\top x_i)\right\}^{y_i}
    \left\{1-F(\gamma^\top x_i)F(\beta^\top x_i)\right\}^{1-y_i}
    ,
\end{align*}
for $i=1,\ldots,n$.
We refer to this model as a shared-design model.
The log-likelihood function is
\begin{align*}
    L(\beta, \gamma)
    =
    \sum_{i=1}^n\left[y_i \log \left\{F(\gamma^{\top} x_i) F(\beta^{\top} x_i)\right\}+(1-y_i) \log \left\{1-F(\gamma^{\top} x_i) F(\beta^{\top} x_i)\right\}\right]
    ,
\end{align*}
which satisfies the exchange symmetry
\begin{align*}
    L(\beta,\gamma)
    =
    L(\gamma,\beta)
    .
\end{align*}
Motivated by this symmetry, we define the equivalence relation
\begin{align*}
    (\beta_1,\gamma_1)\sim(\beta_2,\gamma_2)
    \quad\Longleftrightarrow\quad
    (\beta_1,\gamma_1)=(\beta_2,\gamma_2)
    \quad \text{or}\quad
    (\beta_1,\gamma_1)=(\gamma_2,\beta_2),
\end{align*}
and define $[\beta,\gamma]$ for the corresponding equivalence class.

\subsection{Prior Work on Identifiability}

\cite{diop2011maximum} established sufficient conditions for identifiability of the zero-inflated logistic regression model.
A key condition in their argument is the availability of a continuous covariate that appears in one component but not the other.
When the same covariates are used in both components, this source of asymmetry disappears.
Therefore, the shared-design setting is a boundary case in which the ordinary guarantee of identifiability fails.

\subsection{Identifiability under Shared Design}

We formulate identifiability of the shared design model.
Let $x=(1,\tilde{x}_{-0}^\top)^\top$ where $\tilde{x}_{-0} \in \mathbb{R}^{d-1}$.
We first consider the following support condition.

\begin{itemize}
    \item[(C1)] The support of $\tilde{x}_{-0}$ contains a nonempty open subset $\mathcal{U}\subset\mathbb{R}^{d-1}$.
\end{itemize}

Under condition {(C1)}, we obtain the following basic identifiability result.

\begin{proposition}
\label{prop:two_mode_c1}
Suppose that condition {(C1)} holds.
Let $\beta_1=(\beta_{1,0},\beta_{1,-0}^\top)^\top$ and $\gamma_1=(\gamma_{1,0},\gamma_{1,-0}^\top)^\top$, and suppose that $\beta_{1,-0}$, $\gamma_{1,-0}$, $\beta_{1,-0}+\gamma_{1,-0}$ are pairwise distinct.
Suppose that for two parameter pairs $(\beta_1,\gamma_1)$ and $(\beta_2,\gamma_2)$,
\begin{align}
    \label{eq:equiv_prob}
    p\left(y \mid x, \beta_1, \gamma_1\right) = p\left(y \mid x, \beta_2, \gamma_2\right)
    ,
\end{align}
holds for all $y \in \{0,1\}$ and all $x=(1,\tilde{x}_{-0}^\top)^\top$ for $\tilde{x}_{-0} \in \mathcal{U}$.
Then, we have either $(\beta_1, \gamma_1) = (\beta_2, \gamma_2)$ or $(\beta_1, \gamma_1) = (\gamma_2, \beta_2)$.
\end{proposition}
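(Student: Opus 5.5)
Since $y$ is binary, \eqref{eq:equiv_prob} reduces to
\begin{align*}
F(\beta_1^\top x)F(\gamma_1^\top x)=F(\beta_2^\top x)F(\gamma_2^\top x)
\end{align*}
for all $\tilde{x}_{-0}\in\mathcal{U}$. I will turn this into an identity of finite exponential sums and compare exponents. Write $F(\mu)=e^{\mu}/(1+e^{\mu})$ and clear denominators. This gives
\begin{align*}
e^{(\beta_1+\gamma_1)^\top x}\bigl(1+e^{\beta_2^\top x}\bigr)\bigl(1+e^{\gamma_2^\top x}\bigr)
=
e^{(\beta_2+\gamma_2)^\top x}\bigl(1+e^{\beta_1^\top x}\bigr)\bigl(1+e^{\gamma_1^\top x}\bigr).
\end{align*}
Equivalently, the reciprocal form $\{1+e^{-\beta_1^\top x}\}\{1+e^{-\gamma_1^\top x}\}=\{1+e^{-\beta_2^\top x}\}\{1+e^{-\gamma_2^\top x}\}$ yields
\begin{align*}
e^{-\beta_1^\top x}+e^{-\gamma_1^\top x}+e^{-(\beta_1+\gamma_1)^\top x}
=
e^{-\beta_2^\top x}+e^{-\gamma_2^\top x}+e^{-(\beta_2+\gamma_2)^\top x}
\end{align*}
on $\mathcal{U}$. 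This is the cleaner form, and I will work with it. Each term has the form $e^{-a_0}e^{-a_{-0}^\top \tilde{x}_{-0}}$.

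The key tool is linear independence of exponentials with distinct frequency vectors. The functions $\tilde{x}\mapsto e^{\lambda^\top\tilde{x}}$ for pairwise distinct $\lambda\in\mathbb{R}^{d-1}$ are linearly independent on any nonempty open set. This holds because they are real-analytic, so a vanishing linear combination on $\mathcal{U}$ vanishes on all of $\mathbb{R}^{d-1}$. Restricting to a generic line $\tilde{x}=s u$, where $u$ separates the finitely many $\lambda$'s (i.e.\ the values $\lambda^\top u$ are distinct), reduces this to the one-dimensional fact that the $e^{\mu_k s}$ with distinct $\mu_k$ are independent. That fact follows by looking at the dominant growth as $s\to\infty$.

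With this tool, I group terms on both sides by their slope vectors. On the left, the slopes $\beta_{1,-0}$, $\gamma_{1,-0}$ and $\beta_{1,-0}+\gamma_{1,-0}$ are pairwise distinct by hypothesis, and each coefficient is strictly positive. Moving everything to one side and merging terms with equal slopes, linear independence forces the total coefficient at each distinct slope to vanish. Each left slope therefore carries a positive coefficient that must be matched by right-hand terms with the same slope. Since all coefficients on both sides are positive, no cancellation within one side can occur. So the multiset of slopes together with the summed coefficients must coincide on both sides.

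The main obstacle is that the right side might have coincident slopes, for example $\beta_{2,-0}=\gamma_{2,-0}$, with merged coefficients that sum appropriately. Counting handles this. The left side has three distinct slopes, each needing at least one right term, and the right side has exactly three terms. Hence the right slopes are also distinct and match the left ones bijectively, with equal intercept terms $e^{-a_0}$, i.e.\ the full vectors coincide. Thus the set $\{\beta_2,\gamma_2,\beta_2+\gamma_2\}$ equals $\{\beta_1,\gamma_1,\beta_1+\gamma_1\}$ as full vectors in $\mathbb{R}^d$.

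It remains to exclude the bijections that do not send sums to sums. Suppose $\beta_2+\gamma_2$ equals $\beta_1$. Then $\{\beta_2,\gamma_2\}=\{\gamma_1,\beta_1+\gamma_1\}$, so $\beta_1=\beta_2+\gamma_2=\beta_1+2\gamma_1$, which gives $\gamma_1=0$. This contradicts distinctness of $\beta_{1,-0}$ and $\beta_{1,-0}+\gamma_{1,-0}$. The case $\beta_2+\gamma_2=\gamma_1$ is symmetric. Hence $\beta_2+\gamma_2=\beta_1+\gamma_1$ and $\{\beta_2,\gamma_2\}=\{\beta_1,\gamma_1\}$, which yields the two stated alternatives.
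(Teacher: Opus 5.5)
Your proof is correct and follows essentially the same route as the paper. It expands the reciprocal $\{1+e^{-\beta^\top x}\}\{1+e^{-\gamma^\top x}\}$ into three exponential terms, uses linear independence of exponentials with distinct exponents on an open set together with positivity of the coefficients to match the exponent sets (your counting argument for possibly coincident right-hand slopes is slightly more explicit than the paper's), and then identifies $\beta_1+\gamma_1$ as the only element that is the sum of the other two. The differences are cosmetic: you justify the linear-independence lemma via the identity theorem and dominant growth along a generic line, and you match intercepts before the sum argument rather than after it.
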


We next extend Proposition~\ref{prop:two_mode_c1} to a mixed support for continuous and discrete covariates.
Let $\tilde{x}_{-0}=(\tilde{x}_{-0}^{(1)\top},\tilde{x}_{-0}^{(2)\top})^\top$, where $\tilde{x}_{-0}^{(1)} \in \mathbb{R}^{r}$ and $\tilde{x}_{-0}^{(2)} \in \mathbb{R}^{s}$ with $r \geq 1$, $s \geq 0$, and $r+s=d-1$.
Here, $\tilde{x}_{-0}^{(1)}$ denotes the subvector of continuously distributed covariates, and $\tilde{x}_{-0}^{(2)}$ denotes the subvector of discretely distributed covariates.
We consider the following support condition.

\begin{itemize}
    \item[(C2)] There exists a subset $\mathcal{S} \subset \operatorname{supp}(\tilde{x}_{-0}^{(2)})$ such that $\operatorname{aff}(\mathcal{S}) = \mathbb{R}^{s}$, and, for each $\xi \in \mathcal{S}$, the conditional support of $\tilde{x}_{-0}^{(1)}$ given $\tilde{x}_{-0}^{(2)} = \xi$ contains a nonempty open subset $\mathcal{U}_{\xi} \subset \mathbb{R}^{r}$.
\end{itemize}

When $s=0$, condition {(C2)} reduces to condition {(C1)}.
Let $\beta_{j,-0}=(\beta_{j,-0}^{(1)\top},\beta_{j,-0}^{(2)\top})^\top$ and $\gamma_{j,-0}=(\gamma_{j,-0}^{(1)\top},\gamma_{j,-0}^{(2)\top})^\top$, corresponding to the decomposition of $\tilde{x}_{-0}$ into $\tilde{x}_{-0}^{(j)}$ for $j=1,2$.
Under condition {(C2)}, we obtain the following extended result.

\begin{theorem}
\label{thm:two_mode}
Suppose that condition {(C2)} holds.
Let $\beta_1=(\beta_{1,0},\beta_{1,-0}^\top)^\top$ and $\gamma_1=(\gamma_{1,0},\gamma_{1,-0}^\top)^\top$, and suppose that $\beta_{1,-0}^{(1)}$, $\gamma_{1,-0}^{(1)}$, $\beta_{1,-0}^{(1)}+\gamma_{1,-0}^{(1)}$ are pairwise distinct.
Suppose that for two parameter pairs $(\beta_1,\gamma_1)$ and $(\beta_2,\gamma_2)$, \eqref{eq:equiv_prob} holds for all $y \in \{0,1\}$ and all $x=(1,\tilde{x}_{-0}^{(1)\top},\xi^\top)^\top$ for $\xi \in \mathcal{S}$ and $\tilde{x}_{-0}^{(1)} \in \mathcal{U}_{\xi}$.
Then, we have either $(\beta_1, \gamma_1) = (\beta_2, \gamma_2)$ or $(\beta_1, \gamma_1) = (\gamma_2, \beta_2)$.
\end{theorem}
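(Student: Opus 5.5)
The plan is to reduce Theorem~\ref{thm:two_mode} to Proposition~\ref{prop:two_mode_c1} one fiber $\xi\in\mathcal{S}$ at a time, and then use the affine-spanning property of $\mathcal{S}$ to recover the discrete-covariate coefficients.

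\textbf{Step 1: reduce to one fiber.} Fix $\xi \in \mathcal{S}$. For $u\in\mathcal{U}_\xi$ write $x=(1,u^\top,\xi^\top)^\top$, so that
\begin{align*}
\beta^\top x = (\beta_{0}+\beta_{-0}^{(2)\top}\xi) + \beta_{-0}^{(1)\top}u .
\end{align*}
Define the effective intercepts $b_j(\xi)=\beta_{j,0}+\beta_{j,-0}^{(2)\top}\xi$ and $g_j(\xi)=\gamma_{j,0}+\gamma_{j,-0}^{(2)\top}\xi$ for $j=1,2$. Equation \eqref{eq:equiv_prob} on this fiber says the following: the $r$-dimensional shared-design models with parameters $\bigl((b_1(\xi),\beta_{1,-0}^{(1)}),(g_1(\xi),\gamma_{1,-0}^{(1)})\bigr)$ and $\bigl((b_2(\xi),\beta_{2,-0}^{(1)}),(g_2(\xi),\gamma_{2,-0}^{(1)})\bigr)$ agree on the open set $\mathcal{U}_\xi\subset\mathbb{R}^r$. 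The distinctness hypothesis on $\beta_{1,-0}^{(1)}$, $\gamma_{1,-0}^{(1)}$, $\beta_{1,-0}^{(1)}+\gamma_{1,-0}^{(1)}$ is exactly the hypothesis of Proposition~\ref{prop:two_mode_c1} in dimension $r+1$. It does not depend on $\xi$.

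Applying that proposition on each fiber, we get for each $\xi$ one of two outcomes:
\begin{itemize}
\item[(a)] $\beta_{2,-0}^{(1)}=\beta_{1,-0}^{(1)}$, $\gamma_{2,-0}^{(1)}=\gamma_{1,-0}^{(1)}$, $b_2(\xi)=b_1(\xi)$ and $g_2(\xi)=g_1(\xi)$;
\item[(b)] the swapped version, $\beta_{2,-0}^{(1)}=\gamma_{1,-0}^{(1)}$, $\gamma_{2,-0}^{(1)}=\beta_{1,-0}^{(1)}$, $b_2(\xi)=g_1(\xi)$ and $g_2(\xi)=b_1(\xi)$.
\end{itemize}

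\textbf{Step 2: the alternative is the same on every fiber.} The slopes $\beta_{j,-0}^{(1)},\gamma_{j,-0}^{(1)}$ do not depend on $\xi$, and $\beta_{1,-0}^{(1)}\neq\gamma_{1,-0}^{(1)}$. So alternative (a) forces $\beta_{2,-0}^{(1)}=\beta_{1,-0}^{(1)}$, while (b) forces $\beta_{2,-0}^{(1)}=\gamma_{1,-0}^{(1)}\neq\beta_{1,-0}^{(1)}$. Hence the choice between (a) and (b) is the same for all $\xi\in\mathcal{S}$. Without loss of generality (otherwise relabel $(\beta_2,\gamma_2)$ by swapping) case (a) holds uniformly.

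\textbf{Step 3: recover the discrete coefficients.} In case (a), the affine functions
\begin{align*}
\xi\mapsto (\beta_{2,0}-\beta_{1,0})+(\beta_{2,-0}^{(2)}-\beta_{1,-0}^{(2)})^\top\xi
\end{align*}
and the analogous function for $\gamma$ vanish on $\mathcal{S}$. Since $\operatorname{aff}(\mathcal{S})=\mathbb{R}^s$, we can choose $s+1$ affinely independent points of $\mathcal{S}$. An affine function vanishing on them vanishes identically, so its constant and linear parts are both zero. This gives $\beta_{2,0}=\beta_{1,0}$, $\beta_{2,-0}^{(2)}=\beta_{1,-0}^{(2)}$, and likewise for $\gamma$, which is the claim. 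When $s=0$, the statement is just Proposition~\ref{prop:two_mode_c1}.

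The main obstacle is Step 2, ruling out a mixture of the two alternatives across fibers. It is resolved because the continuous-covariate slopes are fiber-independent and the distinctness of $\beta_{1,-0}^{(1)}$ and $\gamma_{1,-0}^{(1)}$ separates the two cases. One also has to check that Proposition~\ref{prop:two_mode_c1} is used legitimately with intercepts that vary with $\xi$. This is fine, since that proposition places no restriction on the intercepts.
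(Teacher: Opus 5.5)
Your proposal is correct and follows the paper's proof essentially step for step. You apply Proposition~\ref{prop:two_mode_c1} on each fiber $\xi\in\mathcal{S}$ with $\xi$-dependent effective intercepts, rule out a mixture of the direct and swapped alternatives across fibers via $\beta_{1,-0}^{(1)}\neq\gamma_{1,-0}^{(1)}$, and then use $\operatorname{aff}(\mathcal{S})=\mathbb{R}^{s}$ to force the vanishing affine functions to be identically zero, recovering the intercepts and discrete-covariate coefficients.
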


We now establish the following identifiability result.

\begin{corollary}
\label{cor:quotient_mle}
Suppose that condition {(C2)} holds, and suppose that the shared-design model is correctly specified with true parameter pair $(\beta^\ast,~ \gamma^\ast)$.
Let $\beta^\ast=(\beta_0^\ast,~ \beta_{-0}^{\ast\top})^\top$ and $\gamma^\ast=(\gamma_0^\ast,~ \gamma_{-0}^{\ast\top})^\top$, where $\beta_{-0}^\ast=(\beta_{-0}^{\ast(1)\top},\beta_{-0}^{\ast(2)\top})^\top$ and $\gamma_{-0}^\ast=(\gamma_{-0}^{\ast(1)\top},\gamma_{-0}^{\ast(2)\top})^\top$, and suppose that $\beta_{-0}^{\ast(1)}$, $\gamma_{-0}^{\ast(1)}$, $\beta_{-0}^{\ast(1)}+\gamma_{-0}^{\ast(1)}$ are pairwise distinct.
Then the expected log-likelihood
\begin{align*}
    \mathcal{L}(\beta,~ \gamma)
    :=
    \mathbb{E}_{(x,y)}\left[
    y\log\left\{F(x^\top\beta)F(x^\top\gamma)\right\}
    +(1-y)\log\left\{1-F(x^\top\beta)F(x^\top\gamma)\right\}
    \right]
    ,
\end{align*}
is uniquely maximized on $(\mathbb{R}^d\times\mathbb{R}^d)/{\sim}$ at the class $[\beta^\ast,~ \gamma^\ast]$.
\end{corollary}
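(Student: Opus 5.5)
The proof follows the classical Kullback--Leibler (Gibbs inequality) argument, with Theorem~\ref{thm:two_mode} supplying identifiability on the quotient. First, we note that $\mathcal{L}$ is well defined on the quotient: since $F(x^\top\beta)F(x^\top\gamma)$ is symmetric in $(\beta,\gamma)$, we have $\mathcal{L}(\beta,\gamma)=\mathcal{L}(\gamma,\beta)$. So $\mathcal{L}$ induces a function on $(\mathbb{R}^d\times\mathbb{R}^d)/{\sim}$, and the statement reduces to the following. For every $(\beta,\gamma)$ we have $\mathcal{L}(\beta,\gamma)\le\mathcal{L}(\beta^\ast,\gamma^\ast)$, with equality only if $(\beta,\gamma)\sim(\beta^\ast,\gamma^\ast)$.

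Write $\pi(x;\beta,\gamma):=F(x^\top\beta)F(x^\top\gamma)\in(0,1)$ and $\pi^\ast(x):=\pi(x;\beta^\ast,\gamma^\ast)$. Conditioning on $x$ and using correct specification, $\mathbb{E}[y\mid x]=\pi^\ast(x)$. Hence
\begin{align*}
    \mathcal{L}(\beta^\ast,\gamma^\ast)-\mathcal{L}(\beta,\gamma)
    =
    \mathbb{E}_x\left[\mathrm{KL}\left\{\mathrm{Bern}(\pi^\ast(x))\,\|\,\mathrm{Bern}(\pi(x;\beta,\gamma))\right\}\right]\ge 0 .
\end{align*}
Two integrability points need checking. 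First, $\mathcal{L}(\beta^\ast,\gamma^\ast)$ is finite: the negative Bernoulli entropy is bounded by $\log 2$. Second, $\mathcal{L}(\beta,\gamma)$ may equal $-\infty$, for instance when $\|x\|$ has no first moment, since $|\log F(u)|\le \log 2+|u|$. In that case the inequality is strict and nothing more is needed. If finiteness were required elsewhere, I would add the mild assumption $\mathbb{E}\|x\|<\infty$. This gives $[\beta^\ast,\gamma^\ast]$ as a maximizer.

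For uniqueness, suppose equality holds. Then the conditional KL divergence, which is a nonnegative continuous function of $x$, vanishes for $P_x$-almost every $x$. Thus $\pi(x;\beta,\gamma)=\pi^\ast(x)$ almost surely. This step is the main obstacle: Theorem~\ref{thm:two_mode} requires \eqref{eq:equiv_prob} on the full sets $\{(1,u^\top,\xi^\top)^\top: \xi\in\mathcal{S},\,u\in\mathcal{U}_\xi\}$, but we only have equality almost surely.

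To bridge this, I will use two facts. Both sides are continuous, indeed real-analytic, in $u$. For each $\xi\in\mathcal{S}$, the open set $\mathcal{U}_\xi$ lies in the conditional support of $\tilde{x}_{-0}^{(1)}$ given $\tilde{x}_{-0}^{(2)}=\xi$. Because $\tilde x^{(2)}_{-0}$ is discrete, each $\xi\in\mathcal{S}$ is an atom with positive probability. Consequently every nonempty open subset of $\mathcal{U}_\xi$ has positive probability jointly with $\{\tilde x^{(2)}_{-0}=\xi\}$. A continuous function that vanishes almost surely therefore vanishes on a dense subset of $\mathcal{U}_\xi$, and hence on all of $\mathcal{U}_\xi$. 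This gives \eqref{eq:equiv_prob} for $y=1$ on the required set, and the case $y=0$ follows by complementation.

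Finally, the distinctness hypothesis on $\beta_{-0}^{\ast(1)}$, $\gamma_{-0}^{\ast(1)}$ and $\beta_{-0}^{\ast(1)}+\gamma_{-0}^{\ast(1)}$ is exactly what Theorem~\ref{thm:two_mode} needs, applied with $(\beta_1,\gamma_1)=(\beta^\ast,\gamma^\ast)$ and $(\beta_2,\gamma_2)=(\beta,\gamma)$. The theorem then yields $(\beta,\gamma)=(\beta^\ast,\gamma^\ast)$ or $(\beta,\gamma)=(\gamma^\ast,\beta^\ast)$, that is, $[\beta,\gamma]=[\beta^\ast,\gamma^\ast]$, which completes the argument.
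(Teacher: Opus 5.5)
Your proposal is correct and follows essentially the same route as the paper's proof. The Kullback--Leibler (Gibbs) inequality gives maximality of $[\beta^\ast,\gamma^\ast]$. Almost-sure equality of $F(x^\top\beta)F(x^\top\gamma)$ and $\pi^\ast(x)$ is then upgraded to pointwise equality on each slice $\{\xi\}\times\mathcal{U}_\xi$, using continuity and the positive probability of $\{\tilde{x}_{-0}^{(2)}=\xi,\ \tilde{x}_{-0}^{(1)}\in V\}$ for open $V\subset\mathcal{U}_\xi$. Finally, Theorem~\ref{thm:two_mode} is applied with the true pair in the role of $(\beta_1,\gamma_1)$. Your extra remarks only make explicit what the paper uses tacitly: the symmetry of $\mathcal{L}$ that lets it descend to the quotient, the case $\mathcal{L}(\beta,\gamma)=-\infty$, and the fact that each $\xi\in\mathcal{S}$ must be an atom.
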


See Appendix~\ref{apx:proof_two_mode} for the proofs of Proposition~\ref{prop:two_mode_c1}, Theorem~\ref{thm:two_mode}, and Corollary~\ref{cor:quotient_mle}.
Proposition~\ref{prop:two_mode_c1} gives the basic identifiability result under a fully continuous support, while Theorem~\ref{thm:two_mode} extends it to a mixed support for continuous and discrete covariates.
Corollary~\ref{cor:quotient_mle} clarifies the inferential target under the shared-design setting.
At the population level, the equivalence class $[\beta, \gamma]$ is uniquely identified, yet the model cannot distinguish which component corresponds to the event occurrence process and which to the structural-zero process.
These results establish identifiability over the support of $x$ and do not imply the uniqueness of the likelihood maximizer in finite samples.
Consequently, resolving this label ambiguity requires either external information or some relabeling rule.

These results also relate the shared-design model to the identifiability of finite mixture models \citep{teicher1963identifiability, yakowitz1968identifiability}.
In this context, identifiability is formulated as the uniqueness of the finite-mixture representation.
Therefore, under an ordered-component parameterization, this corresponds to uniqueness up to permutation of component labels.
In the present model, the relevant symmetry is not a literal permutation of mixture components but the exchange symmetry of the pair of parameter vectors $(\beta, \gamma)$ for both components.
In this sense, $[\beta,\gamma]$ is the natural inferential target.

The condition that $\beta_{-0}^\ast$, $\gamma_{-0}^\ast$, and $\beta_{-0}^\ast+\gamma_{-0}^\ast$ are pairwise distinct, which is inherited from Theorem~\ref{thm:two_mode}, is a condition on the true parameter vector.
This is a generic condition and can be assumed to hold in practice.

\subsection{Numerical Confirmation of Bimodality}
\label{sec:numerical_bimodality}

To illustrate the exchange symmetry established in Theorem~\ref{thm:two_mode} and Corollary~\ref{cor:quotient_mle}, we examined the posterior distribution of $(\beta,\gamma)$ under the shared-design setting via a Markov chain Monte Carlo (MCMC) algorithm.
We set the number of non-intercept covariates as $4$ ($d = p = 5$), and considered three covariate designs: Scenario~1: all $4$ covariates were drawn from independent standard normal distributions; Scenario~2: all $4$ covariates were drawn from independent $\text{Bernoulli}(0.5)$ distributions; Scenario~3: the first two non-intercept covariates were drawn from independent standard normal distributions, and the remaining ones were from independent $\text{Bernoulli}(0.5)$.
The sample size was $2,000$.
The data-generating mechanism and sampling setup are described in Appendix~\ref{apx:numerical_details}.
We developed the P\'{o}lya-Gamma Gibbs sampler with replica exchange, which is detailed in Appendix~\ref{apx:sampling_algorithm} \citep{polson2013bayesian, swendsen1986replica}.

Figure~\ref{fig:posterior_pca} displays the principal component analysis (PCA) plots of the posterior samples after $k$-means++ clustering with $k=2$ \citep{arthur2007kmeans}.
Under continuous and mixed designs (Scenarios~1 and 3), the posterior exhibited clear bimodality, with two well-separated clusters.
Under the binary design (Scenario~2), in which all non-intercept covariates take values in the bounded discrete set $\{0,1\}$, the two clusters were not clearly distinguished in the PCA plots.
This result is consistent with the failure of the binary design to satisfy the condition (C2), leading to a lack of guaranteed identifiability by Corollary~\ref{cor:quotient_mle}.
Posterior means of each cluster and further numerical details are provided in Appendix~\ref{apx:numerical_details} and \ref{apx:trace_hist_posterior}.

\begin{figure}[htbp]
    \centering
    \begin{subfigure}[t]{0.31\textwidth}
        \centering
        \includegraphics[width=\linewidth]{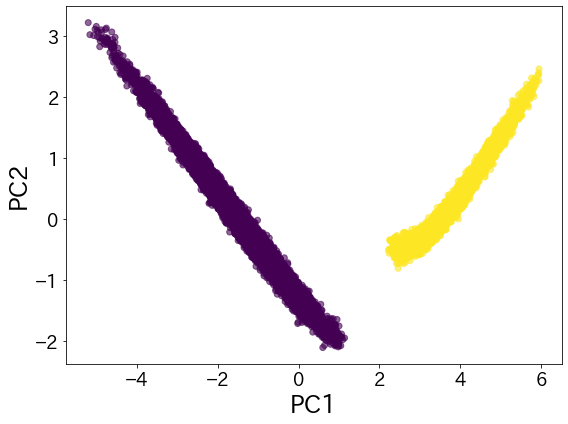}
        \caption{Scenario 1: continuous}
        \label{fig:fig1}
    \end{subfigure}
    \hfill
    \begin{subfigure}[t]{0.31\textwidth}
        \centering
        \includegraphics[width=\linewidth]{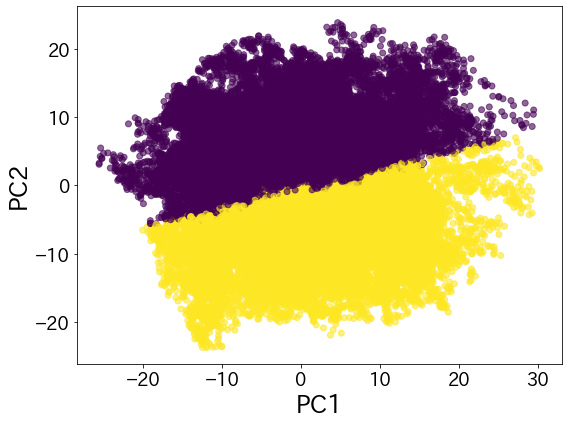}
        \caption{Scenario 2: binary}
        \label{fig:fig2}
    \end{subfigure}
    \hfill
    \begin{subfigure}[t]{0.31\textwidth}
        \centering
        \includegraphics[width=\linewidth]{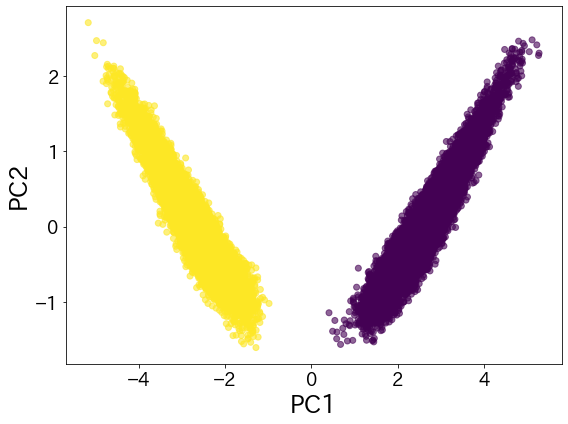}
        \caption{Scenario 3: mixed}
        \label{fig:fig3}
    \end{subfigure}
    \caption{PCA plots of posterior samples after $k$-means clustering ($k=2$). Each panel corresponds to a different covariate scenario: (a) Scenario~1: all 5 covariates were drawn from independent standard normal distributions; (b) Scenario~2: all 5 covariates were drawn from independent $\text{Bernoulli}(0.5)$ distributions; (c) Scenario~3: the first two non-intercept covariates were drawn from independent standard normal distributions, and the remaining ones were from independent $\text{Bernoulli}(0.5)$.}
    \label{fig:posterior_pca}
\end{figure}

\section{Relabeling Rule}
\label{sec:relabeling}

Section~3 shows that, under the shared-design setting, the likelihood identifies only the equivalence class $[\beta,\gamma]$.
In many applications, however, investigators still need a single ordered pair because subsequent interpretation is conducted in terms of the associations with event occurrence or zero-inflation such as zeros due to misclassification.
For that practical purpose, we introduce a simple relabeling rule.

\subsection{Proposed Rule}

We propose a simple relabeling rule.
We proceed as follows.

\begin{enumerate}
    \item[1.] Fit a standard logistic regression of $y$ on $x$ and denote the resulting coefficient vector by $\hat\beta_{\mathrm{LR}}$.
    \item[2.] Fit the zero-inflated logistic regression model with shared design and obtain one ordered maximizer $(\hat\beta,\hat\gamma)$.
    \item[3.] Form the exchange-symmetric solution $(\hat\gamma,\hat\beta)$.
    \item[4.] Choose the pair whose first component vector is closer to $\hat\beta_{\mathrm{LR}}$; that is, choose
    \begin{align*}
        (\hat\beta^{\dagger},\hat\gamma^{\dagger})
        :=
        \argmin_{(\beta,\gamma)\in\{(\hat\beta,\hat\gamma),(\hat\gamma,\hat\beta)\}}
        \|\beta-\hat\beta_{\mathrm{LR}}\|_2
        .
    \end{align*}
\end{enumerate}

We should note that Theorem~\ref{thm:sign-flip} shows that the estimator of the ordinary logistic regression is biased under misspecification.
Therefore, while $\hat\beta_{\mathrm{LR}}$ serves as a convenient reference, it should not be interpreted as a definitive benchmark for the ordinary logistic regression component of the zero-inflated model.
Whenever prior knowledge is available to distinguish the two components, that information should take precedence over this rule.

\subsection{Simulation Study}
\label{sec:simulation}

We next investigate the behavior of the proposed relabeling rule through a simulation study.

\paragraph{Simulation Design.}

We considered four scenarios that differ only in the intercept of the structural-zero component, hence, in the probability of zero inflation.
The coefficient vector for the ordinary logistic regression component was fixed at
\begin{align*}
    {\beta}^{*}
    =
    (0.5, ~1.0, ~0.5, ~0.5, ~0.25)^\top
    ,
\end{align*}
and the structural-zero coefficient was
\begin{align*}
    {\gamma}^{*}
    =
    (\gamma_{0,\mathrm{int}}, ~-1.0, ~-1.0, ~0.5, ~0.5)^\top
    .
\end{align*}
We used four values of $\gamma_{0,\mathrm{int}}$.
Specifically, we considered: (i) Very Low Mislabel ($\gamma_{0,\text{int}} = 4.3$), yielding approximately $3.8\%$ structural zeros; (ii) Low Mislabel ($\gamma_{0,\text{int}} = 3.0$), yielding approximately $10.3\%$ structural zeros; (iii) Moderate Mislabel ($\gamma_{0,\text{int}} = 1.7$), yielding approximately $23.4\%$ structural zeros; and (iv) High Mislabel ($\gamma_{0,\text{int}} = 1.0$), yielding approximately $33.4\%$ structural zeros.
These scenarios were determined to span the range of zero-inflation levels commonly encountered in applications, from nearly negligible to substantial proportions of structural zeros.

For each scenario, we generated $n=1,000$ observations with $d=5$ covariates including an intercept.
The first element of $x_i$ was $1$, and the remaining elements were drawn independently from the standard normal distributions.
Responses were generated from the zero-inflated logistic regression model as follows:
\begin{align*}
    h_i &\sim \operatorname{Bernoulli}\left(F({\gamma}^{* \top} x_i)\right)
    ,
    \\
    \quad
    y_i^\ast &\sim \operatorname{Bernoulli}\left(F({\beta}^{* \top} x_i)\right)
    ,
    \\
    \quad
    y_i
    &= h_i y_i^\ast
    .
\end{align*}

We compared three estimation approaches:
(i) Proposed approach: the zero-inflated model with the proposed relabeling rule in Section \ref{sec:relabeling}, (ii) The standard logistic regression approach: the ordinary logistic regression model ignoring zero inflation, and (iii) Naive zero-inflated model approach: the zero-inflated model without relabeling, retaining the first local maximizer returned by the optimization algorithm.

All methods were performed by the same optimization method and settings: the L-BFGS-B algorithm with analytical gradients, a maximum of $1,000$ iterations, and random initialization from $\mathcal{N}(0,0.01I)$.
An estimate was classified as unreasonable if at least one component exceeds ten times the absolute value of its true value.
Each scenario was replicated $10,000$ times.

\paragraph{Results.}

Tables~\ref{tab:simulation_results}--\ref{tab:convergence} summarize the results.
Figure~\ref{fig:bias_boxplots} and Figure~\ref{fig:gamma_bias_boxplots} visualize the boxplots for distributions of the observed bias.
In these results, several patterns were clearly observed.
First, for the parameter $\beta$, the bias of the estimates from the proposed approach remained more concentrated around zero than that of the estimates from the naive approach, whereas the estimates from standard logistic regression approach exhibited a systematic negative shift that became larger as the zero-inflation proportion increased.
Second, the proposed approach was more stable than the naive approach.
Across the low-to-moderate zero-inflation scenarios, the biases of the relabeled estimates were smaller than those of the standard logistic regression, while their magnitudes of standard deviations remained moderate.
For for the parameter $\gamma$, the proposed approach also dominated the naive approach in most scenarios, especially for the parameters other than the intercept.
A weakness appeared when structural zeros were very rare, probably because the intercept of the structural-zero component was weakly identified and highly variable.
Third, the naive approach behaved as expected in terms of a non-identifiable ordered parameterization.
Specifically, the estimates were obtained from both symmetric solutions.
The concentration of the relabeled estimates around the true parameter values, in contrast to the spread of the naive estimates, suggests that the proposed relabeling rule was effective at selecting an appropriate representative from each equivalence class.

Table~\ref{tab:convergence} shows that the proposed relabeling rule improved practical reliability.
The proposed method produced reasonable estimates in more than $99\%$ of replications in all but the Very Low Mislabel scenario and consistently outperformed the naive zero-inflated model approach in terms of reasonable solutions.

\begin{table}[htbp]
{\scriptsize
\centering
\caption{Simulation results based on $10{,}000$ replications: estimates of the parameters for the ordinary logistic regression component ($\beta$)}
\label{tab:simulation_results}
\begin{tabular}{llrrrrr}
\toprule
\multirow{2}{*}{Scenario} & \multirow{2}{*}{Method} & \multicolumn{5}{c}{Bias (SD)} \\
\cmidrule{3-7}
& & $\beta_0$ & $\beta_1$ & $\beta_2$ & $\beta_3$ & $\beta_4$ \\
\midrule
\multirow{3}{*}{Very Low Mislabel} & Proposed & 0.102 (0.234) & -0.007 (0.231) & -0.003 (0.190) & -0.004 (0.114) & -0.002 (0.115) \\
 & Standard LR & -0.170 (0.071) & -0.217 (0.083) & -0.166 (0.076) & 0.005 (0.076) & 0.032 (0.072) \\
 & Naive ZILR & 1.100 (1.449) & -0.521 (0.861) & -0.395 (0.674) & -0.004 (0.243) & 0.064 (0.253) \\
\midrule
\multirow{3}{*}{Low Mislabel} & Proposed & 0.055 (0.253) & -0.019 (0.274) & -0.013 (0.218) & -0.001 (0.109) & 0.001 (0.109) \\
 & Standard LR & -0.409 (0.068) & -0.455 (0.075) & -0.346 (0.069) & 0.011 (0.073) & 0.064 (0.069) \\
 & Naive ZILR & 1.175 (1.384) & -0.882 (1.059) & -0.667 (0.810) & 0.010 (0.184) & 0.119 (0.231) \\
\midrule
\multirow{3}{*}{Moderate Mislabel} & Proposed & -0.005 (0.289) & -0.267 (0.592) & -0.207 (0.457) & 0.014 (0.121) & 0.043 (0.142) \\
 & Standard LR & -0.815 (0.068) & -0.724 (0.069) & -0.559 (0.068) & 0.032 (0.072) & 0.115 (0.070) \\
 & Naive ZILR & 0.751 (0.953) & -1.015 (1.127) & -0.767 (0.862) & 0.010 (0.167) & 0.135 (0.225) \\
\midrule
\multirow{3}{*}{High Mislabel} & Proposed & -0.195 (0.284) & -0.648 (0.780) & -0.506 (0.597) & 0.035 (0.132) & 0.105 (0.162) \\
 & Standard LR & -1.121 (0.070) & -0.860 (0.068) & -0.670 (0.069) & 0.050 (0.074) & 0.145 (0.072) \\
 & Naive ZILR & 0.416 (0.802) & -1.021 (1.146) & -0.771 (0.867) & 0.007 (0.181) & 0.133 (0.225) \\
\bottomrule
\end{tabular}
}
\end{table}

\begin{table}[htbp]
{\scriptsize
\centering
\caption{Simulation results based on $10{,}000$ replications: estimates of the parameters for the structural-zero component ($\gamma$)}
\label{tab:simulation_results_gamma}
\begin{tabular}{llrrrrr}
\toprule
\multirow{2}{*}{Scenario} & \multirow{2}{*}{Method} & \multicolumn{5}{c}{Bias (SD)} \\
\cmidrule{3-7}
& & $\gamma_0$ & $\gamma_1$ & $\gamma_2$ & $\gamma_3$ & $\gamma_4$ \\
\midrule
\multirow{2}{*}{Very Low Mislabel} & Proposed & 1.144 (3.760) & -0.102 (1.478) & -0.140 (1.297) & 0.092 (0.674) & 0.086 (0.692) \\
 & Naive ZILR & -0.579 (3.822) & 0.677 (1.581) & 0.471 (1.324) & 0.060 (0.553) & -0.042 (0.579) \\
\midrule
\multirow{2}{*}{Low Mislabel} & Proposed & 0.548 (1.899) & -0.118 (0.883) & -0.109 (0.733) & 0.047 (0.339) & 0.042 (0.348) \\
 & Naive ZILR & -0.802 (2.053) & 0.852 (1.262) & 0.632 (0.977) & 0.028 (0.270) & -0.094 (0.309) \\
\midrule
\multirow{2}{*}{Moderate Mislabel} & Proposed & 0.345 (0.957) & 0.223 (1.070) & 0.160 (0.834) & 0.007 (0.215) & -0.018 (0.268) \\
 & Naive ZILR & -0.432 (1.057) & 0.981 (1.150) & 0.728 (0.882) & 0.011 (0.176) & -0.113 (0.228) \\
\midrule
\multirow{2}{*}{High Mislabel} & Proposed & 0.553 (0.789) & 0.629 (1.350) & 0.484 (1.030) & -0.021 (0.219) & -0.092 (0.278) \\
 & Naive ZILR & -0.071 (0.859) & 1.005 (1.163) & 0.752 (0.882) & 0.006 (0.181) & -0.120 (0.228) \\
\bottomrule
\end{tabular}
}
\end{table}

\begin{figure}[htbp]
\centering
\includegraphics[width=\textwidth]{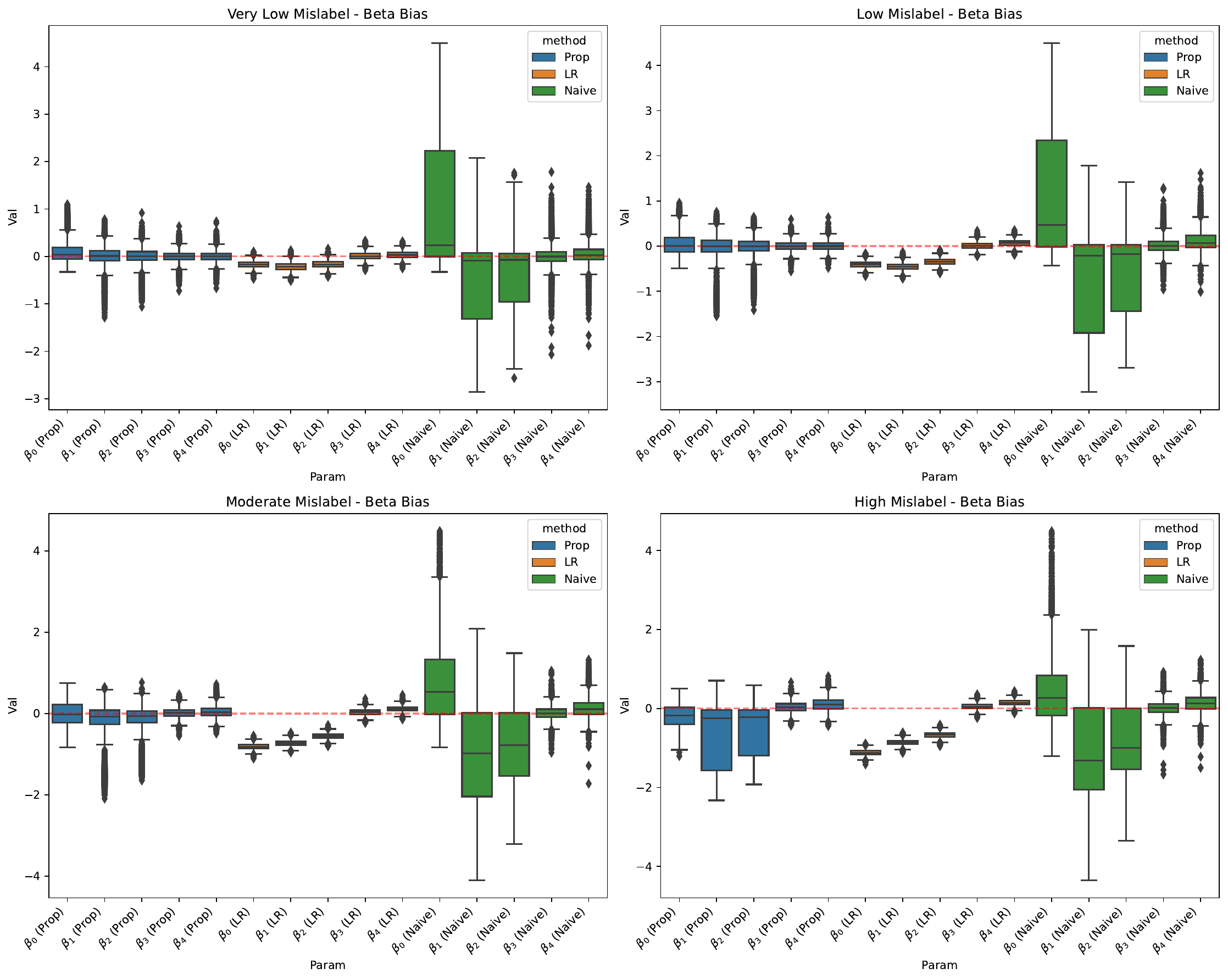}
\caption{Boxplots of parameter bias (estimate minus true value) for $\beta$ parameters across four scenarios based on $10{,}000$ simulation replications. Each panel represents a different scenario, with three estimation methods compared within panels. The true parameter values were set as ${\beta}_0 = (0.5, ~1.0, ~0.5, ~0.5, ~0.25)^\top$.}
\label{fig:bias_boxplots}
\end{figure}

\begin{figure}[htbp]
\centering
\includegraphics[width=\textwidth]{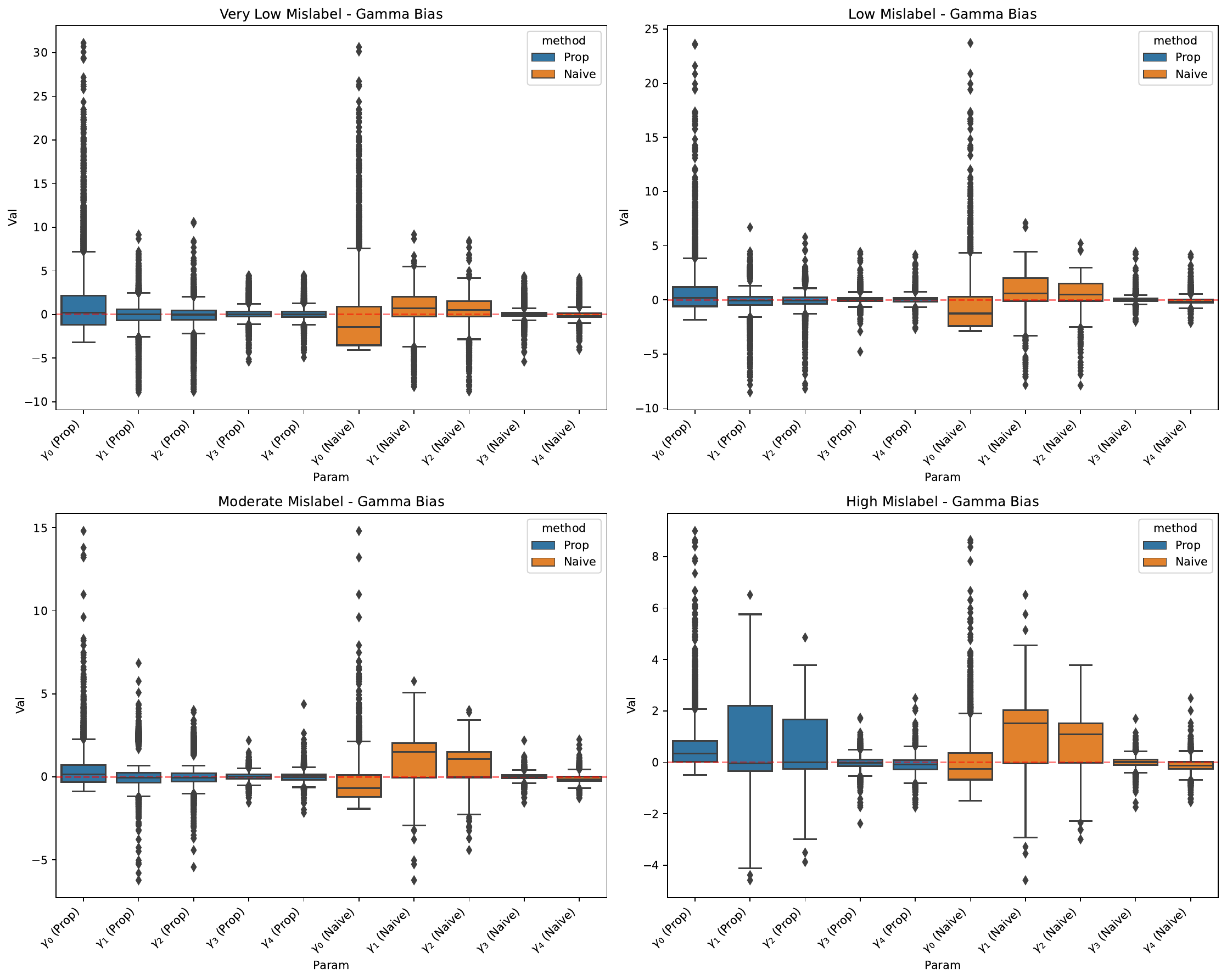}
\caption{Boxplots of parameter bias (estimate minus true value) for $\gamma$ parameters across four scenarios based on $10,000$ simulation replications. Each panel represents a different scenario, with three estimation methods compared within panels. The true parameter values were set as ${\gamma}_0 = (\gamma_{0,\mathrm{int}}, ~-1.0, ~-1.0, ~0.5, ~0.5)^\top$, where $\gamma_{0,\mathrm{int}}=4.3,~3.0,~1.7,~1.0$ for each scenario.}
\label{fig:gamma_bias_boxplots}
\end{figure}

\begin{table}[htbp]
\centering
{\scriptsize
\caption{Convergence diagnostics across simulation scenarios based on $10,000$ replications. "Converged" indicates successful optimization with reasonable parameter estimates, "Unreasonable" denotes cases where at least one parameter estimate exceeded ten times its true value, and "Ratio" shows the percentage of successful convergence.}
\label{tab:convergence}
\begin{tabular}{llrrrr}
\toprule
Scenario & Method & Converged & Unreasonable & Total & Ratio (\%) \\
\midrule
\multirow{3}{*}{Very Low Mislabel} 
& Proposed & 9,156 & 844 & 10,000 & 91.6 \\
& Standard LR & 10,000 & 0 & 10,000 & 100.0 \\
& Naive ZILR & 7,249 & 2,751 & 10,000 & 72.5 \\
\midrule
\multirow{3}{*}{Low Mislabel} 
& Proposed & 9,936 & 64 & 10,000 & 99.4 \\
& Standard LR & 10,000 & 0 & 10,000 & 100.0 \\
& Naive ZILR & 9,232 & 768 & 10,000 & 92.3 \\
\midrule
\multirow{3}{*}{Moderate Mislabel} 
& Proposed & 9,985 & 15 & 10,000 & 99.9 \\
& Standard LR & 10,000 & 0 & 10,000 & 100.0 \\
& Naive ZILR & 9,936 & 64 & 10,000 & 99.4 \\
\midrule
\multirow{3}{*}{High Mislabel} 
& Proposed & 9,985 & 15 & 10,000 & 99.9 \\
& Standard LR & 10,000 & 0 & 10,000 & 100.0 \\
& Naive ZILR & 9,955 & 43 & 10,000 & 99.6 \\
\bottomrule
\end{tabular}
}
\end{table}

\section{Application to Actual Data}
\label{sec:application}

We illustrate the performance of the zero-inflated logistic model with shared design through an application to public data.
We note that the data analysis in this section is intended only as a methodological illustration and not for clinical interpretation.

\subsection{Dataset}

We used the National Health and Nutrition Examination Survey (NHANES), the 2017--2018 public release \citep{nhanes}.
The outcome $y_i$ for individual ID (SEQN) $i$ was self-reported diabetes status, constructed from the diabetes questionnaire as $y_i = 1$ for respondents who reported a prior diabetes diagnosis ($\text{DIQ010} = 1$) and $y_i = 0$ otherwise ($\text{DIQ010} = 2$).
As covariates, we used insurance coverage (HIQ011), usual source of care (HUQ030), age (RIDAGEYR), body mass index: bmi (BMXBMI), and sex (RIAGENDR).
To motivate a zero-inflated model, we compared self-reported status with an HbA1c-based variable.
Specifically, we defined $d_{\mathrm{A1c},i} = 1$ when HbA1c was at least $6.5\%$ ($\text{LBXGH} \geq 6.5$) and $d_{\mathrm{A1c},i} = 0$ otherwise ($\text{LBXGH} < 6.5$).
We used samples with non-zero 2-year sample weights $({\text{WTMEC2YR} >0})$.
Table \ref{tab:nhanes_summary} summarizes the sample size of the data.
Among respondents with $d_{\mathrm{A1c},i}=1$ and non-missing self-report, the proportion with $y_i=0$ was about $21.4\%$.
Although HbA1c is not a gold standard for diagnosis, these descriptive proportions suggest that self-reported diabetes can contain a non-negligible proportion of undiagnosed cases.

\begin{table}[htbp]
\centering
\caption{Sample size summaries of NHANES data used to illustrate zero-inflated logistic regression. The last column reports the proportion of self-reported non-cases ($y_i = 0$) among respondents with HbA1c $\geq 6.5\%$ ($d_{\mathrm{A1c},i} = 1$) and observed self-reported diabetes status.}
\label{tab:nhanes_summary}
\begin{tabular}{cccccc}
\toprule
Period & \shortstack{Interviewed\\participants} & \shortstack{sample weights\\$> 0$} & \shortstack{HbA1c was\\observed} & \shortstack{Self report was\\observed} & \shortstack{Ratio of $y_i = 0$\\given $d_{\mathrm{A1c},i} = 1$} \\
\midrule
2017--2018 & 9,254 & 8,704 & 6,045 & 8,709 & 0.214 \\
\bottomrule
\end{tabular}
\end{table}

\subsection{Model Settings}

We specified a shared-design model.
Specifically, the covariates of the ordinary logistic regression component and the structural-zero component were set equal, with covariates
\begin{align*}
    (1,~\text{insured},~\text{usualcare},~\text{age},~\text{bmi},~\text{female})
    ,
\end{align*}
where age and bmi were standardized.
The model was estimated on the complete-case subset with $711$ samples.
Because the aim of this section is methodological illustration, we did not incorporate survey weights.

\subsection{Results}

We obtained one ordered solution using the BFGS algorithm from a random initial value and then obtained the second by exchanging the two coefficient vectors.
Table \ref{tab:nhanes_modes} shows the two solutions: solution A and B, denoted as $(\hat\beta_{\mathrm{A}}, \hat\gamma_{\mathrm{A}})$ and $(\hat\beta_{\mathrm{B}}, \hat\gamma_{\mathrm{B}})$, respectively.
The resulting negative log-likelihood values were identical up to numerical precision.
Specifically, both of the evaluated values were $1918.023$.
Moreover, the exchange symmetry was numerically exact: $\max_{j \in \{1,\ldots,6\}} |\hat\beta_{A,j}-\hat\gamma_{B,j}| = 2.98\times 10^{-7}$ and $\max_{j \in \{1,\ldots,6\}} |\hat\gamma_{A,j}-\hat\beta_{B,j}| = 7.40\times 10^{-7}$, where $\hat\beta_{A,j}$ denotes the $j$th element of $\hat\beta_{A}$.

We applied the relabeling rule from Section~\ref{sec:relabeling}.
Then, solution B was selected because $\|\hat\beta_{\mathrm{B}}-\hat\beta_{\mathrm{LR}}\|_2^2 = 1.102$, and $\|\hat\beta_{\mathrm{A}}-\hat\beta_{\mathrm{LR}}\|_2^2 = 29.771$, where $\hat\beta_{\mathrm{LR}}$ denotes the estimate from the ordinary logistic regression, shown in Table \ref{tab:nhanes_modes}.

\begin{table}[htbp]
\centering
\caption{Two solutions $(\hat\beta_{\mathrm{A}}, \hat\gamma_{\mathrm{A}})$ and $(\hat\beta_{\mathrm{B}}, \hat\gamma_{\mathrm{B}})$, and the estimate from the ordinary logistic regression $\hat\beta_{\mathrm{LR}}$ for the NHANES illustration.}
\label{tab:nhanes_modes}
\begin{tabular}{lrrrrr}
\toprule
Term & $\hat\beta_{\mathrm{A}}$ & $\hat\gamma_{\mathrm{A}}$ & $\hat\beta_{\mathrm{B}}$ & $\hat\gamma_{\mathrm{B}}$ & $\hat\beta_{\mathrm{LR}}$\\
\midrule
Intercept &  1.250 & -3.192 & -3.192 &  1.250 & -3.444\\
insured   & -0.413 &  0.138 &  0.138 & -0.413 & -0.111\\
usualcare &  1.085 &  0.179 &  0.179 &  1.085 & 0.627\\
age   & -1.251 &  2.300 &  2.300 & -1.251 & 1.474\\
bmi   &  0.647 &  0.371 &  0.371 &  0.647 & 0.590\\
female    & -0.500 & -0.223 & -0.223 & -0.500 & -0.431\\
\bottomrule
\end{tabular}
\end{table}

\section{Discussion}
\label{sec:discussion}

This study investigates the zero-inflated logistic regression model with shared design, in terms of a sign-flip phenomenon under misspecification, the existence of maximum likelihood estimates, identifiability of the regression parameters, computational methods for implementation, and a practical relabeling rule. 
The primary theoretical message is that non-identifiability in the shared-design setting is not unstructured.
Under mild regularity conditions, the non-identifiability is reduced to the exchange symmetry of the two coefficient vectors.
By considering the quotient space with respect to this symmetry, the expected log-likelihood has a unique maximizer.
This result is useful for understanding the inherent inferential limits of the model.

A second contribution is the analysis of the existence of maximum likelihood estimates.
The concepts of double separation and $\varepsilon$--double--non-separation introduced in this study extend the classical separation conditions for the ordinary logistic regression model.
While these conditions do not provide a complete characterization, they offer tractable sufficient conditions for both the existence and non-existence of estimates.
In particular, the results on non-existence explain why optimization algorithms may fail to converge even before considering the exchange symmetry of the regression parameters.
Furthermore, as shown in Theorem~\ref{thm:sign-flip}, model misspecification can lead to a sign flip in the regression coefficients relative to their true values.
This provides a formal warning against analyses that ignore zero-inflation structures.

Numerical results based on posterior sampling support the theoretical findings regarding identifiability.
Bimodality was clearly observed in posterior distributions under the continuous and mixed designs.
However, the binary design did not exhibit clear mode separation and increased numerical instability, probably because the design fails to satisfy the condition (C2) and thereby lacks the guaranteed identifibility.
These results provide a practical guideline: analysts should exercise caution when interpreting estimates if the all covariate values are restricted to a small number of support points in the covariate space.
In terms of sampling algorithm, because standard single-chain Gibbs samplers may become trapped in one of the modes, we employed replica exchange method for efficient exploration of the parameter space.

The relabeling rule proposed in Section~\ref{sec:relabeling} serves as a heuristic for interpretation rather than a new source of identification.
Its role is to provide a reproducible ordering rule when an ordered pair is required for subsequent interpretation or comparison with the results from the ordinary logistic regression.
If external information is available to distinguish the ordinary logistic regression component from the structural-zero component, such information should take precedence over the proposed rule.

This study has several limitations. 
First, our theoretical results provide sufficient conditions for the existence and non-existence of the maximum likelihood estimate, rather than a complete characterization. 
Second, regarding the relabeling rule, it remains to be investigated whether alternative rules can improve performance when the referenced ordinary logistic regression itself suffers from a severe bias.
Third, as the asymptotic theory for the model on the quotient space remains to be established, we do not perform formal statistical inference. 
These limitations suggest several directions for future research. 
Promising directions include a more refined characterization of the existence of the estimates, the formalization of asymptotic theory for parameters defined on the quotient space, and the development of relabeling rules that provide reliable choice even when the referenced logistic regression suffers from a severe bias.

A further direction is to examine whether the theoretical results extend to other link functions for binary regression, such as the probit or the complementary log-log function.
Theorem~\ref{thm:two_mode} exploits the specific logistic form $F(\mu)=\exp(\mu)/\{1+\exp(\mu)\}$, which reduces the identifiability condition to an equality of sums of exponential terms with distinct exponents. 
While such a representation is unavailable for other general link functions, the principle that the equality $F(\beta_1^\top x)F(\gamma_1^\top x)=F(\beta_2^{\top}x)F(\gamma_2^{\top}x)$ over an open set restricts the parameter pairs to an exchange-symmetric set may hold for a broader class of functions.
Specifically, for real analytic link functions, an exchange-symmetry result may be established via the identity theorem, provided that the analytic form of the product function ensures that local equality implies global equivalence.
Furthermore, because the results on the existence of the maximum likelihood estimate and the sign flip phenomenon under misspecification depend primarily on the monotonicity and boundedness of $F(\cdot)$ and the structure of the log-likelihood function, these properties are expected to hold across other link functions, probably with appropriate modifications to reflect specific characteristics of link functions, such as the asymmetric behavior of the complementary log-log function.

\section{Concluding Remarks}
\label{sec:conclusion}

When the same covariates are used for both components of the zero-inflated logistic regression model, identifiability results as established in existing literatures does not hold.
We establish that this non-identifiability has a specific structure.
Namely, under mild regularity conditions, it reduces to exchange symmetry, and the expected log-likelihood has a unique maximizer on the resulting quotient space.
In addition, we introduce sufficient conditions for the existence and non-existence of the maximum likelihood estimate, demonstrate posterior bimodality through numerical experiments, and propose a simple relabeling rule for applications.
We also establish a sign flip phenomenon under misspecification.
These theoretical and numerical results provide us with a practical guideline in applying the zero-inflated logistic regression model.


\section*{Funding}

This research was supported by AMED under Grant Number JP223fa627001 (UTOPIA AI Research Discovery Program) and JSPS KAKENHI Grant Number 26K02664.

\section*{Data Availability}

The NHANES data used in the application is available from the website of the Centers for Disease Control and Prevention: \url{https://wwwn.cdc.gov/nchs/nhanes/}.

\section*{Code Availability}

The Python scripts for numerical studies and the R script for actual data application in this manuscript are available from the GitHub repository: \url{https://github.com/t-yui/zero-inflated-logistic-shared-design}.

\section*{Declaration of the Use of Generative AI and AI-assisted Technologies}

The authors used ChatGPT (OpenAI), Claude (Anthropic) and Gemini (Google) to assist with developing scripts for simulation and application, and editing the English language during the preparation of this manuscript.
The authors checked and edited the content and take full responsibility for this manuscript.

\appendix

\section{Proofs}

\subsection{Proof of Theorem~\ref{thm:sign-flip}}
\label{apx:proof_misspec}

We state the following regularity conditions.

\begin{assumption}\label{ass:basic}
We assume that
\begin{enumerate}
    \item[(A1)] $\mathbb{E}[\|x\|^2] < \infty$.
    \item[(A2)] $\mathbb{P}(\tilde{x}_j>0)>0$ and $\mathbb{P}(\tilde{x}_j<0)>0$.
    \item[(A3)] $\mathbb{E}[\tilde{x}_j\mid \tilde{x}_{-j},\tilde{z}_{-j}]=0$ almost surely.
    \item[(A4)] For the fixed value of $c<0$ and every $t\in\mathbb{R}$, the map $(\theta_0,\theta_{-j}) \mapsto \mathcal{L}_c(\theta_0,t,\theta_{-j})$ has a unique maximizer, and the profile objective function $g_c$ has at least one maximizer on $\mathbb{R}$.
\end{enumerate}
\end{assumption}

For the fixed value of $c<0$ and each $t\in\mathbb{R}$, let $(\theta_0^{*}(t),\theta_{-j}^{*}(t))$ denote the maximizer in the definition of $g_c(t)$, and let
\begin{align*}
    \eta(t,x)
    &:=
    \theta_0^{*}(t) + t \tilde{x}_j + \theta_{-j}^{*}(t)^{\top}\tilde{x}_{-j},
    \\
    \mu(t,x)
    &:=
    F(\eta(t,x))
    .
\end{align*}

We first establish basic properties of the profile objective function $g_c$.

\begin{lemma}
\label{lem:g_derivative}
Suppose that Assumption~\ref{ass:basic} holds. Then, we have
\begin{enumerate}
\item[(i)] 
The derivative of $g_c$ is $
    g_c'(t)
    = \mathbb{E}_{(x,z)}\left[(\pi_c(x,z)-\mu(t,x))\,\tilde{x}_j\right]
$.
\item[(ii)]
The function $g_c$ is concave with respect to $t$.
\end{enumerate}
\end{lemma}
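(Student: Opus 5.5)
For part (ii) I will use a partial-maximization argument instead of differentiating. The expected log-likelihood $\mathcal{L}_c(\theta_0,t,\theta_{-j})$ is jointly concave in $(\theta_0,t,\theta_{-j})$: for each fixed $(x,z)$ the integrand $\pi\log F(\eta)+(1-\pi)\log\{1-F(\eta)\}$ is concave in $\eta$ because its second derivative is $-F(\eta)\{1-F(\eta)\}<0$. The linear predictor $\eta$ is affine in the parameters, and taking expectations preserves concavity. Finiteness comes from (A1) together with $|\log F(\eta)|\le \log 2+|\eta|$. The supremum of a jointly concave function over a subset of its variables is concave in the remaining variables. Hence $g_c(t)=\sup_{(\theta_0,\theta_{-j})}\mathcal{L}_c$ is concave; by (A4) it is also finite, since the supremum is attained. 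This gives (ii) without any smoothness of the maximizer map.

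For part (i) I will use an envelope (Danskin-type) argument. First I differentiate $\mathcal{L}_c$ under the expectation. The partial derivatives of the integrand are $(\pi_c-F(\eta))\,x$, where the $t$-coordinate carries the factor $\tilde x_j$. These are dominated by $\|x\|$, which is integrable by (A1), so dominated convergence gives
\begin{align*}
\partial_t\mathcal{L}_c=\mathbb{E}[(\pi_c-F(\eta))\tilde x_j],\qquad
\partial_{(\theta_0,\theta_{-j})}\mathcal{L}_c=\mathbb{E}[(\pi_c-F(\eta))(1,\tilde x_{-j}^\top)^\top].
\end{align*}
The second set of derivatives vanishes at the maximizer $(\theta_0^*(t),\theta_{-j}^*(t))$.

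I then differentiate $g_c$ by a two-sided bound. Fix $t$ and set $\theta^*=(\theta_0^*(t),\theta_{-j}^*(t))$. For any $s$, $g_c(s)\ge\mathcal{L}_c(\theta_0^*(t),s,\theta_{-j}^*(t))$, with equality at $s=t$. The right-hand side is differentiable in $s$ with derivative $D:=\mathbb{E}[(\pi_c-\mu(t,x))\tilde x_j]$ at $s=t$. So $D$ is a supergradient of the concave function $g_c$ at $t$: for $h>0$ the difference quotient $\{g_c(t+h)-g_c(t)\}/h$ is bounded below by the corresponding quotient of the minorant, which tends to $D$, and for $h<0$ the inequality reverses. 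This gives $g_c'^{+}(t)\ge D\ge g_c'^{-}(t)$. Concavity gives $g_c'^{-}(t)\ge g_c'^{+}(t)$, so the one-sided derivatives coincide and equal $D$.

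The delicate step is this last one. A concave function could in principle have a kink, and the uniqueness of the maximizer in (A4) is exactly what rules it out: the superdifferential of a partial supremum is the set of $t$-gradients at all maximizers, and here there is only one. If the minorant argument above turns out to be insufficient as written, I will instead take $(\theta_0^*(t+h),\theta_{-j}^*(t+h))$ as a competitor to get the reverse inequality. That route needs continuity of $\theta^*(\cdot)$, which I would obtain from uniqueness plus strict concavity via a standard argmax-continuity argument for concave functions.
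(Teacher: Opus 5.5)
Your proposal is correct. Part (ii) is the same partial-maximization argument the paper uses; you additionally justify the joint concavity and finiteness of $\mathcal{L}_c$, which the paper only asserts.

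Part (i) follows a genuinely different route. The paper bounds $|\partial_t \ell|$ by $|\tilde{x}_j|$ and then cites Danskin's theorem, relying on the uniqueness of the inner maximizer from (A4). You instead prove the envelope formula by hand. The map $s\mapsto \mathcal{L}_c(\theta_0^{*}(t),s,\theta_{-j}^{*}(t))$ is a differentiable minorant of $g_c$ that touches it at $t$. Writing $g_{c,+}'(t)$ and $g_{c,-}'(t)$ for the right and left derivatives, this gives $g_{c,+}'(t)\ge D\ge g_{c,-}'(t)$. Concavity of $g_c$ from (ii) gives $g_{c,-}'(t)\ge g_{c,+}'(t)$, so the sandwich closes and $g_c'(t)=D$.

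Your argument is self-contained. It avoids having to verify the compactness (or inf-compactness) hypothesis on the inner problem under which Danskin's theorem is normally stated; the paper does not check this for the non-compact domain $\mathbb{R}\times\mathbb{R}^{d-2}$. The cost is that (ii) must be proved before (i), and your argument relies on the joint concavity specific to this model. A Danskin-type argument does not need concavity of $g_c$.

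One correction to your closing commentary. The sandwich uses only that \emph{some} inner maximizer exists, not that it is unique, so uniqueness in (A4) is not what rules out a kink. Under joint concavity, $\partial_t\mathcal{L}_c$ automatically takes the same value at every inner maximizer, since the same sandwich applies to each of them. Your fallback via continuity of $\theta^{*}(\cdot)$ is therefore unnecessary.
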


\begin{proof}[Proof of Lemma~\ref{lem:g_derivative}.]
We begin with (i).
For fixed $(x,z)$, let
\begin{align*}
    \ell(\eta)
    &:=
    \pi_c(x,z)\log F(\eta)
    + \{1-\pi_c(x,z)\}\log\{1-F(\eta)\}
    .
\end{align*}
Using $F'(\eta)=F(\eta)\{1-F(\eta)\}$, we have
\begin{align*}
    \frac{\partial}{\partial \eta}\ell(\eta)
    &=
    \pi_c(x,z)-F(\eta)
    .
\end{align*}
Since $\partial\eta/\partial t = x_j$, we obtain
\begin{align*}
    \frac{\partial}{\partial t}\mathcal{L}_c(\theta_0,t,\theta_{-j})
    &= \mathbb{E}_{(x,z)}\left[
       \frac{\partial \ell(\eta)}{\partial \eta}\,
       \frac{\partial \eta}{\partial t}
     \right]
    = \mathbb{E}_{(x,z)}\left[(\pi_c(x,z)-F(\eta))\,x_j\right]
    .
\end{align*}
Moreover, since $0\leq \pi_c(x,z)\leq 1$ and $0\leq F(\eta)\leq 1$, we have
\begin{align*}
    \left|
    \frac{\partial}{\partial t}\ell(\eta)
    \right|
    \leq
    |\tilde{x}_j|
    .
\end{align*}
By Assumption~\ref{ass:basic}(A1), we have $\mathbb{E}[|\tilde{x}_j|] \leq \{\mathbb{E}[\tilde{x}_j^2]\}^{1/2} < \infty$.
Therefore, by the uniqueness of the maximizer from Assumption~\ref{ass:basic}(A4), using Danskin's theorem, we obtain
\begin{align*}
    g_c'(t)
    &=
    \frac{\partial}{\partial t}\mathcal{L}_c(\theta_0,t,\theta_{-j})
    \Big|_{(\theta_0,\theta_{-j})
          =(\theta^{*}_0(t),\theta^{*}_{-j}(t))}
    \\
    &=
    \mathbb{E}_{(x,z)}\left[
    (\pi_c(x,z)-\mu(t,x))\tilde{x}_j
    \right]
    .
\end{align*}

We then prove (ii).
Fix $t_1,~t_2\in\mathbb{R}$ and $\lambda\in[0,1]$.
Let $(\theta^{*}_{0,r},\theta^{*}_{-j,r}) \in \mathbb{R} \times \mathbb{R}^{d-2}$ be a maximizer of $\mathcal{L}_c(\cdot,t_r,\cdot)$ for $r=1,~2$.
Using the concavity of $\mathcal{L}_{c}$, we have
\begin{align*}
    &g_c(\lambda t_1+(1-\lambda)t_2)
    \\
    &\quad=
    \sup_{(\theta_{0},\theta_{-j}) \in \mathbb{R} \times \mathbb{R}^{d-2}} \mathcal{L}_c(\theta_{0}, \lambda t_1+(1-\lambda)t_2, \theta_{-j})
    \\
    &\quad\geq
    \mathcal{L}_{c}(\lambda \theta^{*}_{0,1}+(1-\lambda) \theta^{*}_{0,2},~\lambda t_1+(1-\lambda)t_2,~\lambda \theta^{*}_{-j,1}+(1-\lambda) \theta^{*}_{-j,2})
    \\
    &\quad\geq
    \lambda \mathcal{L}_{c}(\theta^{*}_{0,1},~t_1,~\theta^{*}_{-j,1})
    +
    (1 - \lambda) \mathcal{L}_{c}(\theta^{*}_{0,2},~t_2,~\theta^{*}_{-j,2})
    \\
    &\quad=
    \lambda g_c(t_1)+(1-\lambda)g_c(t_2)
    .
\end{align*}
Thus $g_c (t)$ is concave.
\end{proof}

The next lemma simplifies the derivative at $t=0$.

\begin{lemma}
\label{lem:mu-zero}
Suppose that Assumption~\ref{ass:basic} holds.
Then, we have
\begin{align*}
    \mathbb{E}_{(x,z)}\left[(\pi_c(x,z)-\mu(0,x)) \tilde{x}_j\right]
    =
    \mathbb{E}_{(x,z)}\left[\pi_c(x,z) \tilde{x}_j\right]
    .
\end{align*}
\end{lemma}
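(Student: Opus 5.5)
At $t=0$, the fitted predictor $\eta(0,x)=\theta_0^*(0)+\theta_{-j}^*(0)^\top\tilde{x}_{-j}$ does not depend on $\tilde{x}_j$. The plan is to show that $\mathbb{E}_{(x,z)}[\mu(0,x)\tilde{x}_j]=0$. Subtracting this from the left-hand side then leaves exactly $\mathbb{E}_{(x,z)}[\pi_c(x,z)\tilde{x}_j]$.

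First, linearity of expectation will be justified. Both $\pi_c$ and $\mu$ lie in $[0,1]$, so $|\pi_c\tilde{x}_j|\le|\tilde{x}_j|$ and $|\mu(0,x)\tilde{x}_j|\le|\tilde{x}_j|$. Under (A1), $\mathbb{E}|\tilde{x}_j|\le\{\mathbb{E}\tilde{x}_j^2\}^{1/2}<\infty$, so both expectations exist and the left-hand side splits as
\begin{align*}
\mathbb{E}_{(x,z)}[\pi_c(x,z)\tilde{x}_j]-\mathbb{E}_{(x,z)}[\mu(0,x)\tilde{x}_j].
\end{align*}

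Next, the conditioning step. By (A4) the maximizer $(\theta_0^*(0),\theta_{-j}^*(0))$ is a fixed deterministic vector. Hence $\mu(0,x)=F(\theta_0^*(0)+\theta_{-j}^*(0)^\top\tilde{x}_{-j})$ is a bounded measurable function of $\tilde{x}_{-j}$ alone, and in particular it is $\sigma(\tilde{x}_{-j},\tilde{z}_{-j})$-measurable. The tower property together with (A3) gives
\begin{align*}
\mathbb{E}[\mu(0,x)\tilde{x}_j]=\mathbb{E}\bigl[\mu(0,x)\,\mathbb{E}[\tilde{x}_j\mid\tilde{x}_{-j},\tilde{z}_{-j}]\bigr]=0.
\end{align*}
Pulling $\mu(0,x)$ out of the conditional expectation is legitimate because it is bounded and $\tilde{x}_j$ is integrable.

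The only point needing care is measurability: $\mu(0,\cdot)$ must not depend on $\tilde{x}_j$. This holds precisely because the slope is fixed at $t=0$ and the profiled nuisance parameters are constants rather than data-dependent quantities. Beyond that, the argument is routine.
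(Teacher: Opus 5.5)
Your proposal is correct and takes the same route as the paper: you observe that $\mu(0,x)$ depends on $x$ only through $\tilde{x}_{-j}$, then use the tower property with (A3) to show $\mathbb{E}_{(x,z)}[\mu(0,x)\tilde{x}_j]=0$. Your explicit integrability check via (A1) and the boundedness of $\pi_c$ and $\mu$, which justifies splitting the expectation, is a small detail the paper leaves implicit.
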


\begin{proof}[Proof of Lemma~\ref{lem:mu-zero}.]
At $t=0$, we have
$
    \mu(0,x)
    = F(\theta^{*}_0(0)+\theta^{*}_{-j}(0)^{\top}\tilde{x}_{-j})
    =: \tilde{\mu}(\tilde{x}_{-j})
$, which depends on $x$ only through $\tilde{x}_{-j}$.
Thus, using (A3),
\begin{align*}
    \mathbb{E}_{(x,z)}[\mu(0,x)\tilde{x}_j]
    &= \mathbb{E}_{(\tilde{x}_{-j},\tilde{z}_{-j})}\left[
       \tilde{\mu}(\tilde{x}_{-j})\mathbb{E}[\tilde{x}_j\mid \tilde{x}_{-j},\tilde{z}_{-j}]
     \right]
    = 0
    .
\end{align*}
Therefore, we have
\begin{align*}
    \mathbb{E}_{(x,z)}\left[(\pi_c(x,z)-\mu(0,x)) \tilde{x}_j\right]
    =
    \mathbb{E}_{(x,z)}\left[\pi_c(x,z) \tilde{x}_j\right]
    .
\end{align*}
\end{proof}

Finally, we investigate the behavior of $f(c):=\mathbb{E}_{(x,z)}[\pi_c(x,z) \tilde{x}_j]$.

\begin{lemma}
\label{lem:monotone-limit}
Suppose that Assumption~\ref{ass:basic} holds.
Then, we have
\begin{enumerate}
\item[(i)]
For every $c<0$, 
$
    f'(c)
    = \mathbb{E}_{(x,z)}[
        F(\beta^{\top}x)~F'(\gamma^{\top}z) \tilde{x}_j^2
    ] > 0
    .
$
\item[(ii)]
$
    \lim_{c\to-\infty} f(c)
    =
    \mathbb{E}_{(x,z)}[
        F(\beta^{\top}x)~\tilde{x}_j~\mathbf{1}\{\tilde{x}_j<0\}
    ] < 0
    .
$
\end{enumerate}
\end{lemma}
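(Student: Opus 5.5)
We prove the two parts of Lemma~\ref{lem:monotone-limit} separately. Part (i) is differentiation under the expectation; part (ii) is dominated convergence. Throughout we use $\pi_c(x,z)=F(\beta^{\top}x)F(\gamma^{\top}z)$, where $\gamma^{\top}z=\gamma_0+c\tilde{x}_j+\gamma_{-j}^{\top}\tilde{z}_{-j}$ because $\tilde{z}_j=\tilde{x}_j$. Hence $\partial(\gamma^{\top}z)/\partial c=\tilde{x}_j$, and the $c$-dependence enters only through $F(\gamma^{\top}z)$.

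\emph{Part (i).} Pointwise,
\begin{align*}
    \frac{\partial}{\partial c}\{\pi_c(x,z)\tilde{x}_j\}
    =
    F(\beta^{\top}x)F'(\gamma^{\top}z)\tilde{x}_j^2 .
\end{align*}
Since $0\le F\le 1$ and $0\le F'\le 1/4$, this derivative is bounded by $\tilde{x}_j^2/4$ uniformly in $c$. By (A1), $\mathbb{E}[\tilde{x}_j^2]<\infty$. The same integrable envelope $|\tilde{x}_j|$ shows that $f(c)$ itself is finite. Therefore the dominated-derivative theorem (via the mean value theorem) lets us differentiate under the expectation, which gives the stated formula for $f'(c)$.

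For strict positivity, note that $F(\beta^{\top}x)>0$ and $F'(\gamma^{\top}z)>0$ everywhere, so the integrand is strictly positive on $\{\tilde{x}_j\neq 0\}$. By (A2) this event has positive probability. Hence $f'(c)>0$.

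\emph{Part (ii).} Take $c\to-\infty$. If $\tilde{x}_j>0$, then $\gamma^{\top}z\to-\infty$, so $F(\gamma^{\top}z)\to 0$. If $\tilde{x}_j<0$, then $\gamma^{\top}z\to+\infty$, so $F(\gamma^{\top}z)\to 1$. If $\tilde{x}_j=0$, the integrand is zero. So pointwise,
\begin{align*}
    \pi_c(x,z)\tilde{x}_j \;\to\; F(\beta^{\top}x)\,\tilde{x}_j\,\mathbf{1}\{\tilde{x}_j<0\}.
\end{align*}
The integrand is dominated by $|\tilde{x}_j|$, which is integrable by (A1). Dominated convergence then gives the stated limit.

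The limit is strictly negative because its integrand is $\le 0$ everywhere and strictly negative on $\{\tilde{x}_j<0\}$, since $F(\beta^{\top}x)>0$ there. By (A2), $\{\tilde{x}_j<0\}$ has positive probability.

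\emph{Main point of care.} The only real issue is justifying the interchange of limit or derivative with the expectation. The bounds $F\le1$ and $F'\le 1/4$, combined with the second-moment condition (A1), handle both parts. Note also that $a>0$ plays no role in this lemma; it only matters later, when the sign of $f$ is compared with $g_c'(0)$.
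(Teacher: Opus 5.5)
Your proof is correct and follows the same route as the paper. For part (i) you differentiate $\pi_c(x,z)\tilde{x}_j$ pointwise in $c$ and get strict positivity from (A2); for part (ii) you take the pointwise limit as $c\to-\infty$ and use (A2) for the strict sign.

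You are in fact more careful than the paper in two places. First, the paper never justifies exchanging the derivative or the limit with the expectation, and your envelopes $\tilde{x}_j^2/4$ and $|\tilde{x}_j|$ together with (A1) supply exactly that. Second, the paper asserts $F(\gamma^{\top}z)\to\mathbf{1}\{\tilde{x}_j<0\}$, which fails on $\{\tilde{x}_j=0\}$; you correctly note that the integrand vanishes there anyway.
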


\begin{proof}[Proof of Lemma~\ref{lem:monotone-limit}.]
We first prove (i).
Since $\gamma^{\top}z = \gamma_0 + c \tilde{x}_j + \gamma_{-j}^{\top}\tilde{z}_{-j}$, we have $\partial(\gamma^{\top}z)/\partial c = \tilde{x}_j$, and hence
\begin{align*}
    \frac{\partial}{\partial c}\pi_c(x,z)
    =
    F(\beta^{\top}x) F'(\gamma^{\top}z)\tilde{x}_j
    .
\end{align*}
We then obtain
\begin{align*}
    f'(c)
    = \mathbb{E}_{(x,z)}\left[
        \frac{\partial}{\partial c}\{\pi_c(x,z)\tilde{x}_j\}
    \right]
    = \mathbb{E}_{(x,z)}\left[
        F(\beta^{\top}x)F'(\gamma^{\top}z)\tilde{x}_j^2
    \right]
    .
\end{align*}
Since $F'(\cdot)>0$ and $\tilde{x}_j^2 \geq 0$ with
$\mathbb{P}(\tilde{x}_j \neq 0)>0$ by (A2), the expectation is strictly positive and thus we have $f'(c)>0$.

We next prove (ii).
As $c \to -\infty$, we have
\begin{align*}
    \gamma^{\top}z \to
    \begin{cases}
        +\infty, & \text{if}~ \tilde{x}_j<0
        ,
        \\
        -\infty, & \text{if}~\tilde{x}_j>0
        ,
        \\
        \text{finite}, & \text{if}~\tilde{x}_j=0
        ,
    \end{cases}
\end{align*}
and hence $F(\gamma^{\top}z)\to \mathbf{1}\{\tilde{x}_j<0\}$.
We then have
\begin{align*}
    \lim_{c\to-\infty} f(c)
    = \mathbb{E}_{(x,z)}\left[
        F(\beta^{\top}x) \tilde{x}_j\mathbf{1}\{\tilde{x}_j<0\}
    \right]
    .
\end{align*}
For the event $\{\tilde{x}_j<0\}$, the integrand is strictly negative because $F(\beta^{\top}x)>0$ and $\tilde{x}_j<0$.
Therefore, by (A2), we obtain $\lim_{c\to-\infty} f(c) < 0$.
\end{proof}

We now prove Theorem~\ref{thm:sign-flip}.

\begin{proof}[Proof of Theorem~\ref{thm:sign-flip}]
By Lemma~\ref{lem:g_derivative}(i) and Lemma~\ref{lem:mu-zero}, the derivative
of $g_c$ at $t=0$ is
\begin{align*}
    g_c'(0)
    =
    \mathbb{E}_{(x,z)}\left[(\pi_c(x,z)-\mu(0,x)) \tilde{x}_j\right]
    =
    \mathbb{E}_{(x,z)}\left[\pi_c(x,z) \tilde{x}_j\right]
    =
    f(c)
    .
\end{align*}
By Lemma~\ref{lem:monotone-limit}(i), $f$ is strictly increasing in $c$, and
by Lemma~\ref{lem:monotone-limit}(ii), $\lim_{c\to-\infty} f(c) < 0$.
Hence there exists a constant $C_0<0$ such that
\begin{align*}
    c \leq C_0
    \quad
    \Longrightarrow
    \quad
    f(c) = g_c'(0)<0
    .
\end{align*}

Fix $c\leq C_0$.
By Lemma~\ref{lem:g_derivative}(ii), $g_c$ is concave.
Therefore, for every $t>0$, we have $g_c(t) \leq g_c(0)+t g_c'(0) < g_c(0)$.
This implies that no maximizer of $g_c$ can lie in $[0,\infty)$.
By Assumption~\ref{ass:basic}(A4), we have $\arg\max_{t \in \mathbb{R}} g_c(t) \subset (-\infty,0)$.
\end{proof}

\subsection{Proof of Proposition \ref{prop:MLEnotexist}}
\label{apx:proof_mle_notexist}

We prove Proposition \ref{prop:MLEnotexist} as follows.

\begin{proof}[Proof of Proposition~\ref{prop:MLEnotexist}.]
Assume that the data satisfy double separation, so there exist non-zero vectors
$v\in\mathbb{R}^{d}$ and $w\in\mathbb{R}^{p}$ such that
\begin{align*}
    y_i=1
    &\Longrightarrow
    v^{\top}x_i\geq 0
    ,
    \quad
    w^{\top}z_i\geq 0
    ,
    \\
    y_i=0
    &\Longrightarrow
    v^{\top}x_i\leq 0
    ,
    \quad
    w^{\top}z_i\leq 0
    ,
\end{align*}
and at least one observation satisfies a strict inequality in the sense of the definition of double separation.

Fix $(\beta,\gamma)\in\mathbb{R}^{d}\times\mathbb{R}^{p}$, and define $
    \ell_{\beta,\gamma}(t)
    :=
    L(\beta+t v,\gamma+t w)
$ for $t\geq 0$.
For each $i$, let
\begin{align*}
    a_i(t)
    &=
    F\left(\gamma^{\top}z_i+t w^{\top}z_i\right)
    ,
    \\
    b_i(t)
    &=
    F\left(\beta^{\top}x_i+t v^{\top}x_i\right)
    ,
    \\
    g_i(t)
    &=
    a_i(t)b_i(t)
    .
\end{align*}
Because $F'(u)=F(u)\{1-F(u)\}$, we have
\begin{align*}
    g_i'(t)
    =
    g_i(t)
    \left(
        (1-a_i(t)) w^{\top}z_i
        +(1-b_i(t)) v^{\top}x_i
    \right)
    .
\end{align*}
Therefore, we have
\begin{align*}
    \ell_{\beta,\gamma}'(t)
    &=
    \sum_{i=1}^{n}
    \left\{
        \frac{y_i}{g_i(t)}
        -
        \frac{1-y_i}{1-g_i(t)}
    \right\}
    g_i'(t)
    \\
    &=
    \sum_{i=1}^{n}
    \frac{y_i-g_i(t)}{1-g_i(t)}
    \left(
        (1-a_i(t)) w^{\top}z_i
        +(1-b_i(t)) v^{\top}x_i
    \right)
    .
\end{align*}

If $y_i=1$, then
\begin{align*}
    \frac{y_i-g_i(t)}{1-g_i(t)}
    =
    \frac{1-g_i(t)}{1-g_i(t)}
    =
    1
    ,
\end{align*}
and both $w^{\top}z_i \geq 0$ and $v^{\top}x_i \geq 0$.
Hence, we obtain
\begin{align*}
    \frac{y_i-g_i(t)}{1-g_i(t)}
    \left(
        (1-a_i(t)) w^{\top}z_i
        +(1-b_i(t)) v^{\top}x_i
    \right) \geq 0
    .
\end{align*}
If $y_i=0$, then
\begin{align*}
    \frac{y_i-g_i(t)}{1-g_i(t)}
    =
    -\frac{g_i(t)}{1-g_i(t)}
    <
    0
    ,
\end{align*}
while both $w^{\top}z_i \leq 0$ and $v^{\top}x_i \leq 0$.
Hence, we again obtain
\begin{align*}
    \frac{y_i-g_i(t)}{1-g_i(t)}
    \left(
        (1-a_i(t)) w^{\top}z_i
        +(1-b_i(t)) v^{\top}x_i
    \right)
    \geq
    0
    .
\end{align*}
Moreover, because at least one observation satisfies a strict inequality, we have
\begin{align*}
    \frac{y_i-g_i(t)}{1-g_i(t)}
    \left(
        (1-a_i(t)) w^{\top}z_i
        +(1-b_i(t)) v^{\top}x_i
    \right)
    >
    0
    ,
\end{align*}
for $t\geq 0$, for some $i$.
Indeed, if $y_i=1$ and either $w^{\top}z_i>0$ or $v^{\top}x_i>0$, then
\begin{align*}
    (1-a_i(t)) w^{\top}z_i
    +(1-b_i(t)) v^{\top}x_i
    >
    0
    ,
\end{align*}
because $0<a_i(t),b_i(t)<1$.
Similarly, if $y_i=0$ and either $w^{\top}z_i<0$ or $v^{\top}x_i<0$, then
\begin{align*}
  (1-a_i(t)) w^{\top}z_i
  +(1-b_i(t)) v^{\top}x_i
  <
  0
  .
\end{align*}
Therefore,
\begin{align*}
    \ell_{\beta,\gamma}'(t)>0
    \quad
    \text{for all}~
    t\geq 0
    .
\end{align*}

Since $(\beta,\gamma)$ was arbitrary, every finite parameter point can be improved by moving a small positive amount along the direction $(v,w)$.
Therefore, no finite point can be a maximizer of $L(\beta,\gamma)$.
Therefore, the log-likelihood has no maximizer in $\mathbb{R}^{d}\times\mathbb{R}^{p}$.
\end{proof}

\subsection{Proof of Proposition \ref{prop:MLEexist}}
\label{apx:proof_mle_exist}

We prove Proposition \ref{prop:MLEexist} as follows.

\begin{proof}[Proof of Proposition~\ref{prop:MLEexist}.]
Let $(\beta,\gamma)=t(v,w)$, $t:=\sqrt{\|\beta\|^2+\|\gamma\|^2}\ge 0$, and $\|v\|^2+\|w\|^2=1$.
Because
\begin{align*}
    F(u) &\leq e^u
    \quad \text{and} \quad
    1-F(u) \leq e^{-u}
    ,
\end{align*}
we have
\begin{align}
    \label{eq:tail_bounds_exist}
    F(u_1)F(u_2) \leq e^{u_1+u_2},
    \quad \text{and} \quad
    1-F(u_1)F(u_2) \leq e^{-u_1}+e^{-u_2}
    ,
\end{align}
for all $u_1,u_2\in\mathbb{R}$.

By $\varepsilon$--double--non-separation, for each unit direction $(v,w)$ at least one of the following holds:
\begin{enumerate}
    \item[(a)] there exists $i\in \{i:y_i=1\}$ such that $v^\top x_i+w^\top z_i\leq -\varepsilon$;
    \item[(b)] there exists $i\in \{i:y_i=0\}$ such that $\min\{v^\top x_i,w^\top z_i\}\geq \varepsilon$.
\end{enumerate}

If (a) holds, then for $i\in \{i:y_i=1\}$,
\begin{align*}
    \log F(t v^\top x_i)+\log F(t w^\top z_i)
    &=
    \log\{F(t v^\top x_i)F(t w^\top z_i)\}
    \\
    &\leq
    t(v^\top x_i+w^\top z_i)
    \leq
    -\varepsilon t
    .
\end{align*}
Therefore, we have
\begin{align*}
    L(tv,tw)
    \leq
    -\varepsilon t
    .
\end{align*}
Next, if (b) holds, then for $i\in \{i:y_i=0\}$,
\begin{align*}
    \log(1-F(t v^\top x_i)F(t w^\top z_i))
    &\leq
    \log(e^{-t v^\top x_i}+e^{-t w^\top z_i})
    \\
    &\leq
    \log 2-\varepsilon t
    .
\end{align*}
Therefore, we have
\begin{align*}
    L(tv,tw)
    \leq
    \log 2 
    -\varepsilon t
    .
\end{align*}
Consequently, we obtain
\begin{align}
    \label{eq:coercive_L}
    \sup_{(v,w):\|v\|^2+\|w\|^2=1}L(tv,tw)
    \leq \log 2-\varepsilon t \to -\infty
    \quad \text{as} \quad
    t \to \infty
    .
\end{align}

Since $\log 2 - \varepsilon t \to -\infty$ as $t \to \infty$, there exists a sufficiently large radius $t_0 > 0$ such that $\log 2 - \varepsilon t_0 < L(0, 0)$.
We then define the closed ball 
\begin{align*}
    \mathcal{B}_{t_0}
    =
    \{ (\beta^{\top},\gamma^{\top})^{\top} \in \mathbb{R}^{d+p} : \|(\beta^{\top},\gamma^{\top})^{\top}\| \leq t_0 \}
    .
\end{align*}
As $\mathcal{B}_{t_0}$ is a compact set and $L$ is continuous, $L$ attains its maximum at some point $(\hat{\beta},\hat{\gamma})^{\top} \in \mathcal{B}_{t_0}$. 
Furthermore, for any $(\beta^{\top},\gamma^{\top})^{\top}$ outside this ball, by \eqref{eq:coercive_L}, we have
\begin{align*}
    L(\beta,\gamma) \leq \log 2 - \varepsilon \|(\beta^{\top},\gamma^{\top})^{\top}\|
    <
    \log 2 - \varepsilon t_0 < L(0, 0)
    .
\end{align*}
Since $L(0, 0) \leq L(\hat{\beta},\hat{\gamma})$, it follows that $L(\beta,\gamma) < L(\hat{\beta},\hat{\gamma})$ for all $(\beta^{\top},\gamma^{\top})^{\top} \notin \mathcal{B}_{t_0}$. 
Therefore, $(\hat{\beta},\hat{\gamma})^{\top}$ is the global maximizer of $L({\beta},{\gamma})$.
\end{proof}

\subsection{Proofs of Proposition~\ref{prop:two_mode_c1}, Theorem~\ref{thm:two_mode}, and Corollary~\ref{cor:quotient_mle}} 
\label{apx:proof_two_mode}

We first remark a standard linear-independence property of exponential functions.

\begin{lemma}
\label{lem:exp_linear_indep}
Let $\lambda_1,\ldots,\lambda_m \in \mathbb{R}^{d-1}$ be distinct vectors, and let $\mathcal{U}\subset\mathbb{R}^{d-1}$ contain a nonempty open set.
If
\begin{align*}
    \sum_{k=1}^{m} a_k \exp\left(\lambda_k^\top w\right)
    =
    0
    ,
\end{align*}
for all $w \in \mathcal{U}$, then $a_1=\cdots=a_m=0$.
\end{lemma}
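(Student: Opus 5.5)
The plan is to reduce to a single real variable by restricting to a line and then invoke the uniqueness theorem for real-analytic functions. Suppose $\sum_{k=1}^m a_k \exp(\lambda_k^\top w)=0$ on $\mathcal{U}$, and write $\phi(w):=\sum_k a_k\exp(\lambda_k^\top w)$. This function is real analytic, and indeed entire, on $\mathbb{R}^{d-1}$. It vanishes on a nonempty open set $V\subset\mathcal{U}$, and $\mathbb{R}^{d-1}$ is connected, so the identity theorem for real-analytic functions gives $\phi\equiv 0$ on all of $\mathbb{R}^{d-1}$. If one wants to avoid the multivariate identity theorem, a more elementary route is to fix $w_0\in V$ and a direction $u$ and restrict to the line $w=w_0+su$. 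Then $\psi(s):=\phi(w_0+su)=\sum_k a_k e^{\lambda_k^\top w_0} e^{s\,\lambda_k^\top u}$ is an entire function of one variable that vanishes for $s$ in an interval around $0$. All its derivatives at $s=0$ are therefore zero, and since it equals its power series on $\mathbb{R}$, it vanishes identically for all $s\in\mathbb{R}$.

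The next step is to choose $u$ so that the scalars $\mu_k:=\lambda_k^\top u$ are pairwise distinct. Because the $\lambda_k$ are distinct, each difference $\lambda_k-\lambda_l$ with $k\neq l$ is nonzero. The set of $u$ orthogonal to some difference is a finite union of hyperplanes, so a generic $u$ works. With this choice, $\psi(s)=\sum_k c_k e^{\mu_k s}\equiv 0$, where $c_k:=a_k e^{\lambda_k^\top w_0}$ and the $\mu_k$ are distinct.

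To finish, order the exponents so that $\mu_1>\mu_2>\cdots>\mu_m$. Multiplying by $e^{-\mu_1 s}$ and letting $s\to\infty$ leaves $c_1=\lim_{s\to\infty}\sum_k c_k e^{(\mu_k-\mu_1)s}=0$. Induction on $m$ then gives $c_k=0$ for all $k$; equivalently, the Vandermonde matrix $(\mu_k^j)$ built from the derivatives at $0$ is invertible. Since $e^{\lambda_k^\top w_0}>0$, it follows that $a_k=0$ for every $k$.

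There is no real obstacle here. The only point that needs care is passing from vanishing on an open set to vanishing along a full line, or in all of $\mathbb{R}^{d-1}$, which is exactly where analyticity is used, together with the generic choice of $u$ that keeps the projected exponents distinct.
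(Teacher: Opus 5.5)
Your proof is correct and takes essentially the same route as the paper: pick $w_0$ in an open subset of $\mathcal{U}$, choose a direction $u$ outside the finitely many hyperplanes $(\lambda_k-\lambda_\ell)^\top u=0$ so that the projected exponents are distinct, and reduce to the linear independence of one-variable exponentials with distinct rates on an interval. The only difference is that you prove that one-dimensional fact, via the Vandermonde system from derivatives at $s=0$ or via analytic continuation followed by the $s\to\infty$ limit, whereas the paper cites it as standard; your multivariate identity-theorem step is correct but not needed.
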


\begin{proof}[Proof of Lemma~\ref{lem:exp_linear_indep}.]
Choose $w_0$ in the interior of $\mathcal{U}$.
Since the hyperplanes
\begin{align*}
    \left\{
        v\in\mathbb{R}^{d-1}:
        (\lambda_k-\lambda_\ell)^\top v=0
        ,~
        k\neq \ell
        ,~
        k,\ell \in \{1,\ldots,m\}
    \right\}
    ,
\end{align*}
do not cover $\mathbb{R}^{d-1}$, there exists $v \in \mathbb{R}^{d-1}$ such that $\lambda_1^\top v,~ \ldots,~ \lambda_m^\top v$ are pairwise distinct.
For all sufficiently small $t$, we have $w_0+t v \in \mathcal{U}$, and hence
\begin{align*}
    0
    =
    \sum_{k=1}^{m} a_k \exp\left(\lambda_k^\top(w_0+t v)\right)
    =
    \sum_{k=1}^{m} a_k \exp\left(\lambda_k^\top w_0\right)\exp\left((\lambda_k^\top v)t\right)
    .
\end{align*}
Since one-dimensional exponential functions with distinct exponents are linearly independent on any open interval, we obtain
\begin{align*}
    a_k \exp\left(\lambda_k^\top w_0\right)
    =
    0
    ,
\end{align*}
for $k=1,\ldots,m$.
Therefore, we have $a_k=0$ for $k=1,\ldots,m$.
\end{proof}

We now prove Proposition~\ref{prop:two_mode_c1}.

\begin{proof}[Proof of Proposition~\ref{prop:two_mode_c1}]
Because \eqref{eq:equiv_prob} holds for all $y\in\{0,1\}$, it is equivalent to
\begin{align*}
    F\left(x^\top\beta_1\right)F\left(x^\top\gamma_1\right)
    =
    F\left(x^\top\beta_2\right)F\left(x^\top\gamma_2\right)
    ,
\end{align*}
for all $x=(1,\tilde{x}_{-0}^\top)^\top$ with $\tilde{x}_{-0}\in\mathcal{U}$.
Using $F(\mu)=1/(1+\exp(-\mu))$, we obtain
\begin{align*}
    &
    \left\{1+\exp(-\beta_{1,0})\exp\left(-\beta_{1,-0}^\top \tilde{x}_{-0}\right)\right\}
    \left\{1+\exp(-\gamma_{1,0})\exp\left(-\gamma_{1,-0}^\top \tilde{x}_{-0}\right)\right\}
    \\
    &\quad=
    \left\{1+\exp(-\beta_{2,0})\exp\left(-\beta_{2,-0}^\top \tilde{x}_{-0}\right)\right\}
    \left\{1+\exp(-\gamma_{2,0})\exp\left(-\gamma_{2,-0}^\top \tilde{x}_{-0}\right)\right\}
    ,
\end{align*}
for all $\tilde{x}_{-0}\in\mathcal{U}$.
Therefore, we have
\begin{align}
    \label{eq:expansion_ident}
    &
    \exp(-\beta_{1,0})\exp(-\beta_{1,-0}^\top \tilde{x}_{-0})
    +\exp(-\gamma_{1,0})\exp(-\gamma_{1,-0}^\top \tilde{x}_{-0})
    \nonumber
    \\
    &\quad
    +\exp(-\beta_{1,0}-\gamma_{1,0})\exp(-(\beta_{1,-0}+\gamma_{1,-0})^\top \tilde{x}_{-0})
    \nonumber
    \\
    &=
    \exp(-\beta_{2,0})\exp(-\beta_{2,-0}^\top \tilde{x}_{-0})
    +\exp(-\gamma_{2,0})\exp(-\gamma_{2,-0}^\top \tilde{x}_{-0})
    \nonumber
    \\
    &\quad
    +\exp(-\beta_{2,0}-\gamma_{2,0})\exp(-(\beta_{2,-0}+\gamma_{2,-0})^\top \tilde{x}_{-0})
\end{align}
for all $\tilde{x}_{-0} \in \mathcal{U}$.
Since $\beta_{1,-0},~ \gamma_{1,-0},~ \beta_{1,-0} + \gamma_{1,-0}$ are pairwise distinct, the left-hand side of \eqref{eq:expansion_ident} is a linear combination of three distinct exponential functions and can be written as
\begin{align*}
    \sum_{j=1}^{3} \exp(c_{1,j}) \exp\left(-\lambda_{1,j}^\top \tilde{x}_{-0}\right)
    ,
\end{align*}
where $\{\lambda_{1,1},~\lambda_{1,2},~\lambda_{1,3}\} = \{\beta_{1,-0},~ \gamma_{1,-0},~ \beta_{1,-0} + \gamma_{1,-0}\}$ and $\{c_{1,1},~c_{1,2},~c_{1,3}\} = \{-\beta_{1,0},~ -\gamma_{1,0},~ -\beta_{1,0}-\gamma_{1,0}\}$.
The right-hand side can be written as
\begin{align*}
    \sum_{j=1}^{m} \exp({c_{2,j}}) \exp\left(-\lambda_{2,j}^\top \tilde{x}_{-0}\right)
    ,
    \quad
    \text{for}
    \quad
    1 \leq m \leq 3
    ,
\end{align*}
where $\lambda_{2,1},\ldots,\lambda_{2,m}$ are distinct.
Therefore, we have
\begin{align*}
    \sum_{j=1}^{3} \exp(c_{1,j}) \exp\left(-\lambda_{1,j}^\top \tilde{x}_{-0}\right) - \sum_{j=1}^{m} \exp({c_{2,j}}) \exp\left(-\lambda_{2,j}^\top \tilde{x}_{-0}\right)
    =
    0
    .
\end{align*}
By Lemma~\ref{lem:exp_linear_indep}, if the sets of vectors $\{\lambda_{1,1},~ \lambda_{1,2},~ \lambda_{1,3}\}$ and $\{\lambda_{2,1},~ \ldots,~ \lambda_{2,m}\}$ are not identical, it implies that at least one coefficient in the combined linear combination, which is either $\exp(c_{1,j})$ or $-\exp(c_{2,k})$, must be zero. 
However, this is a contradiction because the exponential function is strictly positive.
Therefore, we obtain $m=3$ and
\begin{align*}
    \left\{
        \beta_{1,-0},
        \gamma_{1,-0},
        \beta_{1,-0}+\gamma_{1,-0}
    \right\}
    =
    \left\{
        \beta_{2,-0},
        \gamma_{2,-0},
        \beta_{2,-0}+\gamma_{2,-0}
    \right\}
    .
\end{align*}

Since $\beta_{1,-0},~ \gamma_{1,-0}$, and $\beta_{1,-0}+\gamma_{1,-0}$ are pairwise distinct\footnote{This condition implies that both $\beta_{1,-0}$ and $\gamma_{1,-0}$ are non-zero, and ensures that neither can be expressed as the sum of the other two elements in the set.}, $\beta_{1,-0}+\gamma_{1,-0}$ is the unique element in the set $\{\beta_{1,-0},~ \gamma_{1,-0},~ \beta_{1,-0}+\gamma_{1,-0}\}$ that can be expressed as the sum of the other two elements.
Because the sets $\{\beta_{1,-0},~ \gamma_{1,-0},~ \beta_{1,-0}+\gamma_{1,-0}\}$ and $\{\beta_{2,-0},~ \gamma_{2,-0},~ \beta_{2,-0}+\gamma_{2,-0}\}$ are identical, their unique sum elements must be equal.
Therefore, we have
\begin{align*}
    \beta_{1,-0}+\gamma_{1,-0}
    =
    \beta_{2,-0}+\gamma_{2,-0}
    ,
\end{align*}
which implies that the sets of the remaining elements are also identical:
\begin{align*}
    \{\beta_{1,-0},~ \gamma_{1,-0}\}
    =
    \{\beta_{2,-0},~ \gamma_{2,-0}\}
    .
\end{align*}
If $\beta_{1,-0}=\beta_{2,-0}$ and $\gamma_{1,-0}=\gamma_{2,-0}$, then we have $\exp(-\beta_{1,0})=\exp(-\beta_{2,0})$ and $\exp(-\gamma_{1,0})=\exp(-\gamma_{2,0})$, and hence $\beta_1=\beta_2$ and $\gamma_1=\gamma_2$.
Instead, if $\beta_{1,-0}=\gamma_{2,-0}$ and $\gamma_{1,-0}=\beta_{2,-0}$, then we have $\exp(-\beta_{1,0})=\exp(-\gamma_{2,0})$ and $\exp(-\gamma_{1,0})=\exp(-\beta_{2,0})$, and hence $\beta_1=\gamma_2$ and $\gamma_1=\beta_2$.
Therefore, we have either $(\beta_1,~ \gamma_1)=(\beta_2,~ \gamma_2)$ or $(\beta_1,~ \gamma_1)=(\gamma_2,~ \beta_2)$.
\end{proof}

We next prove Theorem~\ref{thm:two_mode}.

\begin{proof}[Proof of Theorem~\ref{thm:two_mode}]
Fix $\xi \in \mathcal{S}$.
For $j=1,2$, define $\beta_{j,0}^{(\xi)} := \beta_{j,0}+\beta_{j,-0}^{(2)\top}\xi$ and $\gamma_{j,0}^{(\xi)} := \gamma_{j,0}+\gamma_{j,-0}^{(2)\top}\xi$.
Then, for all $y \in \{0,1\}$ and all $\tilde{x}_{-0}^{(1)} \in \mathcal{U}_{\xi}$, \eqref{eq:equiv_prob} can be expressed as
\begin{align*}
    &p\left(
        y
        \mid
        (1,\tilde{x}_{-0}^{(1)\top})^\top,
        (\beta_{1,0}^{(\xi)},\beta_{1,-0}^{(1)\top})^\top,
        (\gamma_{1,0}^{(\xi)},\gamma_{1,-0}^{(1)\top})^\top
    \right)
    \\
    &\quad
    =
    p\left(
        y
        \mid
        (1,\tilde{x}_{-0}^{(1)\top})^\top,
        (\beta_{2,0}^{(\xi)},\beta_{2,-0}^{(1)\top})^\top,
        (\gamma_{2,0}^{(\xi)},\gamma_{2,-0}^{(1)\top})^\top
    \right)
    .
\end{align*}
Because $\mathcal{U}_{\xi}$ is a nonempty open subset of $\mathbb{R}^{r}$ and $\beta_{1,-0}^{(1)},~ \gamma_{1,-0}^{(1)},~ \beta_{1,-0}^{(1)}+\gamma_{1,-0}^{(1)}$ are pairwise distinct, Proposition~\ref{prop:two_mode_c1} yields that, for each $\xi \in \mathcal{S}$, we have either
\begin{align}
    \label{eq:direct_xi}
    (\beta_{1,-0}^{(1)},~\gamma_{1,-0}^{(1)})=(\beta_{2,-0}^{(1)},~\gamma_{2,-0}^{(1)})
    \quad
    \text{and}
    \quad
    (\beta_{1,0}^{(\xi)},~\gamma_{1,0}^{(\xi)})=(\beta_{2,0}^{(\xi)},~\gamma_{2,0}^{(\xi)})
    ,
\end{align}
or
\begin{align}
    \label{eq:swap_xi}
    (\beta_{1,-0}^{(1)},~\gamma_{1,-0}^{(1)})=(\gamma_{2,-0}^{(1)},~\beta_{2,-0}^{(1)})
    \quad
    \text{and}
    \quad
    (\beta_{1,0}^{(\xi)},~\gamma_{1,0}^{(\xi)})=(\gamma_{2,0}^{(\xi)},~\beta_{2,0}^{(\xi)})
    .
\end{align}
We claim that the same set of equations must hold for all $\xi \in \mathcal{S}$.
Indeed, if \eqref{eq:direct_xi} holds for some $\xi \in \mathcal{S}$ and \eqref{eq:swap_xi} holds for some $\xi' \in \mathcal{S}$, then $\beta_{1,-0}^{(1)} = \beta_{2,-0}^{(1)} = \gamma_{1,-0}^{(1)}$, which contradicts $\beta_{1,-0}^{(1)} \neq \gamma_{1,-0}^{(1)}$.
Therefore, either \eqref{eq:direct_xi} holds for all $\xi \in \mathcal{S}$, or \eqref{eq:swap_xi} holds for all $\xi \in \mathcal{S}$.

First, suppose that \eqref{eq:direct_xi} holds for all $\xi \in \mathcal{S}$.
Then, we have
\begin{align*}
    \beta_{1,0}+\beta_{1,-0}^{(2)\top}\xi
    &=
    \beta_{2,0}+\beta_{2,-0}^{(2)\top}\xi
    ,
    \\
    \quad
    \gamma_{1,0}+\gamma_{1,-0}^{(2)\top}\xi
    &=
    \gamma_{2,0}+\gamma_{2,-0}^{(2)\top}\xi
    ,
\end{align*}
for all $\xi \in \mathcal{S}$.
Hence, both of the following functions:
\begin{align*}
    \xi
    &\mapsto
    \beta_{1,0}+\beta_{1,-0}^{(2)\top}\xi
    -
    \beta_{2,0}-\beta_{2,-0}^{(2)\top}\xi
    ,
    \\
    \xi
    &\mapsto
    \gamma_{1,0}+\gamma_{1,-0}^{(2)\top}\xi
    -
    \gamma_{2,0}-\gamma_{2,-0}^{(2)\top}\xi
    ,
\end{align*}
are identically zero on $\mathcal{S}$.
Since $\operatorname{aff}(\mathcal{S})=\mathbb{R}^{s}$, both functions are also identically zero on $\mathbb{R}^{s}$.
Therefore, we have
$\beta_{1,0}=\beta_{2,0}$, $\beta_{1,-0}^{(2)}=\beta_{2,-0}^{(2)}$, $\gamma_{1,0}=\gamma_{2,0}$, and $\gamma_{1,-0}^{(2)}=\gamma_{2,-0}^{(2)}$.
Combining \eqref{eq:direct_xi} with this, we obtain $(\beta_1,\gamma_1)=(\beta_2,\gamma_2)$.

Next, suppose that \eqref{eq:swap_xi} holds for all $\xi \in \mathcal{S}$.
Then, by a similar argument, we obtain $(\beta_1,\gamma_1)=(\gamma_2,\beta_2)$.

Therefore, we have either $(\beta_1,\gamma_1)=(\beta_2,\gamma_2)$ or $(\beta_1,\gamma_1)=(\gamma_2,\beta_2)$.
\end{proof}

We finally prove Corollary~\ref{cor:quotient_mle}.

\begin{proof}[Proof of Corollary~\ref{cor:quotient_mle}.]
Let $\pi_{\beta,\gamma}(x):=F(x^\top\beta)F(x^\top\gamma)$ and $\pi^\ast(x):=\pi_{\beta^\ast,\gamma^\ast}(x)$.
Under correct specification, we have
\begin{align*}
    y\mid x
    \sim
    \operatorname{Bernoulli}\!\left(\pi^\ast(x)\right)
    .
\end{align*}
Therefore, the expected log-likelihood function has the following expression:
\begin{align*}
    \mathcal{L}(\beta,~ \gamma)
    &=
    \mathbb{E}_{x}\left[
        \pi^\ast(x)\log \pi_{\beta,\gamma}(x)
        +\{1-\pi^\ast(x)\}\log\{1-\pi_{\beta,\gamma}(x)\}
    \right]
    ,
\end{align*}
and, hence, we have
\begin{align*}
    &\mathcal{L}(\beta,~ \gamma)-\mathcal{L}(\beta^\ast,~ \gamma^\ast)
    \\
    &\quad=
    \mathbb{E}_{x}\left[
        \pi^\ast(x)\log \pi_{\beta,\gamma}(x)
        +\{1-\pi^\ast(x)\}\log\{1-\pi_{\beta,\gamma}(x)\}
    \right]
    \\
    &\qquad
    -
    \mathbb{E}_{x}\left[
        \pi^\ast(x)\log \pi^\ast(x)
        +\{1-\pi^\ast(x)\}\log\{1-\pi^\ast(x)\}
    \right]
    \\
    &\quad=
    -\mathbb{E}_{x}\left[
        \pi^\ast(x)\log\frac{\pi^\ast(x)}{\pi_{\beta,\gamma}(x)}
        +\{1-\pi^\ast(x)\}\log\frac{1-\pi^\ast(x)}{1-\pi_{\beta,\gamma}(x)}
    \right]
    \\
    &\quad=
    -\mathbb{E}_{x}\left[
        \mathrm{KL}\left(
            \operatorname{Bernoulli}~\left(\pi^\ast(x)\right)
            ~\middle\|~
            \operatorname{Bernoulli}~\left(\pi_{\beta,\gamma}(x)\right)
        \right)
    \right]
    \leq
    0
    ,
\end{align*}
for any $(\beta,\gamma) \in \mathbb{R}^d\times\mathbb{R}^d$.
Here, $\mathrm{KL}(\cdot \| \cdot)$ denotes the Kullback-Leibler divergence from the first probability distribution to the second.
Thus $[\beta^\ast,~\gamma^\ast]$ is a maximizer on $(\mathbb{R}^d\times\mathbb{R}^d)/{\sim}$.

Now suppose that $(\beta^{\dagger},\gamma^{\dagger}) \in \mathbb{R}^d\times\mathbb{R}^d$ also attains the maximum.
Then, it must hold that
\begin{align*}
    \mathrm{KL}\left(
        \operatorname{Bernoulli}~\left(\pi^\ast(x)\right)
        ~\middle\|~
        \operatorname{Bernoulli}~\left(\pi_{\beta^{\dagger},\gamma^{\dagger}}(x)\right)
    \right)
    =
    0
    ,
    \quad
    \text{almost surely.}
\end{align*}
Therefore, we obtain
\begin{align}
    \label{eq:equality_pi}
    \pi_{\beta^{\dagger},\gamma^{\dagger}}(x)
    =
    \pi^\ast(x)
    ,
    \quad
    \text{for almost every $x$.}
\end{align}
Let $x=(1,\tilde{x}_{-0}^{(1)\top},\tilde{x}_{-0}^{(2)\top})^\top$.
Fix $\xi \in \mathcal{S}$ and define
\begin{align*}
    g_{\dagger,\xi}(\tilde{x}_{-0}^{(1)})
    &:=
    \pi_{\beta^{\dagger},\gamma^{\dagger}} \left((1,\tilde{x}_{-0}^{(1)\top},\xi^\top)^\top\right)
    ,
    \\
    g_{\ast,\xi}(\tilde{x}_{-0}^{(1)})
    &:=
    \pi^\ast \left((1,\tilde{x}_{-0}^{(1)\top},\xi^\top)^\top\right)
    .
\end{align*}
Both functions are continuous on $\mathbb{R}^{r}$.
We claim that $g_{\dagger,\xi}(\tilde{x}_{-0}^{(1)}) = g_{\ast,\xi}(\tilde{x}_{-0}^{(1)})$ holds for all $\xi \in \mathcal{S}$ and all $\tilde{x}_{-0}^{(1)} \in \mathcal{U}_{\xi}$.
Suppose not, then there exist $\xi \in \mathcal{S}$ and $w \in \mathcal{U}_{\xi}$ such that $g_{\dagger,\xi}(w) \neq g_{\ast,\xi}(w)$.
By continuity, there exists a nonempty open neighborhood $V \subset \mathcal{U}_{\xi}$ of $w$ such that the two functions remain different on $V$.
Because $\xi \in \operatorname{supp}(\tilde{x}_{-0}^{(2)})$ and $w$ belongs to the conditional support of $\tilde{x}_{-0}^{(1)}$ given $\tilde{x}_{-0}^{(2)}=\xi$, we have $\mathbb{P}(\tilde{x}_{-0}^{(2)}=\xi,~\tilde{x}_{-0}^{(1)}\in V) > 0$.
This contradicts \eqref{eq:equality_pi}.
Therefore, we obtain
\begin{align*}
    \pi_{\beta^{\dagger},\gamma^{\dagger}}(x)
    =
    \pi^\ast(x)
\end{align*}
for all $x=(1,\tilde{x}_{-0}^{(1)\top},\xi^\top)^\top$ with $\xi \in \mathcal{S}$ and $\tilde{x}_{-0}^{(1)}\in\mathcal{U}_{\xi}$.

From the discussion above, we have
\begin{align*}
    p(y \mid x, \beta^{\dagger}, \gamma^{\dagger})
    =
    p(y \mid x, \beta^\ast, \gamma^\ast)
    ,
\end{align*}
for all $y\in\{0,1\}$ and all $x=(1,\tilde{x}_{-0}^{(1)\top},\xi^\top)^\top$ with $\xi \in \mathcal{S}$ and $\tilde{x}_{-0}^{(1)}\in\mathcal{U}_{\xi}$.
By Theorem~\ref{thm:two_mode}, it follows that
\begin{align*}
    (\beta^{\dagger},~ \gamma^{\dagger})
    \sim
    (\beta^\ast,~ \gamma^\ast)
    .
\end{align*}
Therefore, the expected log-likelihood $\mathcal{L}(\beta,~ \gamma)$ is uniquely maximized on $(\mathbb{R}^d\times\mathbb{R}^d)/{\sim}$ at the class $[\beta^\ast,~ \gamma^\ast]$.
\end{proof}

\section{Details of Numerical Settings of Bimodality Confirmation}
\label{apx:numerical_details}

\subsection{Posterior Distribution and Sampling Algorithm}

We consider the posterior distribution given by
\begin{equation}
    \label{eq:tempered_posterior}
    \pi(\beta,\gamma,h\mid y,x,z)
    \propto
    \left\{\prod_{i=1}^n p(y_i,h_i\mid x_i,z_i,\beta,\gamma)\right\}
    p_{\mathrm{prior}}(\beta,\gamma)
    ,
\end{equation}
where $p(y_i,h_i\mid x_i,z_i,\beta,\gamma)$ is defined as
\begin{align*}
    p(y_i,h_i\mid x_i,z_i,\beta,\gamma)
    &=
    \left\{F(\gamma^\top z_i)\right\}^{h_i}
    \left\{1-F(\gamma^\top z_i)\right\}^{1-h_i}
    \left\{F(\beta^\top x_i)\right\}^{y_i}
    \left\{1-F(\beta^\top x_i)\right\}^{h_i-y_i}
    ,
\end{align*}
and $p_{\mathrm{prior}}(\beta,\gamma)$ denotes the prior distribution.
The complete-data likelihood can be separated into a logistic regression for $h$:
\begin{align*}
    h_i\mid z_i
    \sim
    \operatorname{Bernoulli}(F(\gamma^\top z_i))
    .
\end{align*}
and a logistic regression for $y^\ast$:
\begin{align*}
    y_i^\ast \mid x_i
    \sim
    \operatorname{Bernoulli}(F(\beta^\top x_i))
    ,
\end{align*}
conditioned on $h_i=1$.
Thus $h_i=0$ generates a structural zero, whereas $h_i=1$ allows the ordinary logistic regression.
The observed outcome $y_i$ is then given by $y_i = h_i y_i^\ast$.
We use this structure to construct a MCMC algorithm.
We further employ the P\'olya-Gamma augmentation \citep{polson2013bayesian} and combine the resulting Gibbs sampling algorithm with replica exchange so that the sampler can move between the multiple modes.
The detailed algorithms are provided in Appendix~\ref{apx:sampling_algorithm}.

\subsection{Data Generation and Sampling Setup}

Numerical experiments were conducted under the shared-design setting.
We considered three designs that differ only in the distribution of the covariates.
In each scenario, we generated $n=2,000$ observations with $d=p=5$ covariates including an intercept.
The true coefficient vectors were fixed at
\begin{align*}
    \beta^{*} &= (0.5,~1.0,~ 0.5,~ 0.5,~ 0.25)^\top
    ,
    \\
    \gamma^{*} &= (1.7,~-1.0,~ -1.0,~ 0.5,~ 0.5)^\top
    .
\end{align*}
In Scenario~1, all non-intercept covariates were drawn from independent standard normal distributions.
In Scenario~2, all non-intercept covariates were drawn from independent $\operatorname{Bernoulli}(0.5)$.
In Scenario~3, the first two non-intercept covariates were drawn from independent standard normal distributions, and the remaining ones were from independent $\operatorname{Bernoulli}(0.5)$.
Given the covariates, we generated
\begin{align*}
    h_i
    &\sim
    \operatorname{Bernoulli}\left(F(\gamma^{* \top} x_i)\right)
    ,
    \\
    y_i^\ast
    &\sim
    \operatorname{Bernoulli}\left(F(\beta^{* \top} x_i)\right)
    ,
\end{align*}
independently, and set $y_i = h_i y_i^\ast$.

We placed weakly informative Gaussian priors on both coefficient vectors:
$\beta\sim\mathcal{N}(0,100I_p)$ and $\gamma\sim\mathcal{N}(0,100I_q)$.
Posterior sampling was performed via a Gibbs sampling algorithm with replica exchange using $20$ replicas.
The temperature schedule followed a geometric progression $T_m = r^m$ with $r = 1.05$, and replica exchange was attempted every $50$ iterations. 
The total number of MCMC iterations was set to $53{,}000$, with the first $3{,}000$ discarded as burn-in.
To facilitate efficient sampling, Gibbs sampling based on P\'olya-Gamma data augmentation was used for both the ordinary logistic regression component and the structural-zero component.

\subsection{Sampling Results}

To explore the structure of the posterior distribution, we applied $k$-means++ clustering algorithm with $k=2$ to the posterior samples \cite{arthur2007kmeans}.
The samples, consisting of both $\beta$ and $\gamma$ parameters concatenated as $10$-dimensional vectors, were projected onto the first two principal components for visualization using PCA.
See Figure~\ref{fig:posterior_pca} for the plots.

Table~\ref{tab:posterior_means} summarizes the posterior means within each cluster along with cluster sizes and proportions.
In Scenarios~1 and 3, the means of the two clusters exhibited an approximately symmetric structure, reflecting the exchange symmetry of $(\beta, \gamma)$ and $(\gamma, \beta)$.
In Scenario~2, the posterior means from the smaller cluster were numerically large, especially in the structural-zero component, consistent with the failure of the binary design to satisfy the condition (C2), which results in a lack of guaranteed identifiability.

The trace plots and the histograms for the posterior distributions are provided in Appendix~\ref{apx:trace_hist_posterior}.

\begin{table}[htbp]
    \centering
    \caption{Posterior means of each parameter for the clusters in Scenarios 1--3.}
    \label{tab:posterior_means}
    \begin{tabular}{cccccccc}
        \toprule
        \multirow{2}{*}{Parameter}& \multirow{2}{*}{True} & \multicolumn{2}{c}{Scenario 1} & \multicolumn{2}{c}{Scenario 2} & \multicolumn{2}{c}{Scenario 3} \\
        \cmidrule(lr){3-4} \cmidrule(lr){5-6} \cmidrule(lr){7-8}
        & {} & Cluster 1 & Cluster 2 & Cluster 1 & Cluster 2 & Cluster 1 & Cluster 2 \\
        \midrule
        $\beta_0$ & 0.5 & 0.325 & 2.886 & 9.700 & 17.168 & 0.202 & 2.713 \\
        $\beta_1$ & 1.0 & 0.871 & -1.349 & 5.136 & 0.847 & 0.760 & -1.453 \\
        $\beta_2$ & 0.5 & 0.355 & -1.572 & 3.180 & 5.330 & 0.234 & -1.283 \\
        $\beta_3$ & 0.5 & 0.515 & 0.544 & 4.194 & -0.704 & 0.532 & 0.737 \\
        $\beta_4$ & 0.25 & 0.282 & 0.668 & 8.992 & -0.187 & 0.574 & 0.158 \\
        $\gamma_0$ & 1.7 & 2.413 & 0.200 & 0.280 & 0.265 & 2.794 & 0.216 \\
        $\gamma_1$ & -1.0 & -1.154 & 0.803 & -0.214 & -0.204 & -1.491 & 0.773 \\
        $\gamma_2$ & -1.0 & -1.413 & 0.284 & -0.622 & -0.622 & -1.307 & 0.249 \\
        $\gamma_3$ & 0.5 & 0.524 & 0.516 & 0.570 & 0.575 & 0.749 & 0.532 \\
        $\gamma_4$ & 0.5 & 0.625 & 0.292 & 0.470 & 0.485 & 0.154 & 0.575 \\
        \midrule
        \multicolumn{2}{c}{Cluster size} & 13,250 & 36,750 & 8,395 & 41,605 & 25,450 & 24,550 \\
        \multicolumn{2}{c}{Proportion} & 0.265 & 0.735 & 0.168 & 0.832 & 0.509 & 0.491 \\
        \bottomrule
    \end{tabular}
\end{table}

\section{Details of Sampling Algorithm}
\label{apx:sampling_algorithm}

For each replica $m=1,\ldots,M$, let $T_m>0$ denote the temperature.
The tempered posterior distribution is defined as
\begin{align*}
    \pi_{T_m}(\beta,\gamma,h\mid y,x,z)
    \propto
    \left\{\prod_{i=1}^n p(y_i,h_i\mid x_i,z_i,\beta,\gamma)\right\}^{1/T_m}
    p_{\mathrm{prior}}(\beta,\gamma)
    ,
\end{align*}
where
\begin{align*}
    p(y_i,h_i\mid x_i,z_i,\beta,\gamma)
    =
    \left\{F(z_i^\top\gamma)\right\}^{h_i}
    \left\{1-F(z_i^\top\gamma)\right\}^{1-h_i}
    \left\{F(x_i^\top\beta)\right\}^{y_i}
    \left\{1-F(x_i^\top\beta)\right\}^{h_i-y_i}
    ,
\end{align*}
for $y_i\leq h_i$.
We assume Gaussian priors: $\beta\sim\mathcal{N}(b_0,B_0)$, and $\gamma\sim\mathcal{N}(g_0,G_0)$.

\subsection{P\'olya-Gamma Gibbs Sampling Step}

For each replica $m$, we perform Gibbs sampling using the following steps.

\paragraph{Step 1. Updating $h_i$.}

If $y_i = 1$, then $h_i$ is deterministically set to $1$.  
For observations with $y_i = 0$, we sample $h_i$ from
\begin{align*}
    h_i\mid \beta,\gamma,y_i=0
    \sim
    \operatorname{Bernoulli} (\mu_i^{(T_m)})
    ,
\end{align*}
where
\begin{align*}
    \mu_i^{(T_m)}
    &=
    \frac{\left[F(z_i^\top\gamma)\{1-F(x_i^\top\beta)\}\right]^{1/T_m}}
       {\left[F(z_i^\top\gamma)\{1-F(x_i^\top\beta)\}\right]^{1/T_m}
        +\left[1-F(z_i^\top\gamma)\right]^{1/T_m}}
    \\
    &=
    F\!\left(\frac{z_i^\top\gamma-\log\{1+\exp(x_i^\top\beta)\}}{T_m}\right)
    .
\end{align*}

\paragraph{Step 2. Updating $\gamma$.}

The tempered likelihood for $\gamma$ is
\begin{align*}
    \prod_{i=1}^n
    \frac{\exp \left\{(h_i/T_m) z_i^\top\gamma\right\}}
     {\{1+\exp(z_i^\top\gamma)\}^{1/T_m}}
     .
\end{align*}
Introduce P\'olya-Gamma auxiliary variables \citep{polson2013bayesian}:
\begin{align*}
    w_i \mid \gamma, h \sim \operatorname{PG}(T_m^{-1}, z_i^\top\gamma)
    ,
    \quad
    i=1,\ldots,n
    ,
\end{align*}
where $\operatorname{PG}(\cdot,\cdot)$ denotes the P\'olya-Gamma distribution.
Specifically, if $w \sim \operatorname{PG}(b, c)$ with $b>0$ and $c \in \mathbb{R}$, using random numbers independently following Gamma distribution $v_k \sim \operatorname{Ga}(b, 1)$, $w$ can be obtained as
\begin{align*}
    w
    =
    \cfrac{1}{2 \pi^2} \sum_{k=1}^{\infty} \cfrac{v_k}{\left(k-\frac{1}{2}\right)^2+\frac{c^2}{4 \pi^2}}.
\end{align*}
Let $Z=(z_1^\top,\ldots,z_n^\top)^\top$, $W=\operatorname{diag}(w_1,\ldots,w_n)$, and
\begin{align*}
    \kappa^{(\gamma)}
    =
    \left(\frac{h_1-1/2}{T_m},~\ldots,~\frac{h_n-1/2}{T_m}\right)^\top
    .
\end{align*}
Then the full conditional distribution of $\gamma$ becomes Gaussian:
\begin{align*}
    \gamma \mid h,w,z
    &\sim \mathcal{N}\!\left(g_1^{(T_m)}, G_1^{(T_m)}\right)
    ,
    \\
    G_1^{(T_m)}
    &=
    \left(Z^\top W Z + G_0^{-1}\right)^{-1}
    ,
    \\
    g_1^{(T_m)}
    &=
    G_1^{(T_m)}
    \left(Z^\top \kappa^{(\gamma)} + G_0^{-1}g_0\right)
    .
\end{align*}

\paragraph{Step 3. Updating $\beta$.}
Let $I_h=\{i:h_i=1\}$, let $X_h$ be the submatrix of $X = (x_1, \ldots, x_n)^{\top} \in \mathbb{R}^{n \times p}$ indexed by $I_h$, and let $y_h$ be the corresponding subvector of $y$.
The tempered likelihood of $\beta$ is:
\begin{align*}
    \prod_{i\in I_h}
    \frac{\exp\left\{(y_i/T_m)x_i^\top\beta\right\}}
    {\{1+\exp(x_i^\top\beta)\}^{1/T_m}}
    .
\end{align*}
Introduce independent P\'olya-Gamma variables
\begin{align*}
    \omega_i \mid \beta,h,y \sim \operatorname{PG}\!\left(T_m^{-1}, x_i^\top\beta\right)
    ,
    \quad
    i\in I_h
    ,
\end{align*}
and, define $\Omega_h=\operatorname{diag}(\omega_i:i\in I_h)$ and
\begin{align*}
    \kappa^{(\beta)}
    =
    \left(\frac{y_i-1/2}{T_m}: i\in I_h\right)^\top
    .
\end{align*}
Then the full conditional distribution of $\beta$ becomes again Gaussian:
\begin{align*}
    \beta \mid y_h,X_h,\Omega_h,h
    &\sim \mathcal{N}\!\left(b_1^{(T_m)}, B_1^{(T_m)}\right)
    ,\\
    B_1^{(T_m)}
    &=
    \left(X_h^\top \Omega_h X_h + B_0^{-1}\right)^{-1}
    ,\\
    b_1^{(T_m)}
    &=
    B_1^{(T_m)}
    \left(X_h^\top \kappa^{(\beta)} + B_0^{-1}b_0\right)
    .
\end{align*}

\subsection{Replica Exchange Step}

After a fixed number of Gibbs iterations, we attempt to swap the states of two neighboring replicas with adjacent temperatures $T_m$ and $T_{m+1}$.
Let the current states be denoted by
\begin{align*}
    \theta^{(m)}
    &=
    (\beta^{(m)},\gamma^{(m)},h^{(m)})
    ,
    \\
    \theta^{(m+1)}
    &=
    (\beta^{(m+1)},\gamma^{(m+1)},h^{(m+1)})
    .
\end{align*}
The acceptance probability for the swap proposal is given by
\begin{equation}
  \label{eq:exchange_acceptance}
  \alpha
  =
  \min\left\{
    1,~
    \frac{\pi_{T_m}(\theta^{(m+1)}\mid y,x,z)~
          \pi_{T_{m+1}}(\theta^{(m)}\mid y,x,z)}
         {\pi_{T_m}(\theta^{(m)}\mid y,x,z)~
          \pi_{T_{m+1}}(\theta^{(m+1)}\mid y,x,z)}
  \right\}
  .
\end{equation}
In practice, we compute the log acceptance ratio as
\begin{align*}
    \log \alpha
    =
    \min\left\{
        0,~
        \left(\frac{1}{T_m}-\frac{1}{T_{m+1}}\right)
            \left[
                \log \tilde{L} \left(\theta^{(m+1)}\right) - \log \tilde{L} \left(\theta^{(m)}\right)
            \right]
    \right\}
    ,
\end{align*}
where
\begin{align*}
    \tilde{L}(\theta)
    :=
    \prod_{i=1}^n p(y_i,h_i\mid x_i,z_i,\beta,\gamma)
    .
\end{align*}

\subsection{Overall Algorithm}

The full algorithm alternates between the Gibbs sampling step and the replica exchange step as follows:
\begin{enumerate}
  \item For each replica $m$, perform Gibbs sampling to update $h$, $\gamma$, and $\beta$ from the full conditionals above under its corresponding temperature $T_m$.
  \item Every fixed number of iterations, attempt to swap the states of neighboring replicas $m$ and $m+1$ according to the acceptance probability given in \eqref{eq:exchange_acceptance}.
  \item Retain samples from the chain with $T_1=1$ as draws from the target posterior distribution.
\end{enumerate}

\section{Trace Plots and Posterior Histograms}
\label{apx:trace_hist_posterior}

We provide the trace plots for Scenario 1 as Figure~\ref{fig:trace_s1}, Scenario 2 as Figure~\ref{fig:trace_s2}, and Scenario 3 as Figure~\ref{fig:trace_s3}.
Furthermore, we provide the posterior histograms for Scenario 1 as Figure~\ref{fig:hist_s1}, Scenario 2 as Figure~\ref{fig:hist_s2}, and Scenario 3 as Figure~\ref{fig:hist_s3}.

\begin{figure}[htbp]
    \centering
    \begin{subfigure}[t]{0.40\textwidth}
        \centering
        \includegraphics[width=\linewidth]{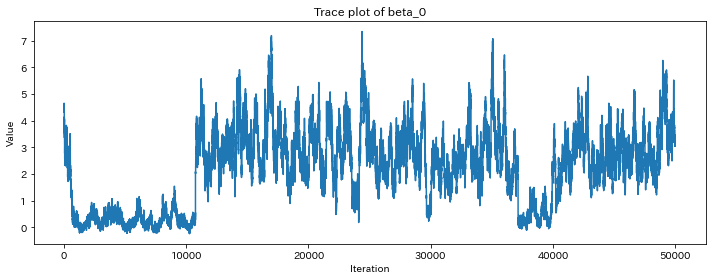}
        \caption{$\beta_0$}
    \end{subfigure}
    \hfill
    \begin{subfigure}[t]{0.40\textwidth}
        \centering
        \includegraphics[width=\linewidth]{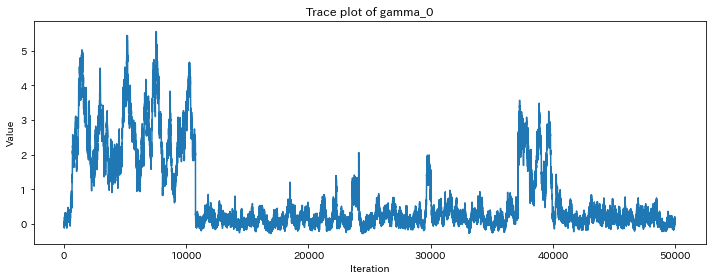}
        \caption{$\gamma_0$}
    \end{subfigure}
    \\
    \begin{subfigure}[t]{0.40\textwidth}
        \centering
        \includegraphics[width=\linewidth]{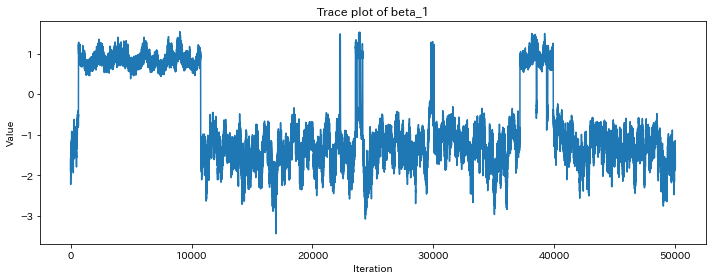}
        \caption{$\beta_1$}
    \end{subfigure}
    \hfill
    \begin{subfigure}[t]{0.40\textwidth}
        \centering
        \includegraphics[width=\linewidth]{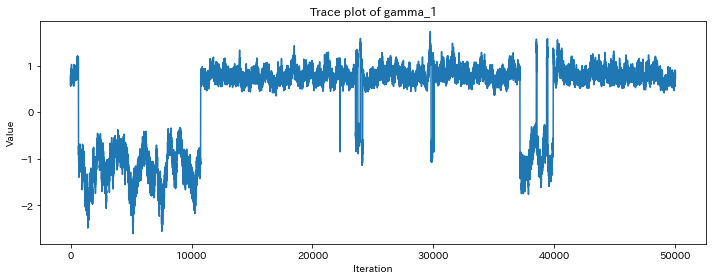}
        \caption{$\gamma_1$}
    \end{subfigure}
    \\
    \begin{subfigure}[t]{0.40\textwidth}
        \centering
        \includegraphics[width=\linewidth]{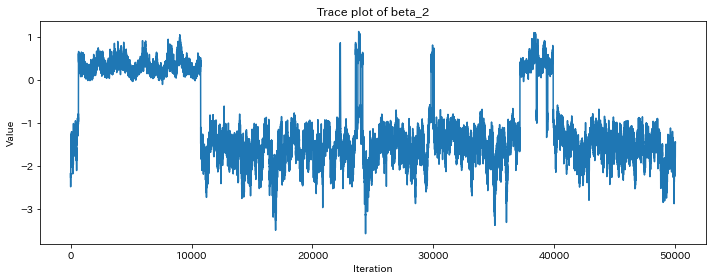}
        \caption{$\beta_2$}
    \end{subfigure}
    \hfill
    \begin{subfigure}[t]{0.40\textwidth}
        \centering
        \includegraphics[width=\linewidth]{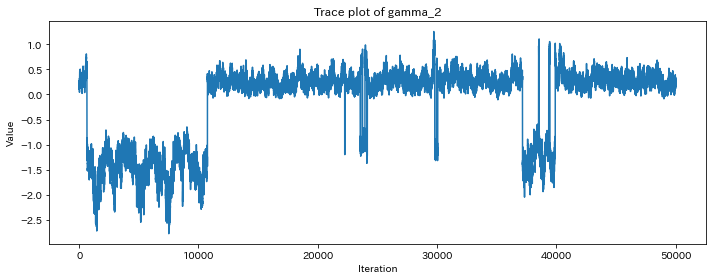}
        \caption{$\gamma_2$}
    \end{subfigure}
    \\
    \begin{subfigure}[t]{0.40\textwidth}
        \centering
        \includegraphics[width=\linewidth]{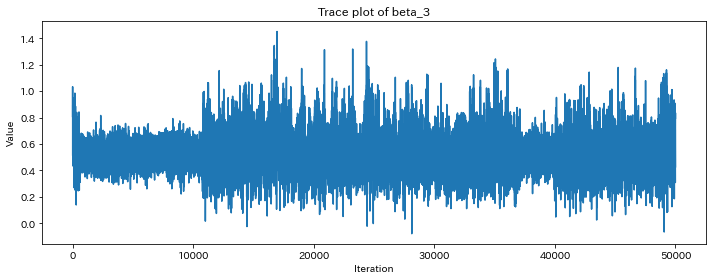}
        \caption{$\beta_3$}
    \end{subfigure}
    \hfill
    \begin{subfigure}[t]{0.40\textwidth}
        \centering
        \includegraphics[width=\linewidth]{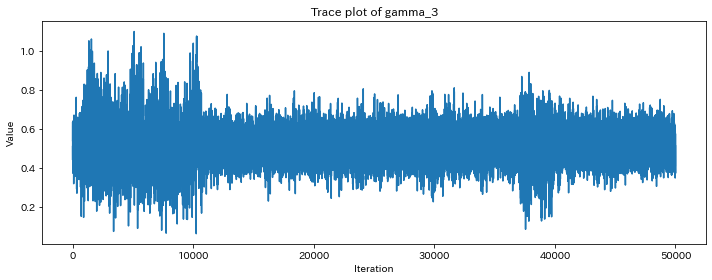}
        \caption{$\gamma_3$}
    \end{subfigure}
    \\
    \begin{subfigure}[t]{0.40\textwidth}
        \centering
        \includegraphics[width=\linewidth]{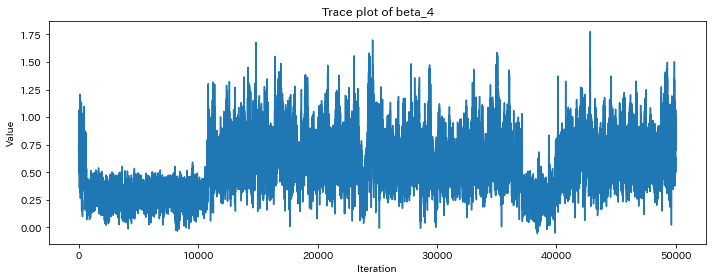}
        \caption{$\beta_4$}
    \end{subfigure}
    \hfill
    \begin{subfigure}[t]{0.40\textwidth}
        \centering
        \includegraphics[width=\linewidth]{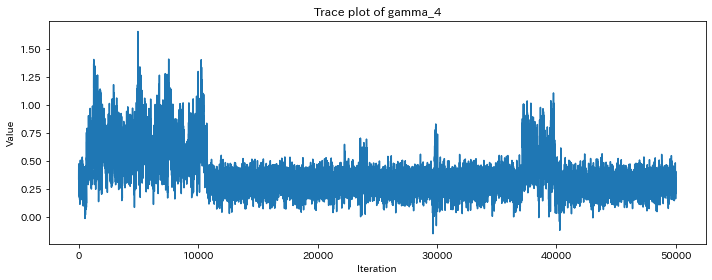}
        \caption{$\gamma_4$}
    \end{subfigure}
    \caption{Trace plots of each parameter (Scenario 1).}
    \label{fig:trace_s1}
\end{figure}

\begin{figure}[htbp]
    \centering
    \begin{subfigure}[t]{0.35\textwidth}
        \centering
        \includegraphics[width=\linewidth]{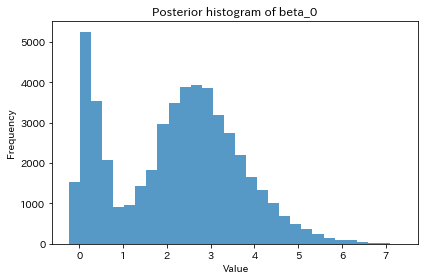}
        \caption{$\beta_0$}
    \end{subfigure}
    \hfill
    \begin{subfigure}[t]{0.35\textwidth}
        \centering
        \includegraphics[width=\linewidth]{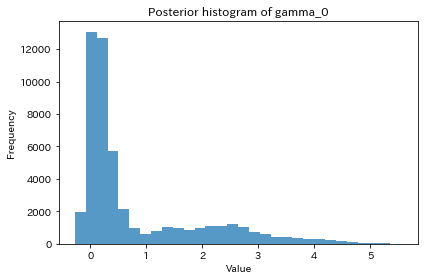}
        \caption{$\gamma_0$}
    \end{subfigure}
    \\
    \begin{subfigure}[t]{0.35\textwidth}
        \centering
        \includegraphics[width=\linewidth]{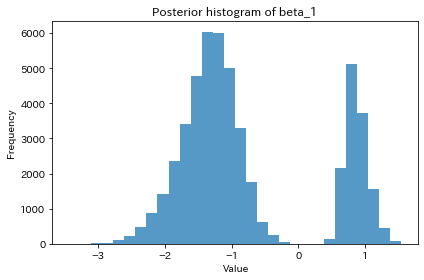}
        \caption{$\beta_1$}
    \end{subfigure}
    \hfill
    \begin{subfigure}[t]{0.35\textwidth}
        \centering
        \includegraphics[width=\linewidth]{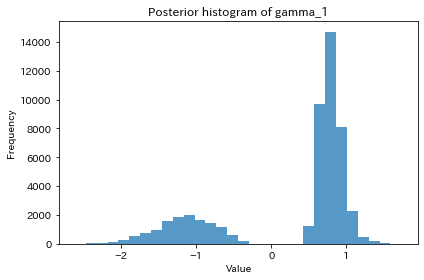}
        \caption{$\gamma_1$}
    \end{subfigure}
    \\
    \begin{subfigure}[t]{0.35\textwidth}
        \centering
        \includegraphics[width=\linewidth]{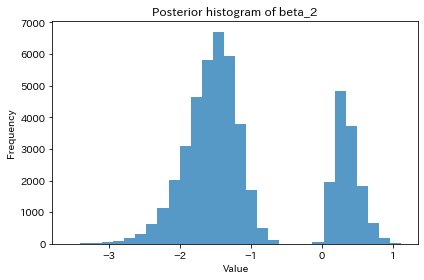}
        \caption{$\beta_2$}
    \end{subfigure}
    \hfill
    \begin{subfigure}[t]{0.35\textwidth}
        \centering
        \includegraphics[width=\linewidth]{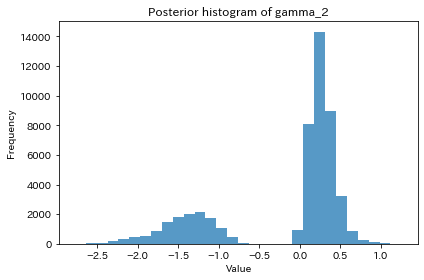}
        \caption{$\gamma_2$}
    \end{subfigure}
    \\
    \begin{subfigure}[t]{0.35\textwidth}
        \centering
        \includegraphics[width=\linewidth]{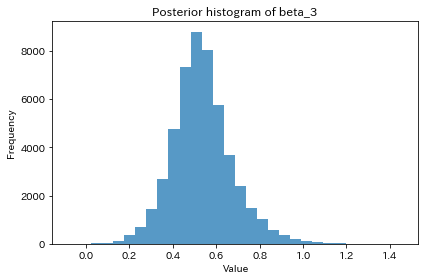}
        \caption{$\beta_3$}
    \end{subfigure}
    \hfill
    \begin{subfigure}[t]{0.35\textwidth}
        \centering
        \includegraphics[width=\linewidth]{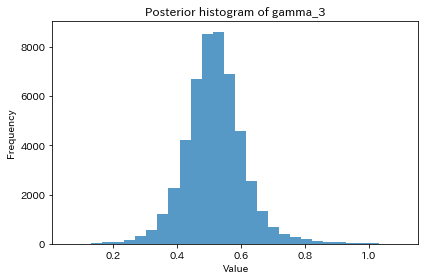}
        \caption{$\gamma_3$}
    \end{subfigure}
    \\
    \begin{subfigure}[t]{0.35\textwidth}
        \centering
        \includegraphics[width=\linewidth]{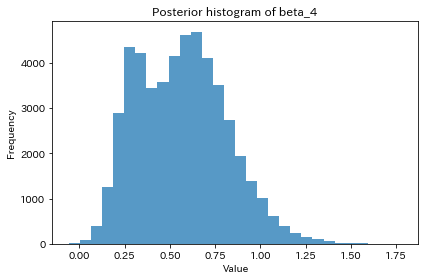}
        \caption{$\beta_4$}
    \end{subfigure}
    \hfill
    \begin{subfigure}[t]{0.35\textwidth}
        \centering
        \includegraphics[width=\linewidth]{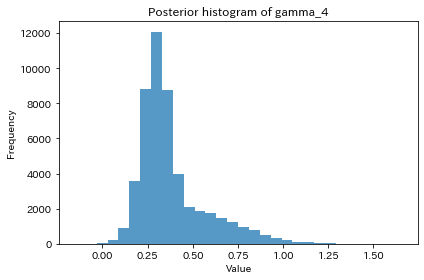}
        \caption{$\gamma_4$}
    \end{subfigure}
    \caption{The histograms of the posterior distributions for each parameter (Scenario 1).}
    \label{fig:hist_s1}
\end{figure}

\begin{figure}[htbp]
    \centering
    \begin{subfigure}[t]{0.40\textwidth}
        \centering
        \includegraphics[width=\linewidth]{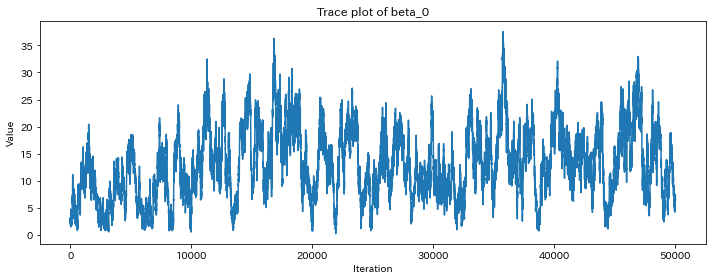}
        \caption{$\beta_0$}
    \end{subfigure}
    \hfill
    \begin{subfigure}[t]{0.40\textwidth}
        \centering
        \includegraphics[width=\linewidth]{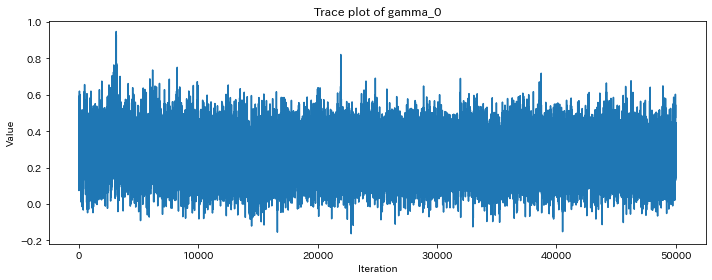}
        \caption{$\gamma_0$}
    \end{subfigure}
    \\
    \begin{subfigure}[t]{0.40\textwidth}
        \centering
        \includegraphics[width=\linewidth]{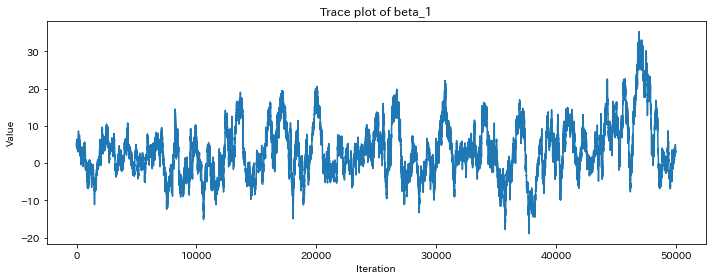}
        \caption{$\beta_1$}
    \end{subfigure}
    \hfill
    \begin{subfigure}[t]{0.40\textwidth}
        \centering
        \includegraphics[width=\linewidth]{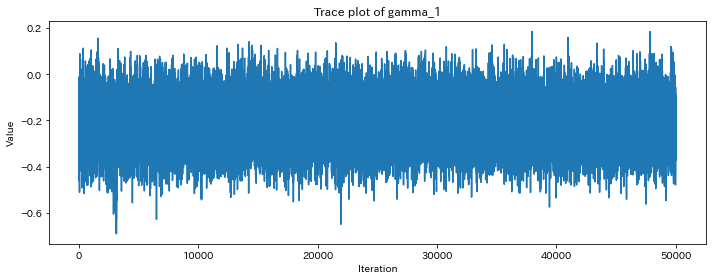}
        \caption{$\gamma_1$}
    \end{subfigure}
    \\
    \begin{subfigure}[t]{0.40\textwidth}
        \centering
        \includegraphics[width=\linewidth]{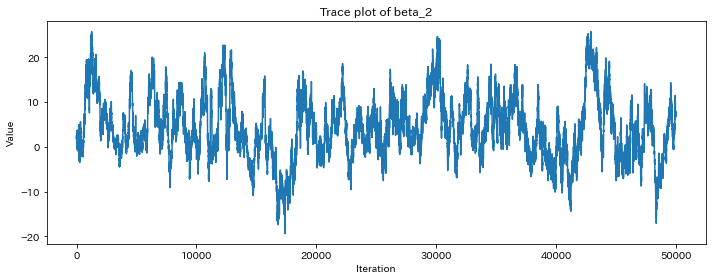}
        \caption{$\beta_2$}
    \end{subfigure}
    \hfill
    \begin{subfigure}[t]{0.40\textwidth}
        \centering
        \includegraphics[width=\linewidth]{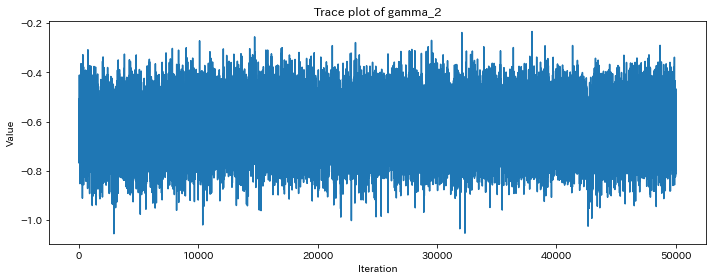}
        \caption{$\gamma_2$}
    \end{subfigure}
    \\
    \begin{subfigure}[t]{0.40\textwidth}
        \centering
        \includegraphics[width=\linewidth]{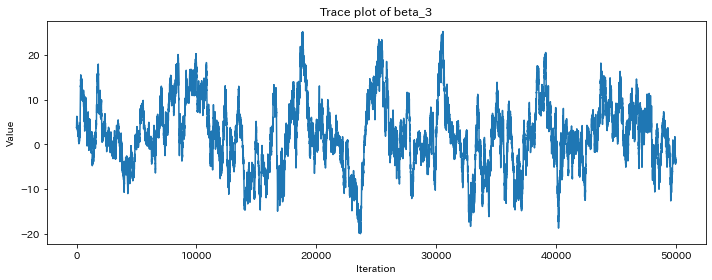}
        \caption{$\beta_3$}
    \end{subfigure}
    \hfill
    \begin{subfigure}[t]{0.40\textwidth}
        \centering
        \includegraphics[width=\linewidth]{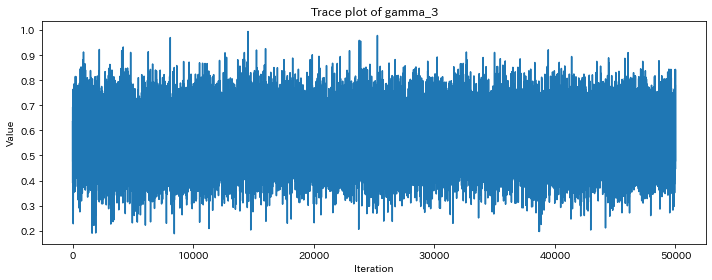}
        \caption{$\gamma_3$}
    \end{subfigure}
    \\
    \begin{subfigure}[t]{0.40\textwidth}
        \centering
        \includegraphics[width=\linewidth]{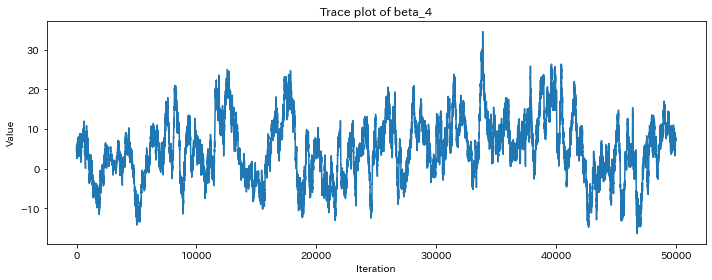}
        \caption{$\beta_4$}
    \end{subfigure}
    \hfill
    \begin{subfigure}[t]{0.40\textwidth}
        \centering
        \includegraphics[width=\linewidth]{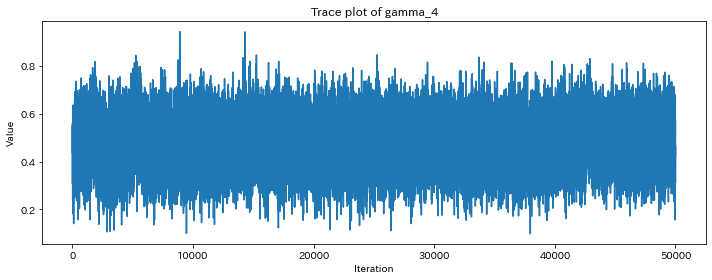}
        \caption{$\gamma_4$}
    \end{subfigure}
    \caption{Trace plots of each parameter (Scenario 2).}
    \label{fig:trace_s2}
\end{figure}

\begin{figure}[htbp]
    \centering
    \begin{subfigure}[t]{0.35\textwidth}
        \centering
        \includegraphics[width=\linewidth]{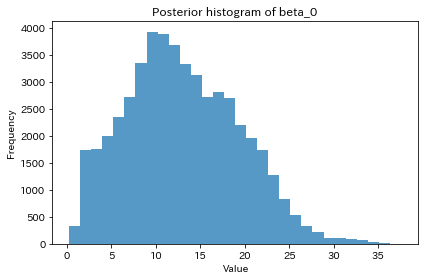}
        \caption{$\beta_0$}
    \end{subfigure}
    \hfill
    \begin{subfigure}[t]{0.35\textwidth}
        \centering
        \includegraphics[width=\linewidth]{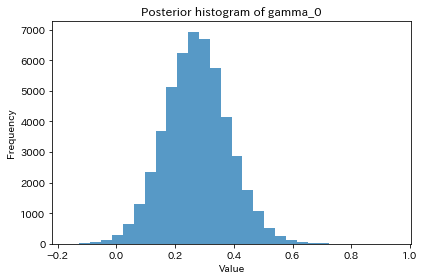}
        \caption{$\gamma_0$}
    \end{subfigure}
    \\
    \begin{subfigure}[t]{0.35\textwidth}
        \centering
        \includegraphics[width=\linewidth]{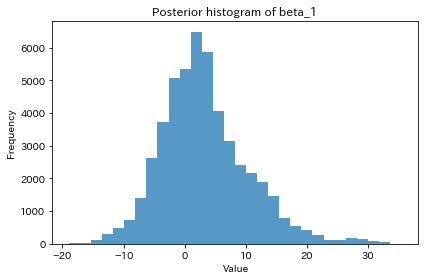}
        \caption{$\beta_1$}
    \end{subfigure}
    \hfill
    \begin{subfigure}[t]{0.35\textwidth}
        \centering
        \includegraphics[width=\linewidth]{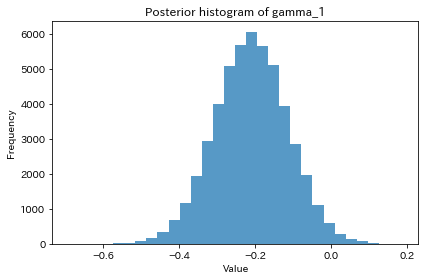}
        \caption{$\gamma_1$}
    \end{subfigure}
    \\
    \begin{subfigure}[t]{0.35\textwidth}
        \centering
        \includegraphics[width=\linewidth]{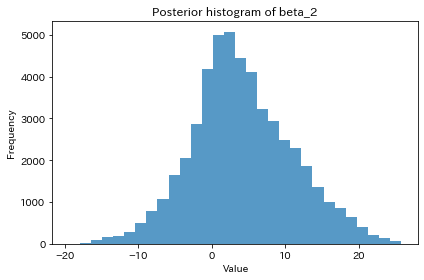}
        \caption{$\beta_2$}
    \end{subfigure}
    \hfill
    \begin{subfigure}[t]{0.35\textwidth}
        \centering
        \includegraphics[width=\linewidth]{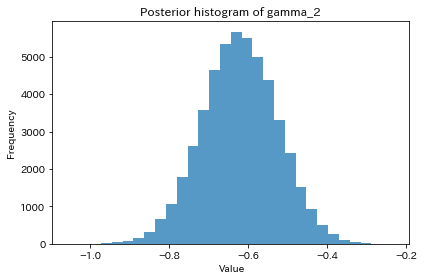}
        \caption{$\gamma_2$}
    \end{subfigure}
    \\
    \begin{subfigure}[t]{0.35\textwidth}
        \centering
        \includegraphics[width=\linewidth]{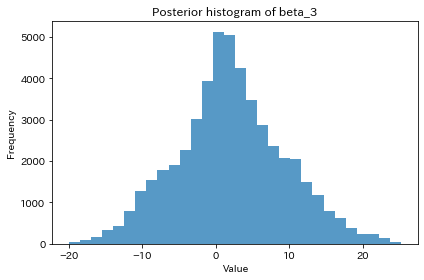}
        \caption{$\beta_3$}
    \end{subfigure}
    \hfill
    \begin{subfigure}[t]{0.35\textwidth}
        \centering
        \includegraphics[width=\linewidth]{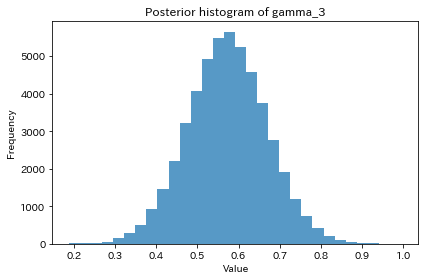}
        \caption{$\gamma_3$}
    \end{subfigure}
    \\
    \begin{subfigure}[t]{0.35\textwidth}
        \centering
        \includegraphics[width=\linewidth]{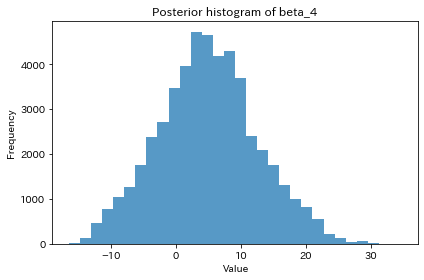}
        \caption{$\beta_4$}
    \end{subfigure}
    \hfill
    \begin{subfigure}[t]{0.35\textwidth}
        \centering
        \includegraphics[width=\linewidth]{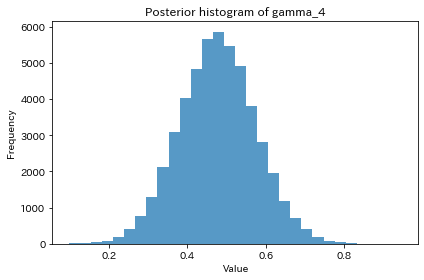}
        \caption{$\gamma_4$}
    \end{subfigure}
    \caption{The histograms of the posterior distributions for each parameter (Scenario 2).}
    \label{fig:hist_s2}
\end{figure}

\begin{figure}[htbp]
    \centering
    \begin{subfigure}[t]{0.40\textwidth}
        \centering
        \includegraphics[width=\linewidth]{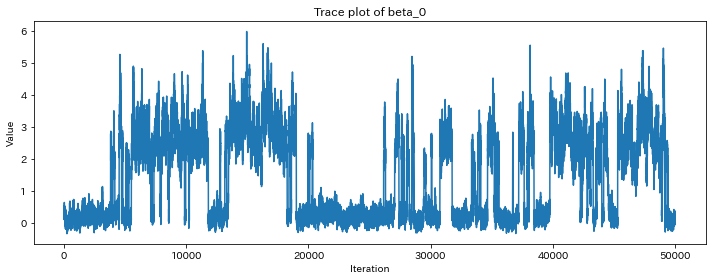}
        \caption{$\beta_0$}
    \end{subfigure}
    \hfill
    \begin{subfigure}[t]{0.40\textwidth}
        \centering
        \includegraphics[width=\linewidth]{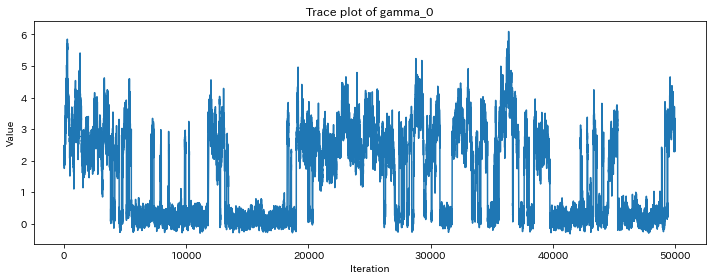}
        \caption{$\gamma_0$}
    \end{subfigure}
    \\
    \begin{subfigure}[t]{0.40\textwidth}
        \centering
        \includegraphics[width=\linewidth]{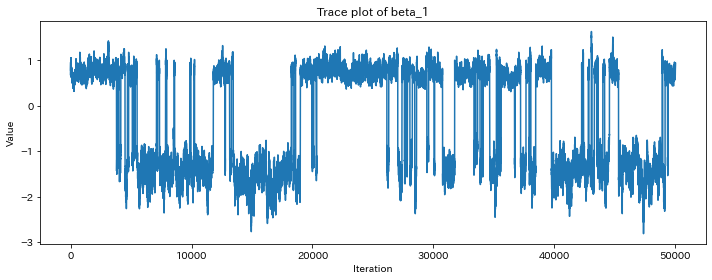}
        \caption{$\beta_1$}
    \end{subfigure}
    \hfill
    \begin{subfigure}[t]{0.40\textwidth}
        \centering
        \includegraphics[width=\linewidth]{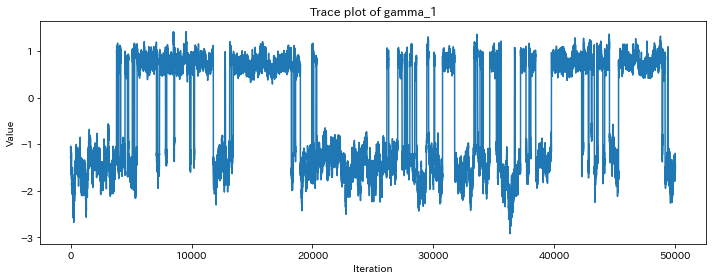}
        \caption{$\gamma_1$}
    \end{subfigure}
    \\
    \begin{subfigure}[t]{0.40\textwidth}
        \centering
        \includegraphics[width=\linewidth]{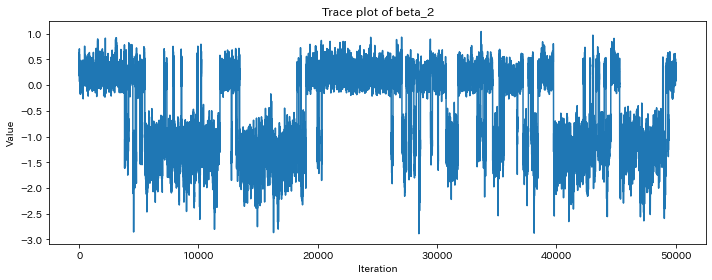}
        \caption{$\beta_2$}
    \end{subfigure}
    \hfill
    \begin{subfigure}[t]{0.40\textwidth}
        \centering
        \includegraphics[width=\linewidth]{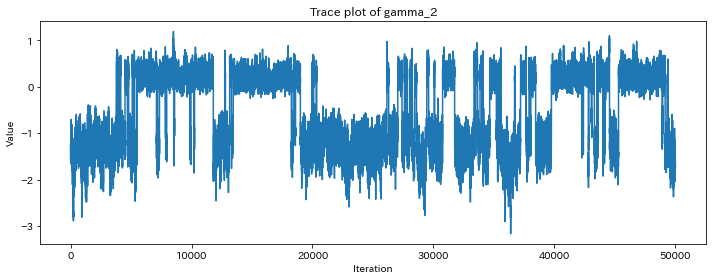}
        \caption{$\gamma_2$}
    \end{subfigure}
    \\
    \begin{subfigure}[t]{0.40\textwidth}
        \centering
        \includegraphics[width=\linewidth]{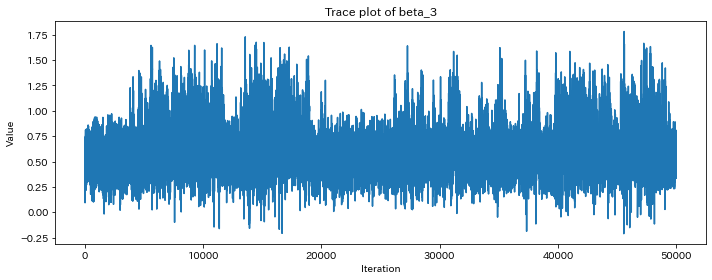}
        \caption{$\beta_3$}
    \end{subfigure}
    \hfill
    \begin{subfigure}[t]{0.40\textwidth}
        \centering
        \includegraphics[width=\linewidth]{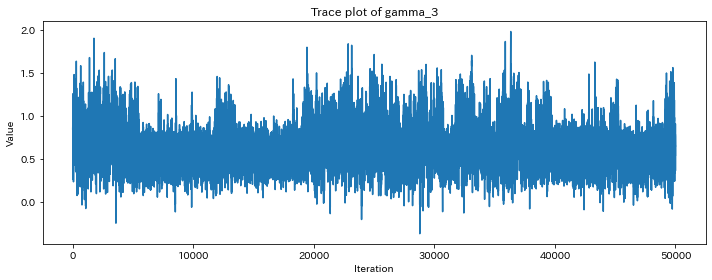}
        \caption{$\gamma_3$}
    \end{subfigure}
    \\
    \begin{subfigure}[t]{0.40\textwidth}
        \centering
        \includegraphics[width=\linewidth]{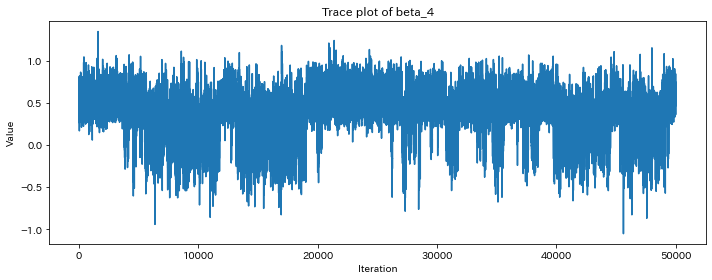}
        \caption{$\beta_4$}
    \end{subfigure}
    \hfill
    \begin{subfigure}[t]{0.40\textwidth}
        \centering
        \includegraphics[width=\linewidth]{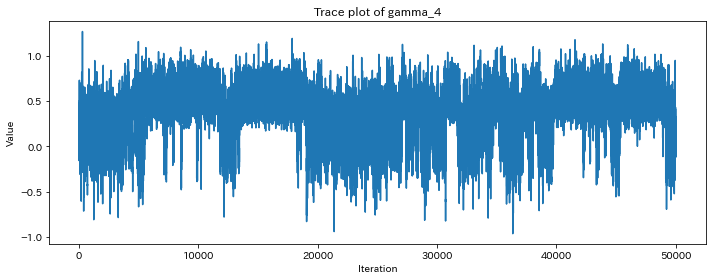}
        \caption{$\gamma_4$}
    \end{subfigure}
    \caption{Trace plots of each parameter (Scenario 3).}
    \label{fig:trace_s3}
\end{figure}

\begin{figure}[htbp]
    \centering
    \begin{subfigure}[t]{0.35\textwidth}
        \centering
        \includegraphics[width=\linewidth]{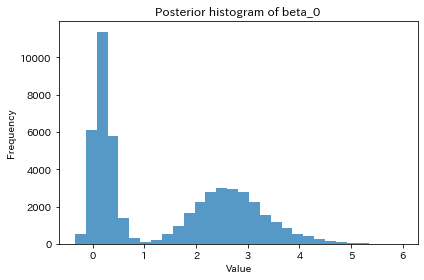}
        \caption{$\beta_0$}
    \end{subfigure}
    \hfill
    \begin{subfigure}[t]{0.35\textwidth}
        \centering
        \includegraphics[width=\linewidth]{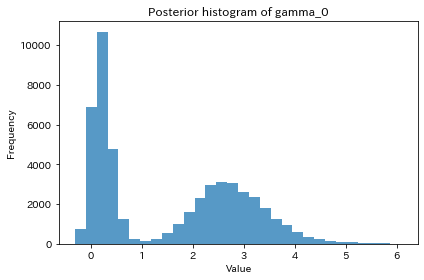}
        \caption{$\gamma_0$}
    \end{subfigure}
    \\
    \begin{subfigure}[t]{0.35\textwidth}
        \centering
        \includegraphics[width=\linewidth]{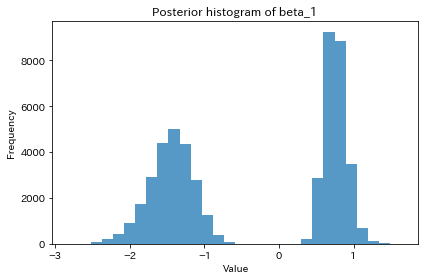}
        \caption{$\beta_1$}
    \end{subfigure}
    \hfill
    \begin{subfigure}[t]{0.35\textwidth}
        \centering
        \includegraphics[width=\linewidth]{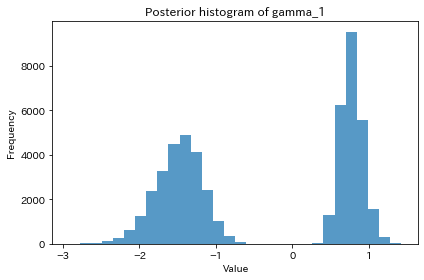}
        \caption{$\gamma_1$}
    \end{subfigure}
    \\
    \begin{subfigure}[t]{0.35\textwidth}
        \centering
        \includegraphics[width=\linewidth]{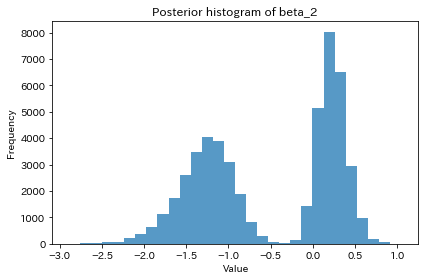}
        \caption{$\beta_2$}
    \end{subfigure}
    \hfill
    \begin{subfigure}[t]{0.35\textwidth}
        \centering
        \includegraphics[width=\linewidth]{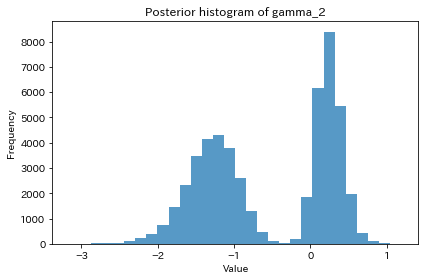}
        \caption{$\gamma_2$}
    \end{subfigure}
    \\
    \begin{subfigure}[t]{0.35\textwidth}
        \centering
        \includegraphics[width=\linewidth]{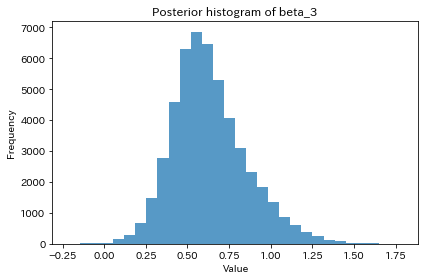}
        \caption{$\beta_3$}
    \end{subfigure}
    \hfill
    \begin{subfigure}[t]{0.35\textwidth}
        \centering
        \includegraphics[width=\linewidth]{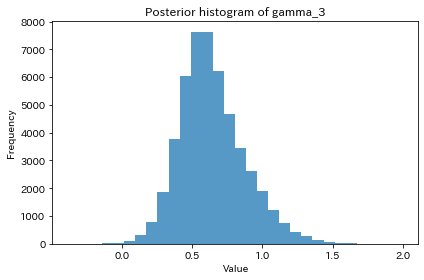}
        \caption{$\gamma_3$}
    \end{subfigure}
    \\
    \begin{subfigure}[t]{0.35\textwidth}
        \centering
        \includegraphics[width=\linewidth]{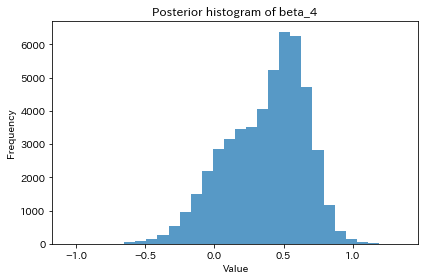}
        \caption{$\beta_4$}
    \end{subfigure}
    \hfill
    \begin{subfigure}[t]{0.35\textwidth}
        \centering
        \includegraphics[width=\linewidth]{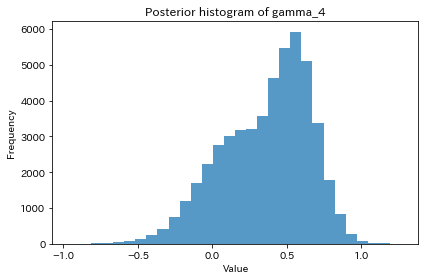}
        \caption{$\gamma_4$}
    \end{subfigure}
    \caption{The histograms of the posterior distributions for each parameter (Scenario 3).}
    \label{fig:hist_s3}
\end{figure}

\bibliography{bibliography}
\bibliographystyle{apalike}

\end{document}